\newcommand{\cmark}{\mbox{\tt ok}}%
\newcommand{\xmark}{\mbox{\tt no}}%
\newdimen\proofrulebreadth \proofrulebreadth=.05em
\newdimen\proofdotseparation \proofdotseparation=1.25ex
\newdimen\proofrulebaseline \proofrulebaseline=2ex
\let\then\relax
\def\hfi{\hskip0pt plus.0001fil}
\mathchardef\squigto="3A3B
\newif\ifinsideprooftree\insideprooftreefalse
\newif\ifonleftofproofrule\onleftofproofrulefalse
\newif\ifproofdots\proofdotsfalse
\newif\ifdoubleproof\doubleprooffalse
\let\wereinproofbit\relax
\newdimen\shortenproofleft
\newdimen\shortenproofright
\newdimen\proofbelowshift
\newbox\proofabove
\newbox\proofbelow
\newbox\proofrulename
\def\shiftproofbelow{\let\next\relax\afterassignment\setshiftproofbelow\dimen0 }
\def\shiftproofbelowneg{\def\next{\multiply\dimen0 by-1 }%
\afterassignment\setshiftproofbelow\dimen0 }
\def\setshiftproofbelow{\next\proofbelowshift=\dimen0 }
\def\setproofrulebreadth{\proofrulebreadth}
\def\prooftree{
%
\ifnum  \lastpenalty=1
\then   \unpenalty
\else   \onleftofproofrulefalse
\fi
%
\ifonleftofproofrule
\else   \ifinsideprooftree
        \then   \hskip.5em plus1fil
        \fi
\fi
%
\bgroup
\setbox\proofbelow=\hbox{}\setbox\proofrulename=\hbox{}%
\let\justifies\proofover\let\leadsto\proofoverdots\let\Justifies\proofoverdbl
\let\using\proofusing\let\[\prooftree
\ifinsideprooftree\let\]\endprooftree\fi
\proofdotsfalse\doubleprooffalse
\let\thickness\setproofrulebreadth
\let\shiftright\shiftproofbelow \let\shift\shiftproofbelow
\let\shiftleft\shiftproofbelowneg
\let\ifwasinsideprooftree\ifinsideprooftree
\insideprooftreetrue
%
\setbox\proofabove=\hbox\bgroup$\displaystyle 
\let\wereinproofbit\prooftree
%
\shortenproofleft=0pt \shortenproofright=0pt \proofbelowshift=0pt
%
\onleftofproofruletrue\penalty1
}
\def\eproofbit{
%
\ifx    \wereinproofbit\prooftree
\then   \ifcase \lastpenalty
        \then   \shortenproofright=0pt  
        \or     \unpenalty\hfil         
        \or     \unpenalty\unskip       
        \else   \shortenproofright=0pt  
        \fi
\fi
%
\global\dimen0=\shortenproofleft
\global\dimen1=\shortenproofright
\global\dimen2=\proofrulebreadth
\global\dimen3=\proofbelowshift
\global\dimen4=\proofdotseparation
\global\count255=\proofdotnumber
%
$\egroup  
%
\shortenproofleft=\dimen0
\shortenproofright=\dimen1
\proofrulebreadth=\dimen2
\proofbelowshift=\dimen3
\proofdotseparation=\dimen4
\proofdotnumber=\count255
}
\def\proofover{
\eproofbit 
\setbox\proofbelow=\hbox\bgroup 
\let\wereinproofbit\proofover
$\displaystyle
}%
\def\proofoverdbl{
\eproofbit 
\doubleprooftrue
\setbox\proofbelow=\hbox\bgroup 
\let\wereinproofbit\proofoverdbl
$\displaystyle
}%
\def\proofoverdots{
\eproofbit 
\proofdotstrue
\setbox\proofbelow=\hbox\bgroup 
\let\wereinproofbit\proofoverdots
$\displaystyle
}%
\def\proofusing{
\eproofbit 
\setbox\proofrulename=\hbox\bgroup 
\let\wereinproofbit\proofusing
\kern0.3em$
}
\def\endprooftree{
\eproofbit 
  \dimen5 =0pt
%
\dimen0=\wd\proofabove \advance\dimen0-\shortenproofleft
\advance\dimen0-\shortenproofright
%
\dimen1=.5\dimen0 \advance\dimen1-.5\wd\proofbelow
\dimen4=\dimen1
\advance\dimen1\proofbelowshift \advance\dimen4-\proofbelowshift
%
\ifdim  \dimen1<0pt
\then   \advance\shortenproofleft\dimen1
        \advance\dimen0-\dimen1
        \dimen1=0pt
        \ifdim  \shortenproofleft<0pt
        \then   \setbox\proofabove=\hbox{%
                        \kern-\shortenproofleft\unhbox\proofabove}%
                \shortenproofleft=0pt
        \fi
\fi
%
\ifdim  \dimen4<0pt
\then   \advance\shortenproofright\dimen4
        \advance\dimen0-\dimen4
        \dimen4=0pt
\fi
%
\ifdim  \shortenproofright<\wd\proofrulename
\then   \shortenproofright=\wd\proofrulename
\fi
%
\dimen2=\shortenproofleft \advance\dimen2 by\dimen1
\dimen3=\shortenproofright\advance\dimen3 by\dimen4
%
\ifproofdots
\then
        \dimen6=\shortenproofleft \advance\dimen6 .5\dimen0
        \setbox1=\vbox to\proofdotseparation{\vss\hbox{$\cdot$}\vss}%
        \setbox0=\hbox{%
                \advance\dimen6-.5\wd1
                \kern\dimen6
                $\vcenter to\proofdotnumber\proofdotseparation
                        {\leaders\box1\vfill}$%
                \unhbox\proofrulename}%
\else   \dimen6=\fontdimen22\the\textfont2 
        \dimen7=\dimen6
        \advance\dimen6by.5\proofrulebreadth
        \advance\dimen7by-.5\proofrulebreadth
        \setbox0=\hbox{%
                \kern\shortenproofleft
                \ifdoubleproof
                \then   \hbox to\dimen0{%
                        $\mathsurround0pt\mathord=\mkern-6mu%
                        \cleaders\hbox{$\mkern-2mu=\mkern-2mu$}\hfill
                        \mkern-6mu\mathord=$}%
                \else   \vrule height\dimen6 depth-\dimen7 width\dimen0
                \fi
                \unhbox\proofrulename}%
        \ht0=\dimen6 \dp0=-\dimen7
\fi
%
\let\doll\relax
\ifwasinsideprooftree
\then   \let\VBOX\vbox
\else   \ifmmode\else$\let\doll=$\fi
        \let\VBOX\vcenter
\fi
\VBOX   {\baselineskip\proofrulebaseline \lineskip.2ex
        \expandafter\lineskiplimit\ifproofdots0ex\else-0.6ex\fi
        \hbox   spread\dimen5   {\hfi\unhbox\proofabove\hfi}%
        \hbox{\box0}%
        \hbox   {\kern\dimen2 \box\proofbelow}}\doll%
%
\global\dimen2=\dimen2
\global\dimen3=\dimen3
\egroup 
\ifonleftofproofrule
\then   \shortenproofleft=\dimen2
\fi
\shortenproofright=\dimen3
%
\onleftofproofrulefalse
\ifinsideprooftree
\then   \hskip.5em plus 1fil \penalty2
\fi
}
\title{Loosening the notions of compliance and sub-behaviour in client/server systems}
\author{Franco Barbanera
\institute{Dipartimento di Matematica e Informatica\\
University of Catania}
\email{barba@dmi.unict.it}
 \and 
Ugo de'Liguoro
\institute{Dipartimento di Informatica\\
University of Torino}
\email{ugo.deliguoro@di.unito.it}
}
\newtheorem{definition}{Definition}[section]
\newtheorem{lemma}[definition]{Lemma}
\newtheorem{proposition}[definition]{Proposition}
\newtheorem{theorem}[definition]{Theorem}
\newtheorem{corollary}[definition]{Corollary}
\newtheorem{remark}[definition]{Remark}
\newtheorem{example}[definition]{Example}
\newtheorem{fact}[definition]{Fact}
\DeclareMathAlphabet{\mathpzc}{OT1}{pzc}{m}{it}
\newcommand{\qde}{\hfill $\Box$}
\newcommand{\BackRed}[1]{\stackrel{\mbox{\vspace{-8pt}\normalsize $\looparrowright$}}{\looparrowright}}
\renewcommand{\vec}[1]{\bm{#1}}
\newenvironment{proof}{{\em Proof.~}}{~~\qde \medskip}
\newcommand{\Comment}[1]{ }
\newcommand{\ByDef}{\triangleq}
\newcommand{\Rel}{\mathpzc R}
\newcommand{\RelKK}{\mathpzc K}
\newcommand{\FunH}{{\cal H}}
\newcommand{\FunJ}{{\cal J}}
\newcommand{\false}{{\sf false}}
\newcommand{\Subst}[2]{\{#1/#2\}}
\newcommand{\fv}[1]{\mbox{\sc fv}(#1)}
\newcommand{\Lts}[1]{\stackrel{#1}{\Longrightarrow}}
\newcommand{\comply}{\dashv}
\newcommand{\complyP}{\dashv^{\mbox{\tiny {\sf P}}}}
\newcommand{\complyG}{\comply^{\mbox{\tiny {\tt skp}}}}
\newcommand{\complyGco}{{\comply}^{\mbox{\tiny {\tt skp}}}_{co}}
\newcommand{\complyGcok}[1]{{\comply}^{\mbox{\tiny {\tt skp}}}_{co.#1}}
\newcommand{\csAct}{\mbox{\bf sAct}}
\newcommand{\skipAct}{{\tt skp}}
\newcommand{\orchAct}[2]{\langle\mbox{\small $#1$},\mbox{\small $#2$}\rangle}
\newcommand{\Dual}[1]{\overline{#1}}
\newcommand{\aDual}[1]{\tilde{#1}}
\newcommand{\der}{\vartriangleright}
\newcommand{\ScomplUnfoldL}{\mbox{\sc Unf-L}}
\newcommand{\ScomplUnfoldR}{\mbox{\sc Unf-R}}
\newcommand{\ScomplUnfoldLR}{\mbox{\sc Unf-L-R}}
\newcommand{\ScomplHyp}{\mbox{\sc Hyp}}
\newcommand{\ScomplAx}{\mbox{\sc Ax}}
\newcommand{\ScomplSumOplus}{\mbox{\sc $+$.$\oplus$-Cpl}}
\newcommand{\ScomplOplusOplus}{\mbox{\sc $\oplus$.$\oplus$-Cpl}}
\newcommand{\ScomplOplusSum}{\mbox{\sc $\oplus$.$+$-Cpl}}
\newcommand{\Nat}{\mathbb{N}} 
\newcommand{\Set}[1]{\{#1\}}
\newcommand{\Iff}{\Leftrightarrow}
\newcommand{\Or}{\;\vee\;}
\renewcommand{\implies}{\Rightarrow}
\newcommand{\lts}[1]{\stackrel{#1}{\longrightarrow}}
\newcommand{\LtsM}[1]{\stackrel[max]{#1}{\Longrightarrow}}
\newcommand{\rec}{{\sf rec} \, }
\newcommand{\substr}{\sqsubseteq}
\newcommand{\peer}{\ast}
\newcommand{\Names}{{\cal N}}
\newcommand{\CoNames}{\overline{\Names}}
\newcommand{\stopA}{{\bf 1}}
\newcommand{\Act}{\mbox{\bf Act}}
\newcommand{\noLts}[2]{#1 \not\Downarrow #2}
\newcommand{\preceqG}{\preceq^{\mbox{\tiny {\tt skp}}}}
\newcommand{\preceqP}{\preceq^{\mbox{\tiny {\sf P}}}}
\newcommand{\preceqBdL}{\preceq^{\mbox{\tiny {\sf BdL}}}}
\newcommand{\preceqGH}{\preceq^{\mbox{\tiny {\sf GH}}}}
\newcommand{\Rbehav}{{\sf BE}}
\newcommand{\Sbehav}{{\sf SB}}
\newcommand{\IntChoice}{\oplus}
\newcommand{\unfoldAction}{\rec}
\newcommand{\Converge}[2]{#1\!\Downarrow#2}
\newcommand{\notConverge}[2]{#1\!\not\Downarrow#2}
\newcommand{\myshrink}{}
\newcommand{\trace}{\mbox{\sf Tr}}
\newcommand{\cstrace}{\mbox{\sf sTr}}
\newcommand{\sync}[2]{#1\; \mbox{\sf synch}\, #2}
\begin{document}

\maketitle

\begin{abstract} In the context of ``session behaviors'' for client/server systems,
we propose a weakening of the compliance and sub-behaviour relations where the bias toward the client (whose ``requests'' must be satisfied) 
is pushed further with respect to the usual definitions, by admitting that ``not needed'' output actions from the server side can be {\em skipped} 
by the client.
Both compliance and sub-behaviour relations resulting from this weakening remain
decidable, though the proof of the duals-as-minima property for servers, on which the decidability of the sub-behaviour relation relies, 
requires a tighter analysis of client/server interactions.

%
%

\end{abstract}




\section{Introduction}

The formal specification of web-services behaviour is a crucial issue toward automatic discovery and
composition of software modules available through a network. Among several approaches we consider here the theory of contracts introduced in \cite{CCLP06} and developed in a series of papers e.g. \cite{CP09,CGP08}. We focus here on the scenario of client/server architecture, where services stored in a repository are queried by clients to establish a two-sided communication. 

To check the matching of client's requirements against the service offered by a server, both server and client behaviours are described via a CCS-like formalism (without $\tau$-actions nor parallel composition), whose terms are dubbed {\em contracts}. 
The basic notion studied in the theory is the {\em compliance} relation, written $\rho\comply\sigma$, meaning that all requirements by the client $\rho$ are eventually matched by some communication action by the server $\sigma$
\footnote{It is not feasible, however, to allow the client to terminate the interaction at any point,
since, trivially, any server would be compliant with such a sort of client.}. This is mathematically defined using an LTS semantics of the communication behaviour of the pair of contracts $\rho\|\sigma$, where $\rho\|\sigma\lts{}\rho'\|\sigma'$ holds  whenever $\rho\lts{\alpha}\rho'$, $\sigma\lts{\Dual{\alpha}}\sigma'$ and
$\alpha$ and $\Dual{\alpha}$ are dual actions.
Now, writing $\Lts{}$ for the reflexive and transitive closure of $\lts{}$, the relation $\rho\comply\sigma$ holds if and only if $\rho\|\sigma \Lts{} \rho'\|\sigma' \not\!\!\Lts{}$ implies $\rho' = \stopA$, where $\stopA$ is the behaviour of the completed
process. When $\rho\comply\sigma$ we say that $\rho$ is a {\em client} of the {\em server} $\sigma$, slightly abusing terminology.

The compliance relation characterises client/server interaction with a bias toward the client, which is the sole guaranteed to complete. To illustrate this by an example, let us consider a ballot service whose behaviour is described by the following server contract:
\[{\sf BallotServiceAB} \ByDef \rec x.\;{\tt Login}.(\Dual{\tt Wrong}.x \;\oplus\; \Dual{\tt Overload}.x \;\oplus\;
                              \Dual{\tt Ok}.({\tt VoteA}+{\tt VoteB}) ).
\]
This service can receive a login from a client, a voter, via the input action {\tt Login}; if the login is correct the server issues to the client the message $\Dual{\tt Ok}$ (an output action), enabling the client to vote for either candidates A or  B via a continuation consisting of the external choice $+$ of the input actions {\tt VoteA} and {\tt VoteB}. 
In case the login is incorrect or the service is busy, the messages $\Dual{\tt Wrong}$ and $\Dual{\tt Overload}$ are sent to the client respectively, both by output actions. In both cases the voter is allowed to retry the login by recursion. The output actions
$\Dual{\tt Ok}$, $\Dual{\tt Wrong}$ and $\Dual{\tt Overload}$ are composed by an internal choice $\oplus$ since they depend on internal decisions on the server side. 
Now let us  consider the following client:
\[{\sf Voter}  \ByDef \rec x.\; \Dual{\tt Login}.({\tt Wrong}.x + {\tt Overload}.x + {\tt Ok}.\Dual{\tt VoteA}).\]
{\sf Voter} will not give up synchronizing with {\sf BallotService} until eventually allowed to send her vote. 
According to the definition of compliance we have that ${\sf Voter}\comply {\sf BallotServiceAB}$, and this remains true also in the case of the slightly
different server:
\[{\sf BallotServiceABC} \ByDef \rec x.\;{\tt Login}.(\Dual{\tt Wrong}.x \;\oplus\;
                              \Dual{\tt Ok}.({\tt VoteA}+{\tt VoteB} + {\tt VoteC}) ).
\]
which is not willing to issue the $\Dual{\tt Overload}$ message, and allows one more candidate to be voted. Indeed what matters is the fact that no interaction among client and server will ever get stuck in a state in which some client action is pending. 
Because of the same reason the client {\sf Voter} is also compliant with the service:
\[\rec x.\;{\tt Login}.(\Dual{\tt Wrong}.x \;\oplus\; \Dual{\tt Overload}.x \;\oplus\;
                              \Dual{\tt Ok}.({\tt VoteA}.({\tt Va1} + {\tt Va2})+{\tt VoteB}.({\tt Vb1} + {\tt Vb2})) ),
\]
where {\tt Va1} and {\tt Va2} are choices depending on the vote {\tt VoteA}, and similarly for {\tt Vb1} and {\tt Vb2}. However {\sf Voter} is not compliant
with
\begin{tabbing}
${\sf Ball} \ByDef$ \= $\rec x.\;{\tt Login}.(\Dual{\tt Wrong}.\Dual{\tt InfoW}.x\;\oplus\; \Dual{\tt Overload}.x \;\oplus\; \Dual{\tt Ok}.\Dual{\tt Id}.($\= \kill

${\sf BallotServiceBehSkp} \ByDef $\\
   \> $\rec x.\;{\tt Login}.(\Dual{\tt Wrong}.\Dual{\tt InfoW}.x\;\oplus\; \Dual{\tt Overload}.x \;\oplus\; \Dual{\tt Ok}.\Dual{\tt Id}.(\;\;\;\;{\tt VoteA}.({\tt Va1} + {\tt Va2})$\\
             \>                                  \>$+\; {\tt VoteB}.({\tt Vb1} + {\tt Vb2})\;)\;)$ \\
\end{tabbing}
because of the actions $\Dual{\tt InfoW}$ and $\Dual{\tt Id}$ (the former representing   infos about the failure of the login and the latter representing an identifier of the voting transaction), that do not have any correspondent input on the client side. However these outputs
have hardly any control significance, which is especially the case in the setting of {\em session-behaviours} we have introduced in \cite{BdL10} and that have been also investigated in \cite{BH13} (where they are dubbed {\em session contracts}). In fact session-behaviours are contracts in which the only terms that can occur in an internal choice have to be prefixed by the output of pairwise distinct messages (the internal choice being the only truly non-deterministic feature of a session-behaviour).

In this paper we investigate the possibility of loosening the notion of compliance for session behaviours by admitting that a client, before an actual syncronization, can {\em skip} (disregard) a finite number of consecutive output actions by the server, provided that these are not the dual of some immediate input actions of the client. 
The overall number of (non consecutive) skipped actions in an interaction, however, can be possibly infinite. 
We call the resulting relation $\skipAct$-{\em compliance} and write $\rho\complyG\sigma$ for ``$\rho$ is $\skipAct$-compliant with $\sigma$''. There is a contrast between these two conditions; while the latter is easily decidable by looking at the contract syntax (and by admitting only guarded recursion), the former is an infinitary condition, ruling out those infinite interactions which happen to be {\em definitely} skip actions. The first result which we obtain is that, in spite of its infinitary definition, the so obtained compliance notion is decidable.

Compliance naturally induces a preorder over contracts seen as the behavioural specification of a server. In \cite{BdL10,BdL13} we say that $\sigma\preceq_s\sigma'$  if any client of $\sigma$ is also a client of $\sigma'$ according to the compliance relation $\comply$. It can be checked that, for example, 
${\sf BallotServiceAB}\preceq_s{\sf BallotServiceABC}$, but neither of them is comparable to ${\sf BallotServiceBehSkp}$. By replacing $\complyG$ in this definition one obtains a similar preorder $\sigma\preceqG\sigma'$, which also turns out to be decidable. The proof of the latter fact relies on the notion of dual behaviour $\Dual{\rho}$ of $\rho$ and on the property that $\Dual{\rho}$ is the minimal server of $\rho$ w.r.t. $\preceqG$.\\

\medskip
\noindent
{\em Overview of the paper.} The notion of session-behaviour is recalled in Section \ref{sect:skipCompliance}. 
Then the definition of $\skipAct$-compliance is given in Subsection \ref{subsect:skipCompliance}.
In Section \ref{sect:coChar} it is provided a coinductive characterization of 
$\skipAct$-compliance, via a formal system to deduce (conditional) $\skipAct$-compliance, which is proved to be sound and complete. 
Decidability then follows, being the system algorithmic.
The notion of $\skipAct$-subbehaviour $\preceqG$ is introduced in Section \ref{sect:skipSub}, and 
the property of {\em duals as minima} is proved. Decidability of $\preceqG$ is a consequence
 of such a property.
In Section \ref{sect:skipRelated} we extensively discuss the relationship of our $\skipAct$-compliance
with another weak notion of compliance allowing for a sort of ``action skipping'': the (orchestrated) weak-compliance proposed by Padovani in 
\cite{Padovani10}. 
In Section \ref{sect:futureWork} a discussion about  future works  concludes the paper.

\if false 

\newpage
Compliance naturally induces a preorder $\preceqBdL_s$ over contracts called {\em server sub-behaviour} in  \cite{BdL10} (in contrast with {\em client sub-behaviour}), 
where $\sigma \preceqBdL_s \tau$ holds whenever all clients of $\sigma$ are clients of $\tau$.

In \cite{BdL10} the formalism of {\em session-behaviours} is introduced (a restriction of the more
general notion of contracts) in order to represent the possible interactions offered by components  in session-based
client/server distributed systems. 
The relations $\preceqBdL_c$, $\preceqBdL_s$ and  $\preceqBdL_\peer$ are then introduced and thoroughly investigated in \cite{BdL13}, representing the notion of substitutability between, respectively, clients, servers and peers in client/server-based distributed systems taking into account the different roles a component plays in an interaction (the superscript {\sf BdL} is here used to help to clearly distinguishing these relations from the other treated in this paper.)

These sub-behaviour relations are the natural preorders on session behaviours induced by the relation $\comply$ of {\em compliance}, introduced in  \cite{LP07,PadTCS} on contracts. The  relation $\rho\comply\sigma$ ensures that any request from the client $\rho$ is satisfied by the server $\sigma$, so that any possible interaction among $\rho$ and $\sigma$ will never prevent the client from completing; this can be seen as the success condition of testing $\sigma$ against $\rho$ (client/server/peer relations for first-order contracts  has been investigated by Hennessy et al. in \cite{BH13c}.) In \cite{BdL13} on the one hand the notion of compliance is restricted to session-behaviour and on the other hand higher-order behaviours are also taken into account. 

The restricted sintax of session-behaviours enables them to be looked at from a process
calculi point of view, as well all from a type-theory point of view, being them in one-to-one
correspondence with session types as defined in \cite{GH05}. As shown in \cite{BdL13},
the relation $\preceqBdL_s$ can be looked at as an extension of the Gay-Hole subtype relation
of  \cite{GH05} in which the server is not committed to complete in case the client does
(the other two relations $\preceqBdL_c$ and  $\preceqBdL_\peer$ represent type substitutability
for clients and "peers").

In order to intuitively recall such notions, we present a working example 
sketched in \cite{BdL10} which, in turn,
is an adaptation of one in \cite{LaneveP08}.
We shall then extend this example in order to provide intuition for the notions introduced
in the present paper.\\

Let us consider a {\em Ballot-Service}.
This service can receive a login and, if correct, it signals to the client (a voter), by means of the message {\tt Ok}, that it is enabled
to vote either for the candidate A or for the canditate  B. 
In case the login is incorrect, instead, a message {\tt Wrong} is issued to the client.
By means of recursion, a voter is allowed to retry the login action in case of a failure.

\myshrink
\begin{tabbing}
${\sf BallotServiceBeh} \ByDef$ \= $\rec x.\;{\tt Login}.($\=$\Dual{\tt Wrong}.x \;\oplus\;$ 
                              $\Dual{\tt Ok}.($\=${\tt VoteA}+{\tt VoteB}) )$
\end{tabbing}
\myshrink

If one takes into account a notion of compliance requiring the completion of both the 
client and the server, we get a (server) sub-behaviour relation (let us dub it $\preceqGH_s$) corresponding to the
Gay-Hole subtyping relation. In fact, by defining 

\begin{tabbing}
${\sf BallotSer} \ByDef$ \= $\rec x.\;{\tt Login}.($\=$\Dual{\tt Ok}.($\= \kill

${\sf BallotServiceBehGH} \ByDef $\\
  \> $\rec x.\;{\tt Login}.(\Dual{\tt Wrong}.~x$ $\;\oplus\; $ $\Dual{\tt Overload}.~x $  $\;\oplus\; $ $\Dual{\tt Ok}.($\=${\tt VoteA}+{\tt VoteB}+{\tt VoteC})\;)$
\end{tabbing} 
\Comment{
\begin{tabbing}
${\sf BallotServiceBehGH} \ByDef$ \= $\rec x.\;{\tt Login}.($\=$\Dual{\tt Ok}.($\= \kill

${\sf BallotServiceBehGH} \ByDef \rec x.\;{\tt Login}.(\Dual{\tt Wrong}.~x$\\
                                        \>                  \> $\;\oplus\; $\\
                                        \>  \> $\Dual{\tt Overload}.~x $ \\
                                         \> \> $\;\oplus\; $\\
                                         \>\> $\Dual{\tt Ok}.($\=${\tt VoteA}+{\tt VoteB}+{\tt VoteC})$
\end{tabbing} 
where $\Dual{\tt Overload}$ represent a message concerning the impossibility voting due to a
current work overload of the server, we get: ~~
${\sf BallotServiceBeh}\preceqGH_s {\sf BallotServiceBehGH}$.\\
}

Taking into account all the possible roles of the interacting components in a system (client/server/peer), a server does not have necessarily to ''complete'' (i.e. to reduce to $\stopA$) in case its client does;
what is actually required is that any client's "request" be satisfied by the server.
This requirement can be formalized in many different ways, each corresponding
to particular implementation of the client/server interactions in a system.

In \cite{BdL13} and in \cite{PadTCS} (and in many other papers) this is achieved by means of an operational semantics described through an LTS defining a reduction relation among parallel composition of contracts,  such that $\rho\|\sigma \Lts{} \rho'\|\sigma'$ if $\rho \Lts{\alpha} \rho'$ and $\sigma \Lts{\aDual{\alpha}} \sigma'$, and $\alpha$ and $\aDual{\alpha}$ are actions that {\em synchronise} with each other. The compliance relation $\rho\comply\sigma$ is then defined by the clause that, if $\rho\|\sigma \Lts{} \rho'\|\sigma' \not\Lts{}$, then $\rho' = \stopA$, formally describing the fact that the ''server''
$\sigma$ has satisfied all the requests of the client $\rho$.\\
Informally, the above LTS corresponds to a sort of depth subtyping, where a client can
decide in any moment to {\em abort} the interaction offered  by the server.

In our ballot context this formalization of the notion of compliance makes the following hold:
${\sf BallotServiceBeh}\preceqBdL_s {\sf BallotServiceBehBdL}$, 
where

\begin{tabbing}
${\sf B} \ByDef$ \= $ \rec x.\;{\tt Login}.(\Dual{\tt Wrong}.x$ $\;\oplus\; $ $\Dual{\tt Overload}.~x $  $\;\oplus\; $ $\Dual{\tt Ok}.($\=${\tt VoteA}.($ \=${\tt Va1})$ $+$ ${\tt VoteB}.($\= ${\tt Vb1})$ $+$ ${\tt VoteC}.($\= \kill

${\sf BallotServiceBehBdL} \ByDef $\\
\> $ \rec x.\;{\tt Login}.(\Dual{\tt Wrong}.x$ $\;\oplus\; $ $\Dual{\tt Overload}.~x $  $\;\oplus\; $ $\Dual{\tt Ok}.($\=${\tt VoteA}.({\tt Va1})$ $+$ ${\tt VoteB}.({\tt Vb1})$ $+$ ${\tt VoteC}.({\tt Vc1})\; ) \;)$ \\
             \>                \>                  \>$+$   \>  $+$   \>  $+$\\
            \>                 \>                  \> $ {\tt Va2}$  \>   ${\tt Vb2}$ \> ${\tt Vc2}$ 
\end{tabbing} 
\Comment{
\begin{tabbing}
${\sf BallotServiceBehBdL} \ByDef$ \= $\rec x.\;{\tt Login}.($\=$\Dual{\tt Ok}.($\= \kill

${\sf BallotServiceBehBdL} \ByDef \rec x.\;{\tt Login}.(\Dual{\tt Wrong}.x$\\
                                        \>                  \> $\;\oplus\; $\\
                                        \>  \> $\Dual{\tt Overload}.~x $ \\
                                         \> \> $\;\oplus\; $\\
                                         \>\> $\Dual{\tt Ok}.($\=${\tt VoteA}.({\tt Va1} + {\tt Va2})$\\
             \>                \>                  \>$+$\\
             \>                \>                  \>${\tt VoteB}.({\tt Vb1} + {\tt Vb2})$ \\
             \>                \>                  \>$+$\\
             \>                \>                  \>${\tt VoteC}.({\tt Vc1} + {\tt Vc2})) $
                                                     $)$
\end{tabbing} 
}

Notice, however, that the possibility of aborting an interaction on a client's side is a 
far stronger requirement than having any client's request {\em satisfied}.\\

It is  possible to further weakening the sub-behaviour relation, without losing its
correspondence to a feasible actual implementation.

A weak version of the  notion of compliance on (first-order and unrestricted) contracts has been defined and investigated in \cite{Padovani10}.
There the interactions between a client and a server can be mediated (coordinated) by
an {\em orchestrator}, a form of process (a  sort of {\em active channel} or {\em channel controller}) with the capability of buffering messages.
The presence of an orchestrator enables a form of asyncronous interaction resulting in a weaker
version of the compliance relation, where, due to the presence of buffers in the orchestrators, the server's "answers" to the client's  "requests" can be provided in a different order, so extending the set of the servers that can actually comply with a given client. The weak-compliance
relation of \cite{Padovani10} induces a preorder that the author investigates in that paper, and that here we refer to as $\preceqP_s$.\\

Let us take into account the following behaviour of a ballot service, where ${\tt Id}$ denotes 
an identifier of the transaction.

\begin{tabbing}
${\sf BallotSe} \ByDef$ \= $\rec x.\;{\tt Login}.(\Dual{\tt Wrong}.x \;\oplus\; \Dual{\tt Overload}.~x \;\oplus\;\Dual{\tt Ok}.\Dual{\tt Id}.$\= $\;$ \= \kill

${\sf BallotServiceBehP} \ByDef $\\
 \> $\rec x.\;{\tt Login}.(\Dual{\tt Wrong}.x \;\oplus\; \Dual{\tt Overload}.~x \;\oplus\;\Dual{\tt Ok}.\Dual{\tt Id}.(\; \;\;\;({\tt Va1}.{\tt VoteA} + {\tt Va2}.{\tt VoteA})$\\
             \>                \>                  \>$+ \; ({\tt Vb1}.{\tt VoteB} + {\tt Vb2}.{\tt VoteB})$ \\
             \>                \>                  \>$+\; ({\tt Vc1}.{\tt VoteC} + {\tt Vc2}.{\tt VoteC})\;) \;)$
\end{tabbing} 
\Comment{
\begin{tabbing}
${\sf BallotServiceBehP} \ByDef$ \= $\rec x.\;{\tt Login}.($\=$\Dual{\tt Ok}.\Dual{\tt Id}.((\;\;$\= \kill

${\sf BallotServiceBehP} \ByDef \rec x.\;{\tt Login}.(\Dual{\tt Wrong}.x$\\
                                        \>                  \> $\oplus\; $\\
                                        \>  \> $\Dual{\tt Overload}.~x $ \\
                                         \> \> $\oplus\; $\\
                                         \>\> $\Dual{\tt Ok}.\Dual{\tt Id}.($\=$({\tt Va1}.{\tt VoteA} + {\tt Va2}.{\tt VoteA})$\\
             \>                \>                  \>$+$\\
             \>                \>                  \>$({\tt Vb1}.{\tt VoteB} + {\tt Vb2}.{\tt VoteB})$ \\
             \>                \>                  \>$+$\\
             \>                \>                  \>$({\tt Vc1}.{\tt VoteC} + {\tt Vc2}.{\tt VoteC}) $
                                                     $)$
\end{tabbing} 
}

Now, given the following behaviour of a possible voter,
\begin{tabbing}
${\sf VoterBeh} \ByDef  \Dual{\tt Login}.({\tt Wrong} + {\tt Overload} + {\tt Ok}.\Dual{\tt VoteB}.\Dual{\tt Vb1} ) $
\end{tabbing}
the feasibility of the interaction between ${\sf VoterBeh}$ and ${\sf BallotServiceBehP}$ is guaranteed by a the following orchestrator

\begin{tabbing}
$f \;\ByDef\;\langle{\tt Login},\Dual{\tt Login} \rangle.($ \=\kill

$f \;\ByDef \; \langle{\tt Login},\Dual{\tt Login} \rangle.(\;\;\;\;\;\langle\Dual{\tt Wrong},{\tt Wrong} \rangle $\\
       \> $\vee \; \langle\Dual{\tt Overload},{\tt Overload} \rangle $\\
        \> $\vee \; \langle\Dual{\tt Ok},{\tt Ok} \rangle.\langle\varepsilon, {\tt Id} \rangle.\langle{\tt VoteB},\varepsilon \rangle.\langle {\tt Vb1},\Dual{\tt Vb1},\rangle.\langle\varepsilon, \Dual{\tt VoteB}\rangle )$
\end{tabbing} 
\Comment{
\begin{tabbing}
$f \ByDef\langle{\tt Login},\Dual{\tt Login} \rangle.($ \=\kill

$f \ByDef\langle{\tt Login},\Dual{\tt Login} \rangle.(\langle\Dual{\tt Wrong},{\tt Wrong} \rangle $\\
       \> $\vee $\\
       \> $\langle\Dual{\tt Overload},{\tt Overload} \rangle $\\
        \> $\vee $\\
        \> $\langle\Dual{\tt Ok},{\tt Ok} \rangle.\langle\varepsilon, {\tt Id} \rangle.\langle{\tt VoteB},\varepsilon \rangle.\langle {\tt Vb1},\Dual{\tt Vb1},\rangle.\langle\varepsilon, \Dual{\tt VoteB},\rangle )$
\end{tabbing} 
}

When $f$ receives a  $\Dual{\tt Login}$ message from the client, this message is immediately delivered to the server. Then, in case $f$ gets a message $\Dual{\tt Wrong}$, $\Dual{\tt Overload}$ or $\Dual{\tt Ok}$ from the server, this message is delivered to the client.
Moreover, after receiving ${\tt Ok}$, the subsequent message ${\tt VoteB}$ is virtually kept in a
 bounded buffer and delivered to the server only after the message ${\tt Vb1}$ has been exchanged. \\

Hence the presence of an orchestrator allows for asyncronous interactions, thanks to the 
virtual presence of (bound) buffers in orchestrators.\\

\noindent
Notice that the natural restriction imposed on Padovani's orchestrators - for instance, an orchestrator cannot send a message if this has not been previously received - somewhat 
resembles the restriction imposed on the syntax of session-behaviours. 
For instance, we have that $\Dual{a}.b \oplus \Dual{c}.d\not\preceqP_s
b.\Dual{a} \oplus \Dual{c}.d$, since the orchestrator between the possible client of $\Dual{a}.b \oplus \Dual{c}.d$ and the server  $b.\Dual{a} \oplus \Dual{c}.d$ is not allowed to send the message $b$ that has not received. Similarly, it is not possible to take into account $b.\Dual{a} \oplus \Dual{c}.d$ as a server in the session-behaviours formalism, since it is not a sintactically correct session-behaviour.)\\

Now, taking into account Padovani weak-compliance and sub-behaviour, the following relation holds in our example:
$${\sf BallotServiceBeh}\preceqP_s {\sf BallotServiceBehP}$$

The extreme generality of Padovani's notion of compliance through orchestrators
is paid in terms of a formalization of the compliance relation by means of an operational semantics described through an LTS which depends
on an orchestrator $f$: the reduction relation among parallel composition of contracts is  such that $\rho\|_f\sigma \Lts{} \rho'\|_{f'}\sigma'$ whenever $\rho \Lts{\alpha} \rho'$, $\sigma \Lts{\Dual{\alpha}} \sigma'$ and $f$ reduces to $f'$ by means of a $\langle\Dual{\alpha},\alpha\rangle$ action. We have instead 
 $\rho\|_f\sigma \Lts{} \rho\|_{f'}\sigma'$
(resp. $\rho\|_f\sigma \Lts{} \rho'\|_{f'}\sigma$) whenever $\rho \Lts{\alpha} \rho'$ (resp. $\sigma \Lts{\Dual{\alpha}} \sigma'$) and f reduces to $f'$ by means of a $\langle\Dual{\alpha},\varepsilon\rangle$ (resp. $\langle\varepsilon,\alpha\rangle$) action. We shall refer to such a compliance relation as $\complyP_f$, writing $\rho \complyP \sigma$ when there exists $f$ such that $\rho \complyP_f \sigma$.

As shown by Padovani in \cite{Padovani10}, the corresponding sub-behaviour relation $\sigma\preceqP_s\sigma'$,
formalized by $\forall \rho.[\;\rho\comply\sigma \implies \exists f.\; \rho\complyP_f \sigma']$, is decidable and the orchestrator $f$ can be inferred. Moreover, the $f$ inferred is the same for any possible client.\\

The great flexibility of Padovani's notion of weak compliance is not for free, since the necessity of 
having an orchestrator for any  session of interactions between a client and a server could 
result sometimes in a heavy workload of the system.

In the present paper (taking into account first-order session-behaviours) we investigate a possible way of 
relaxing the standard notion of compliance such that interactions can still be carried on without any
orchestrator or mediating process. We do that by allowing the
possibility for some actions on the server's side to be {\em skipped}.

\noindent
Let us consider the following behaviour for the ballot service, where $\Dual{\tt InfoW}$
is a message providing informations about why a login has not been accepted.

\begin{tabbing}
${\sf Ball} \ByDef$ \= $\rec x.\;{\tt Login}.(\Dual{\tt Wrong}.\Dual{\tt InfoW}.x\;\oplus\; \Dual{\tt Overload}.x \;\oplus\; \Dual{\tt Ok}.\Dual{\tt Id}.($\= \kill

${\sf BallotServiceBehSkp} \ByDef $\\
   \> $\rec x.\;{\tt Login}.(\Dual{\tt Wrong}.\Dual{\tt InfoW}.x\;\oplus\; \Dual{\tt Overload}.x \;\oplus\; \Dual{\tt Ok}.\Dual{\tt Id}.(\;\;\;\;{\tt VoteA}.({\tt Va1} + {\tt Va2})$\\
             \>                                  \>$+\; {\tt VoteB}.({\tt Vb1} + {\tt Vb2})$ \\
             \>                                  \>$+ \; {\tt VoteC}.({\tt Vc1} + {\tt Vc2})\;)\;)$
\end{tabbing}
\Comment{
\begin{tabbing}
${\sf BallotServiceBehSkp} \ByDef$ \= $\rec x.\;{\tt Login}.($\=$\Dual{\tt Ok}.\Dual{\tt Id}.($\= \kill

${\sf BallotServiceBehSkp} \ByDef \rec x.\;{\tt Login}.(\Dual{\tt Wrong}.\Dual{\tt InfoW}.x$\\
                                        \>                  \> $\;\oplus\; $\\
                                        \>  \> $\Dual{\tt Overload}.x $ \\
                                         \> \> $\;\oplus\; $\\
                                         \>\> $\Dual{\tt Ok}.\Dual{\tt Id}.($\=${\tt VoteA}.({\tt Va1} + {\tt Va2})$\\
             \>                \>                  \>$+$\\
             \>                \>                  \>${\tt VoteB}.({\tt Vb1} + {\tt Vb2})$ \\
             \>                \>                  \>$+$\\
             \>                \>                  \>${\tt VoteC}.({\tt Vc1} + {\tt Vc2})) $
                                                     $)$
\end{tabbing}
}

Moreover, let the following be the behaviour of a possible voter,

\begin{tabbing}
${\sf VoterBeh2} = \rec  x.\;\Dual{\tt Login}.({\tt Wrong}.\;x + {\tt Overload}.\;x + {\tt Ok}.\Dual{\tt VoteB} ) $
\end{tabbing}

It is possible to check that ${\sf VoterBeh2} \not\complyP {\sf BallotServiceBehSkp}$, that is ${\sf VoterBeh2}$ is not compliant with ${\sf BallotServiceBehSkp}$ according to  
the Padovani's weak compliance, notwithstanding all the requests of the client can be
satisfied in their precise order and the only extra requirement needed is just the possibility for some particular action of the server to be {\em skipped}.
The problem lies in the definition of orchestrator: it does not allow unbounded buffering of
messages (or, equivalently, it does not allow skipping an unbounded number of messages coming of the server.)\\
So, in order to have ${\sf VoterBeh2}$ to be compliant with${\sf BallotServiceBehSkp}$ a different
definition of compliance is needed (which we call {\em skip-compliance}). Such a definition, besides, does not need to rely at
all on a notion of orchestrator. Even more, the notion of orchestrator can be avoided also during the interaction between a client and a server.

Actually, also Padovani's orchestrator formalism in \cite{Padovani10} allows  for a number of actions, either on the client's or the server's side, to be  {\em skipped} during a session of interactions. Such an effect can be achieved by simply keeping some messages from the server inside the  buffer, without without ever passing them to the client.
The number of skipped actions (hided ones, actually) is however necessarily bounded.
In fact the buffering capabilities of orchestrators in the framework
of  \cite{Padovani10} are bounded (orchestrators need to be of {\em finite} rank.)
The formalism of the present paper allows for an unbounded number of skipping actions,
We shall present later on an example showing the usefulness of such a property.\\
The $\skipAct$-compliance relation we investigate in the present paper can then 
be looked at as a minimal weakening of the standard notion of compliance not requiring the introduction of {\em orchestrators} between compliant clients and servers.\\

The formal definition of $\skipAct$-compliance we provide in the present paper, beside relying, as usual, on two LTSs
(one on session-behaviours and one on client/server pairs), needs to make sure that
the traces out of client/server pairs do not contain infinitely many
$\skipAct$-actions. Such a condition enable only sound recursive session-behaviours to be $\skipAct$-compliant.\\
Our compliance relation can be coinductively characterized and proven to be decidable by means of a formal system implicitly corresponding to a decision algorithm. The correctness proof of the decision algorithm can be developed along the lines of the similar proof
of decidability of the sub-behaviour relations in \cite{BdL13}, which is, in turn, an extension
of the decision algorithm for subtyping in \cite{GH05}.\\ 
A $\skipAct$-subbehaviour relation, $\preceqG$, can then be defined. Unlike for the standard sub-behaviour relation, the property of the dual of a behaviour $\sigma$ being the minimum among  
those {\em preceding} $\sigma$ turns out to be far more involved to prove, due the presence of skipped actions.
Such a property will directly imply decidability for the $\preceqG$ relation, so enabling us to
use is in type systems with session types, being our session-behaviours in one-to-one correspondence with session types.

When briefly discussing about future work in Section \ref{sect:futureWork}, we shall
hint at the possibilty of using Padovani weak-compliance to model interactions in Erlang-like
languages, and our $\skipAct$-compliance to model that part of interactions related to 
{\sf receive} operations in the style of Erlang. \\

\fi 

\section{Session Behaviours and the $\skipAct$-compliance relation}
\label{sect:skipCompliance}
Contracts \cite{LP07,CGP10} are a subset of CCS terms, defined by the grammar :
\[\sigma ::= \stopA \mid \alpha.\sigma \mid \sigma + \sigma \mid \sigma \oplus \sigma \mid x \mid \rec x.\sigma \]
where $\alpha$ ranges over a set of actions and co-actions, $\stopA$ is the same as the CCS term $0$, namely the completed protocol, $+$ and $\oplus$ are external and internal choices respectively.
{\em Session behaviours} as defined below are a further restriction of this set. They are designed to be in one-to-one correspondence to session types \cite{honda.vasconcelos.kubo:language-primitives} without delegation (in \cite{BdL10} and \cite{BdL13} session behaviours were extended by send/receive actions of session behaviours to model delegation). The restriction is achieved by constraining internal and external choices in a way that limits the non-determinism to (internal) output selection.


\begin{definition}[Session Behaviours]\label{def:session-behaviours}\hfill
\begin{enumerate}[i)]
\item
Let $\Names$ be some countable set of symbols and~ $\CoNames = \Set{\Dual{a} \mid a \in \Names}$, with
$\Names\cap\CoNames = \emptyset$. \\
The set $\Rbehav$ of {\em raw behaviour expressions} is defined by the following grammar:
\[\begin{array}{lcl@{\hspace{4mm}}l}
\sigma,\tau & ::= & ~~ \stopA & \mbox{inaction} \\
       &     & \mid ~ a_1.\sigma_1 + \cdots + a_n.\sigma_n  & \mbox{external choice} \\
       &     & \mid ~\Dual{a}_1.\sigma_1 \oplus \cdots \oplus \Dual{a}_n.\sigma_n & \mbox{internal choice}\\
       &     & \mid  ~x  & \mbox{variable}\\
       &     & \mid ~\rec x. \sigma &  \mbox{recursion}
\end{array}
\]
where
\begin{itemize}
\item [-]
 $n \geq 1$ and $a_i \in\Names$ (hence $\Dual{a}_i\in\CoNames$) for all $\;1 \leq i\leq n$;
\item[-]
$x$ is a session behaviour variable out of a denumerable set and it is bound by the  $\rec$ operator.
\end{itemize}

As usual, $\sigma$  is  said to be {\em closed} whenever $\fv{\sigma}= \emptyset$,
where $\fv{\sigma}$
denotes the set of free variables in $\sigma.$

\vspace{2mm}
\item
The set $\Sbehav$ of {\bf session behaviours} is the subset of closed raw behaviour expressions such that in $a_1.\sigma_1 + \cdots + a_n.\sigma_n$ and $\Dual{a}_1.\sigma_1 \oplus \cdots \oplus \Dual{a}_n.\sigma_n$,
 the $a_i$  and  the $\Dual{a}_i$ are, respectively, pairwise distinct; moreover in $\rec x.\sigma$ the expression $\sigma$ is not a variable.\\
\end{enumerate}
\end{definition}
We abbreviate  $a_1.\sigma_1 + \cdots + a_n.\sigma_n$ by $\sum_{i=1}^n a_i.\sigma_i$, and
$\Dual{a}_1.\sigma_1 \oplus \cdots \oplus \Dual{a}_n.\sigma_n$ by $ \bigoplus_{i=1}^n \Dual{a}_i.\sigma_i$. 
We also use the notations $\sum_{i\in I}a_i.\sigma_i$ and $\bigoplus_{i\in I} \Dual{a}_i.\sigma_i$, for finite and not empty $I$.
The trailing $\stopA$ is normally omitted: we write e.g. $a+b$ for $a.\stopA+ b.\stopA$. 

Note that recursion in $\Sbehav$ is guarded and hence contractive in the usual sense \cite{BD91}.
Session behaviours will be considered modulo commutativity of internal and external choices.\\

A syntactical notion of {\em duality\/} on $\Sbehav$ is easily obtained by 
interchanging $a$ with $\Dual{a}$, and $+$ with $\oplus$. Its formal definition can obtained by restricting to $\Sbehav$ a straightforward definition by induction on the structure
of the raw expressions in $\Rbehav$ (i.e. also for open expressions\footnote{To avoid too cumbersome definitions, any time an inductive definition
on elements of $\Sbehav$ will be provided, it will be tacitly assumed to be actually the restriction
to $\Sbehav$ of the corresponding inductive definition on $\Rbehav$.}). 
The dual of a session-behaviour $\sigma$ will be denoted, as usual, by $\Dual{\sigma}$.
As expected, $\Dual{\Dual{\sigma}} = \sigma$ for all $\sigma$. 

\medskip
The operational semantics of session behaviours is given in terms of a labeled transition system (LTS) 
$\sigma \lts{\alpha} \sigma'$ where $\sigma,\sigma'\in\Sbehav$ and $\alpha$ belongs to an appropriate set of actions $\Act$.

\begin{definition}[Behaviour LTS] \label{def:SB-red}~\hfill\\
Let $\skipAct\not\in\Names$ and define the set of actions $\Act=  \Names\cup\CoNames$ and $\IntChoice,\unfoldAction\not \in \Act$; then
define the LTS $(\Sbehav,   \Act \cup \Set{\IntChoice,\unfoldAction}, \lts{})$ by the rules:
\[\begin{array}{c@{\hspace{12mm}}c@{\hspace{6mm}}c@{\hspace{6mm}}c}
a_1.\sigma_1 + \cdots + a_n.\sigma_n \lts{a_k} \sigma_k &
\Dual{a}.\sigma
\lts{\Dual{a}} \sigma \vspace{3mm}\\
\Dual{a}_1.\sigma_1 \oplus \cdots \oplus \Dual{a}_n.\sigma_n
\lts{\IntChoice} \Dual{a}_k.\sigma_k
  & \rec x. \sigma \lts{\unfoldAction}
\sigma\Subst{\rec x.\sigma}{x}
\end{array}\]
where $1 \leq k \leq n$ and $\sigma \lts{\alpha}\gamma$ abbreviates $(\sigma,\alpha,\gamma) \in\; \lts{}$. 
\end{definition}

We abbreviate $\lts{} \, = \, \lts{\IntChoice} \cup
\lts{\unfoldAction}$. Note that neither $\IntChoice$ nor
$\unfoldAction$ are actions, so that they are unobservable and used just for technical reasons; indeed we adopt the
standard $\lts{}$ (from CCS without $\tau$) in the subsequent
definition of the parallel operator for testing.
As usual, we write $\Lts{} = \lts{}^*$ and $\Lts{\alpha} = \lts{}^*
\lts{\alpha} \lts{}^*$ for $\alpha\in\Act$.

We observe that if $\sigma\Lts{\alpha}\sigma'$ or $\sigma\Lts{}\sigma'$ for $\sigma\in\Sbehav$, then $\sigma'\in\Sbehav$.

\begin{lemma}\label{lem:convergence}
For any $\sigma \in \Sbehav$ there exists a unique and finite set $R = \Set{\sigma' \in \Sbehav \mid \sigma \Lts{} \sigma'\not\Lts{}}$, which is either of shape $\Set{\stopA}$ or $\Set{a_1.\sigma_1 + \cdots + a_n.\sigma_n}$ or $\Set{\Dual{a}_i .\sigma_i \mid i\in I}$. Moreover $R$ is computable in $\sigma$.
\end{lemma}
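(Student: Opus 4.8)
**The plan is to prove the statement by analyzing the shape of the behaviours reachable via the silent transitions $\lts{\IntChoice}$ and $\lts{\unfoldAction}$.** First I would observe that the unobservable transitions do exactly two things: they unfold a top-level recursion via $\lts{\unfoldAction}$, and they resolve a top-level internal choice via $\lts{\IntChoice}$, picking out one summand $\Dual{a}_k.\sigma_k$. The set $R$ consists precisely of those $\sigma'$ reachable from $\sigma$ by a maximal such silent sequence, i.e. the $\Lts{}$-normal forms, and I must show three things: that $R$ is nonempty and finite, that every element is of the three claimed shapes (and that within a single $R$ only one shape occurs), and that $R$ is computable.

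For the shape analysis, I would proceed by case on the outermost constructor of a session behaviour $\sigma' \in \Sbehav$. If $\sigma' = \stopA$, then $\sigma' \not\lts{}$, so $\stopA$ is terminal. If $\sigma' = \sum_{i} a_i.\sigma_i$ (external choice), then there is no $\lts{\IntChoice}$ or $\lts{\unfoldAction}$ move available, so it too is terminal. If $\sigma'$ is an internal choice $\bigoplus_{i} \Dual{a}_i.\sigma_i$, then the only silent moves are $\lts{\IntChoice}$ steps yielding the singletons $\Dual{a}_k.\sigma_k$, each of which is a one-branch internal choice hence terminal for $\lts{}$; collecting these gives exactly the form $\Set{\Dual{a}_i.\sigma_i \mid i \in I}$. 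Finally, if $\sigma' = \rec x.\tau$, the only move is $\lts{\unfoldAction}$ to $\tau\Subst{\rec x.\tau}{x}$, and since $\Sbehav$ forbids $\tau$ from being a variable, the unfolding strictly exposes one of the previous constructors. This case analysis both justifies the three possible shapes of $R$ and shows that the three cases are mutually exclusive, so a single $R$ is homogeneous in shape; it also explains why $R$ reduces to the stated singletons in the $\stopA$ and external-choice cases.

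\textbf{The main obstacle is establishing termination of the silent reduction, so that $R$ is nonempty and finite.} The difficulty is the $\lts{\unfoldAction}$ rule, which in isolation could loop forever. Here I would invoke the guardedness of recursion in $\Sbehav$: since $\sigma$ is guarded and contractive in the usual sense (as noted after Definition~\ref{def:session-behaviours}, citing \cite{BD91}), every recursion variable occurs under at least one action prefix, so unfolding $\rec x.\tau$ cannot immediately re-expose a top-level $\rec$ on the same variable without first passing a guard. I would make this precise by defining a measure on $\Sbehav$ that strictly decreases along $\lts{\IntChoice}$ and $\lts{\unfoldAction}$ steps until a guarded constructor ($\stopA$, external choice, or a prefix) is reached at the top; a natural candidate counts the number of leading $\rec$-binders plus the internal-choice nesting, using contractivity to bound how unfolding interacts with the guard. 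This yields a well-founded argument showing every silent path terminates after finitely many steps.

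Finiteness of $R$ then follows because the silent reduction is finitely branching: $\lts{\unfoldAction}$ is deterministic and $\lts{\IntChoice}$ has exactly $n$ successors at an $n$-ary internal choice, so by König's lemma the finite-height, finitely-branching silent reduction tree from $\sigma$ is finite, bounding $\abs{R}$. Uniqueness of $R$ is immediate from its definition as the set of reachable $\lts{}$-normal forms. For computability, the very same termination-and-finiteness argument gives an effective procedure: repeatedly apply the (deterministic) unfolding until a non-$\rec$ constructor appears, then either stop (for $\stopA$ or external choice) or branch on the internal-choice summands, recursing finitely; since each branch terminates and there are finitely many, the whole exploration halts and outputs $R$. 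I would close by remarking that the observation preceding the lemma, namely that $\Sbehav$ is closed under $\Lts{}$, guarantees every element produced is again a genuine session behaviour, so $R \subseteq \Sbehav$ as required.
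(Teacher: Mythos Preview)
Your proposal is correct and follows essentially the same approach as the paper: both rely on guardedness of recursion to rule out infinite $\lts{}$-sequences and on the finitary nature of internal choice to guarantee finiteness of $R$. The paper's proof is a two-sentence sketch (``by induction on the structure of $\sigma$'', invoking guardedness and finitary choices), whereas you flesh this out with an explicit measure for termination and a K\"onig-style argument for finiteness; this extra care is not strictly necessary here but does no harm.
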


\begin{proof} By induction of the structure of $\sigma$.
Since recursion is guarded and internal choices are finitary, no infinite $\lts{}$ reductions are possible out of $\sigma$; on
the other hand if $\sigma\in\Sbehav$ then it is closed, so the case $\sigma \Lts{} x$ for some variable $x$ is impossible.
\end{proof}

In the sequel we write $\Converge{\sigma}{\stopA}$ and   $\Converge{\sigma} {\sum_{i\in I} a_i.{\rho}_i}$ if
the $R$ in the above lemma is, respectively, of the first two shapes, and write $\Converge{\sigma} {\bigoplus_{i\in I} \Dual{a}_i.{\sigma}_i}$ if $R = \Set{\Dual{a}_i .\sigma_i \mid i\in I}$.

\medskip
We shall denote finite or infinite sequences of elements of $\Act$, i.e. elements of  $\Act^*\cup\Act^\infty$, by bold characters $\vec{\alpha}, \vec{\beta}, \ldots$. Bold italic (overlined) characters $\vec{a}, \vec{b}, \vec{c},\ldots$ ($\Dual{\vec{a}}, \Dual{\vec{b}}, \Dual{\vec{c}},\ldots$) shall denote sequences of elements of $\Names$ (resp. $\CoNames$). We shall represent the fact that a sequence $\vec{\alpha}$ is infinite by writing $\vec{\alpha}^\infty$. 
The length of a sequence $\vec{\alpha}$ will be denoted by $|\vec{\alpha}|$, and it is either finite or $\infty$.

We write  $\sigma\Lts{\vec{\alpha}} \sigma'$ if $\vec{\alpha} = \alpha_1\cdots \alpha_n$ and
$\sigma\Lts{\alpha_1}\cdots\Lts{\alpha_n}\sigma'$.
Also we write $\sigma\lts{}$ and $\sigma\lts{\alpha}$ if there exists $\sigma'$ s.t.
$\sigma\lts{}\sigma'$ and $\sigma\lts{\alpha}\sigma'$ respectively, and $\sigma\not\lts{}$
when $\neg (\sigma\lts{})$.
Given $\vec{\alpha}=\alpha_1\ldots\alpha_n$ the notation $\beta\in\vec{\alpha}$ will stand for $\beta\in\{\alpha_1,\ldots,\alpha_n\}$.\\

We define the set of traces of a session behaviour as follows. 
\begin{definition}[Traces]
\label{def:traces}
The mapping $\trace :  \Sbehav \rightarrow ({\cal P}(\Act^*)\cup{\cal P}(\Act^\infty))$ is defined by
$$\begin{array}{l@{\hspace{12mm}}l}
\trace(\sum_{i\in I} a_i.{\sigma}_i)=\bigcup_{i\in I} \{a_i\,\vec{\alpha} \mid \vec{\alpha}\in \trace(\sigma_i)\} &
\trace(\stopA) = \{\varepsilon\} \\[2mm]
\trace(\bigoplus_{i\in I} \Dual{a}_i.{\sigma}_i)=\bigcup_{i\in I} \{\Dual{a}_i\,\vec{\alpha} \mid \vec{\alpha}\in \trace(\sigma_i)\} &
\trace(\rec x.\sigma) = \trace(\sigma[\rec x.\sigma/x])
\end{array}
$$ 
\end{definition} 
\noindent
A session-behaviours $\sigma$ is said to be {\em finite} whenever
$\trace(\sigma)\in {\cal P}(\Act^*)$.

\subsection{The $\skipAct$-compliance relation}
\label{subsect:skipCompliance}

As for contract compliance, we use an LTS of client/server pairs $\rho\|\sigma$ to define the notion of $\skipAct$-compliance on session-behaviours.
The actions of the LTS are the silent action $\tau$, representing a full handshake between
synchronizing actions on the client and server sides, together with a  ``skipping" action $\skipAct$,
representing the fact that an action on the server's side has been discarded.

As mentioned in the introduction, we allow only output actions on the server side to be discarded. However we disallow the skip of an output action that synchronizes with some input action by the client. Let us write:
\[\noLts{\rho}{\alpha} \Iff \neg \, \exists\, \rho'.~\rho\Lts{\alpha}\rho'.\]
Observe that the statement $\noLts{\rho}{\alpha}$ is decidable because it is the negation
of $\Converge{\sigma}{\sum_{i\in i}a_i.\sigma_i}$ or $\Converge{\sigma}{\bigoplus_{i\in i}\Dual{a}_i.\sigma_i}$, with $\alpha\in \Set{a_i,\Dual{a}_i \mid i\in I}$,
which are decidable by Lemma \ref{lem:convergence}.

\if \false
One possibility is to allow any action on the server side that the client is not able to match:
\[
\prooftree 
\noLts{\rho}{\Dual{\alpha}} \quad \sigma\lts{\alpha}\sigma'  
\justifies
\rho\|\sigma\lts{\skipAct}\rho\|\sigma'\endprooftree
\]
By this rule we have that $a+b$ is $\skipAct$-compliant with $d.\Dual{a} + c.\Dual{b}$, for example.

The definition of the LTS rule concerning the $\skipAct$ action need to be properly discussed.\\
Notice that the naive following rule would not result in a meaningful system.
\vspace{-3mm}
$$
\prooftree \rho\lts{\alpha}\rho' \quad
\sigma\lts{\beta}\sigma'   \quad \beta\not=\Dual{\alpha}\justifies
\rho\|\sigma\lts{\skipAct}\rho\|\sigma'\endprooftree
$$

\vspace{-3mm}
In fact that would imply the following behaviours to be compliant:
$a+b \complyG d.\Dual{a} + c.\Dual{b}$\\
The interaction between $a+b$ and $d.\Dual{a} + c.\Dual{b}$, however, can be carried on
only in case an orchestrator be present. Moreover, such an orchestrator should not only coordinate the messages exchange, 
but should also play a heavy role in the evolution of $d.\Dual{a} + c.\Dual{b}$, since it should 
decide which one,  among the possible branches, can evolve.
We wish to avoid that.
(Notice that, as mentioned before, such a possibility is also  avoidedd  in \cite{Padovani10}, where a sound orchestrator cannot send messages other than those that it has received by the client or the server.) \\
As an extra counterexample to the feasibility of  the use of the above rule, it would result in a sub-behavior relation among servers that would allow ~
$\Dual{a}\oplus \Dual{b}\oplus \Dual{c} \preceq_s d(\Dual{a}\oplus \Dual{b})+c$\\
This would be hardly meaningful if we used $\preceq_s$ as the basis of a subtyping relation
on session types.\\


Also the following LTS rule would be problematic:
\vspace{-3mm}
$$
\prooftree \rho\lts{\alpha}_0\rho' \quad
\sigma\lts{\Dual{\beta}}\sigma'   \quad \beta\not=\alpha\justifies
\rho\|\sigma\lts{\skipAct}\rho\|\sigma'\endprooftree
$$

\vspace{-2mm}
\noindent
A motivation similar to the one for the first rule mentioned would apply.  In fact the present rule would allow ~
$\Dual{a}\oplus \Dual{b} \complyG c.a+d.b$~
where, again, the presence of an orchestrator should be needed: it would send a message $\Dual{c}$ or a message $\Dual{d}$
according to what choice is made by $\Dual{a}\oplus \Dual{b}$. Even if such a
behaviour for an orchestrator is allowed for the orchestrators as described in \cite{Padovani10}, the message asked for by the server could be relevant for the evolution of the server itself. So, beside the fact that we wish to relax the notion of compliance as far as the presence of a coordinator can be avoided, we decide not to allow the above possibility. 

So a reasonable rule to be used is the following one
\vspace{-3mm}
$$
\prooftree \rho\lts{\alpha}\rho' \quad
\sigma\lts{\Dual{b}}\sigma'   \quad \alpha\not=b\justifies
\rho\|\sigma\lts{\skipAct}\rho\|\sigma'\endprooftree
$$

\vspace{-2mm}
\noindent
Notice that the above  rule conforms not only to \cite{Padovani10} but also to Mostrous treatment of asyncronous subtyping \cite{DM09}.

It is worth pointing out that, in case we took out the side condition $\alpha\not=b$, the presence of an orchestrator would turn out to be necessary, since we could be able to skip also synchronizing actions,
so adding extra nondeterminism.
In such a case, in fact, we could have $a \complyG \Dual{b}.\Dual{a}.\Dual{b}.\Dual{a}$ because
$$a \| \Dual{b}.\Dual{a}.\Dual{b}.\Dual{a} \lts{\skipAct} a \| \Dual{a}.\Dual{b}.\Dual{a} \lts{\tau} \stopA \| \Dual{b}.\Dual{a}$$ But also
$$a \| \Dual{b}.\Dual{a}.\Dual{b}.\Dual{a} \lts{\skipAct} a \| \Dual{a}.\Dual{b}.\Dual{a} \lts{\skipAct} a \| \Dual{b}.\Dual{a} \lts{\skipAct} a \| \Dual{a}\lts{\tau} \stopA \|  \stopA$$
 So, the actual interaction between $a$ and 
$\Dual{b}.\Dual{a}.\Dual{b}.\Dual{a}$ should be carried on necessarily by an {\em orchestrator}.\\

We need now to decide whether the number of {\em consecutive} $\skipAct$ actions need to be "bounded"
or can be "unbounded". Let us consider the following example:~
$ b\complyG \rec x.(\Dual{a}.x\oplus \Dual{b})$\\
In such an example the server would be compliant 
with the client only in case the former were guaranteed to be {\em fair}.
Such, however, is not a guarantee we can infer from what we know by the behaviour
exposed by the server.
We decide hence to forbid the server $\rec x.(\Dual{a}.x\oplus \Dual{b})$ to be compliant with the client $b$.

We wish then to prevent the possibility of an unbounded number (and hence an infinite one, 
in presence of unfair servers) of {\em consecutive} $\skipAct$ actions. This however does not means that the {\em overall} number of $\skipAct$ action need to be bounded.
In fact the compliance in the following example does not depend on any fairness property on the
server's side, but nonetheless an infinite number of skipping actions is required:
\vspace{-3mm}
\begin{equation}
\label{eq:example1}
\rec x. \Dual{b}.x \complyG \rec x.\Dual{a}.\Dual{a}.b.x
\end{equation}

\vspace{-2mm}
\noindent
Our definition of compliance will then prevent the possibility of an unbounded number
of {\em consecutive} skipping actions, but will allow an unbounded {\em overall} number of them.\\
We are not saying that allowing unbounded consecutive skipping actions be completely unreasonable,
but simply that it would force us to rely on a fairness condition on the server's side. Besides, it would result also in 
in a computationally heavy coinductive compliance relation, and decidability conditions.

\fi

\medskip
The next definitions formally introduce the LTS for client/server pairs and the relation of  $\skipAct$-{\em compliance} 
for session behaviours, that we dub $\complyG$. 

\begin{definition}[LTS for Client-Server pairs]\label{def:commCompl}
~\\
Let $\csAct = \{\tau,\skipAct\}$ be the set of the synchronization actions and
$\rho\|\sigma$ denote the parallel composition of session behaviors in $\Sbehav$, then define:
\[\begin{array}{c@{\hspace{8mm}}c}
\prooftree \rho\lts{}\rho' \justifies
\rho\|\sigma\lts{}\rho'\|\sigma\endprooftree &
\prooftree \sigma\lts{}\sigma'\justifies
\rho\|\sigma\lts{}\rho\|\sigma'\endprooftree
\\[6mm]
\prooftree
	\noLts{\rho}{a}  \quad
	\sigma\lts{\Dual{a}}\sigma'   
\justifies
	\rho\|\sigma\lts{\skipAct}\rho\|\sigma'
\endprooftree
&
\prooftree \rho\lts{\alpha}\rho' \qquad
\sigma\lts{\Dual{\alpha}}\sigma'   \justifies
\rho\|\sigma\lts{\tau}\rho'\|\sigma'\endprooftree
\end{array} 
\]
where $a\in\Names$ (and hence $\Dual{a}\in\CoNames$), $\alpha\in\Act$ and $\Dual{\alpha}$  is its dual, such that $\Dual{\Dual{\alpha}} = \alpha$.
\end{definition}
The ratio of introducing the ability of clients to skip some actions on the server side is to allow more clients to synchronize with servers that essentially provide the required service but for some supplementary (and possibly redundant) information. \\

We abbreviate $\Lts{}~=~\lts{}^*$
and $\Lts{\xi}~=~\Lts{}\circ\lts{\xi}\circ\Lts{}$, where $\xi\in\csAct$.\\
Moreover, by $\Lts{\zeta^*\xi}$ we denote $\Lts{\zeta}$$^*\circ\Lts{\xi}$, where $\zeta,\xi\in\csAct$.

\begin{remark}\label{rm:skip-motivation}
{\em 
We observe that it would be unreasonable to allow clients to deny replies to server input actions, as this would result into a complete loss of control (think of the {\tt Login} action in the ballot service examples). On the other hand we balance the possibility of skipping server outputs by two principles. The first one is that the client is not
allowed to defer the synchronization with an output action of the server which it is ready to accept, avoiding the indeterminacy of synchronizations like 
$$a \| \Dual{b}.\Dual{a}.\Dual{b}.\Dual{a} \lts{\skipAct} a \| \Dual{a}.\Dual{b}.\Dual{a} \lts{\tau} \stopA \| \Dual{b}.\Dual{a}$$ and
$$a \| \Dual{b}.\Dual{a}.\Dual{b}.\Dual{a} \lts{\skipAct} a \| \Dual{a}.\Dual{b}.\Dual{a} \lts{\skipAct} a \| \Dual{b}.\Dual{a} \lts{\skipAct} a \| \Dual{a}\lts{\tau} \stopA \|  \stopA$$
of which only the first one is legal. The second principle is that a client has not to be compliant with a server that will never provide the required output. This happens in an infinite interaction which is {\em definitely} made of $\skipAct$-synchronization actions, as in the cases of $\rec x.b.x \| \rec x.\Dual{a}.x$ and of the subtler $b \| \rec x.(\Dual{a}.x \oplus \Dual{b})$.\\
However, it is reasonable to allow the overall number of skippings to be infinite. A simple example of that is when all the infinite  $b$'s of the client $\rec x.b.x$ manage  to syncronize with a $\Dual{b}$ of the server $\rec x.\Dual{a}.\Dual{a}.\Dual{b}.x$, each time skipping the  $\Dual{a}$ preceding the $\Dual{b}$ and the $\Dual{a}$ following it.
}
\end{remark}

So, as previously discussed, the notion of compliance we wish to formalize is an extension
of the usual notion of compliance such that any finite or infinite number of output actions from the server can be discarded. We wish however to rule out the possibility of a client indefinitely discarding output actions coming from the server. 
So, in order to do that, we formalize below the $\skipAct$-compliance relation in terms of 
{\em synchronization traces}. A synchronization trace describes a possible client/server interaction as a sequence of successful handshakes ($\tau$) or skipping actions ($\skipAct$).
Such traces can be either finite or infinite. A client will then be compliant with a server when all the 
client/server finite synchronization traces ends with $\checkmark$ (which can occur only in case
the client completes, i.e. gets to $\stopA$) and all the infinite synchronization traces are not formed of just $\skipAct$ elements from a certain point on, i.e. are not {\em definitely}-$\skipAct$.

\begin{definition}[Synchronization traces]\label{def:synch-traces}
~\\
The mapping 
$\cstrace :  \Sbehav\times\Sbehav \rightarrow ((\csAct\cup\{\checkmark\})^*\cup\csAct^\infty)$ is defined by
\begin{itemize}
\item[~]
$\cstrace(\rho\|\sigma)=
          \left\{\begin{array}{l@{~~~~~}l}
            \{\checkmark\} & \mbox{if~~}  \rho=\stopA \\[2mm]
            \{\xi\,\vec{\chi} \mid \rho\|\sigma \Lts{\xi}\rho'\|\sigma' ~ \And \vec{\chi}\in\cstrace(\rho'\|\sigma')\} & \mbox{if~~}  \exists\zeta\in\csAct.~\rho\|\sigma \Lts{\zeta}\\[2mm]
           \{\varepsilon\} & \mbox{otherwise} 
            \end{array}
          \right.$
\end{itemize}
Let $\vec{\xi}\in\csAct^\infty$ with $\vec{\xi}=\xi_1\xi_2\ldots$. We say $\vec{\xi}$ to be {\em definitely}-$\skipAct$ whenever $\exists k.~ \forall h>k.~ \xi_h=\skipAct$.
\end{definition}

\Comment{ 
The following properties should be relevant.

\begin{proposition}\hfill
\label{prop:PropertiesSkip}
\begin{enumerate}[i)]
\item
Let $\alpha\in\Act$ and let $\vec{b}\in \Act^*\cup\Act^\infty$ such that    $\sigma\Lts{\Dual{\vec{b}}}$. Then
$$
\alpha\in \vec{b} ~~~ \Iff ~~~ \exists \vec{c}.~ (\sigma\Lts{\Dual{\vec{c}}} \;\And \alpha\in \vec{c} \And |\vec{c}|\leq \mbox{number related to the Parallel Unfolding of recs in $\sigma$})$$
\item
$$a\complyG \sigma ~~~ \implies ~~~ \not\exists\vec{c}^\infty.~ (\sigma\Lts{\Dual{\vec{c}}^\infty} \And a\not\in \vec{c}^\infty)$$
\end{enumerate}
\end{proposition}
}
Then the notion of $\skipAct$-compliance can be formalised in terms of synchronization traces
as follows.
\begin{definition}[$\skipAct$-compliance]\label{def:skipcompl}
The {\em client} $\rho$
 is {\em skip-compliant} with the {\em server} $\sigma$, written $\rho \complyG \sigma$, whenever,
for any $\vec{\xi}\in\cstrace(\rho \| \sigma)$ either $\vec{\xi}=\vec{\xi}'\checkmark$ or 
$\vec{\xi}$ is infinite and not definitely-$\skipAct$.
\end{definition}
In the remaining part of the paper we just say ``compliant" instead of ``$\skipAct$-compliant" when any ambiguity cannot arise.

By the previous definition we have that, as stated in the Introduction, ${\sf Voter} \complyG{\sf BallotServiceBehSkp}$. In the following example we provide, instead, two behaviours that are not compliant.
\Comment{
Let us provide now a couple of examples of compliant and not compliant client/server pairs.
\begin{example}
{\em
By the previous definition we have ${\sf Voter} \complyG{\sf BallotServiceBehSkp}$, since
any element of the (infinite) set $\cstrace({\sf Voter} \| {\sf BallotServiceBehSkp})$ is 
\begin{itemize}
\item[-] either a finite sequence of the form
$\ldots\tau\tau\skipAct\tau\checkmark$;\\ 
This happens when the voter logs-in a finite number of times. Each time, but the last, the voter provides a wrong login (and skipping the information provided by the server
about why the login was refused) or finds
the server overloaded. When the voter logs-in for the last time, after receiving an ${\tt ok}$ message, skips the identifier of the transaction and completes after voting the candidate $A$.
\item[-] or an infinite sequence $\tau\ldots\tau\ldots$ not containing any  $\skipAct$;\\
This happens when the server keeps indefinitely on being overloaded. In such a case such an infinite sequence cannot be, trivially,
a {\em definitely}-$\skipAct$ one, since it contains no $\skipAct$ at all.
\item[-] or an infinite sequence $\tau\ldots\tau\skipAct\tau\ldots$ containing a finite or infinite number of $\skipAct$.\\
This happens when the client tries an infinite number of accesses and at any time either the server is overloaded or the login wrong.  In such an infinite sequence any $\skipAct$ cannot be the one due to the skipping of the transaction identifier. Moreover,  any $\skipAct$  is necessarily preceded  by two $\tau$'s, since it corresponds to the skipping of the informations about why the logging procedure failed and this skipping is always preceded by the login attempt and the reception of the ${\tt Wrong}$ message. This last fact then implies  that such an infinite sequence cannot be a {\em definitely}-$\skipAct$ one.
\end{itemize}
}
\end{example}
}

\begin{example}
{\em
Let us consider the following malicious server that, after receiving a login, sends a ${\tt Wrong}$ message and then, indefinitely, the message ${\tt InfoW}$, that is 
$${\sf BallotServiceMalicious} \ByDef 
    {\tt Login}.(\Dual{\tt Wrong}.\rec x.\Dual{\tt InfoW}.x)$$
It is easy to check that ${\sf Voter} \not\complyG{\sf BallotServiceMalicious}$.\\
 In fact we have that $\cstrace({\sf Voter} \| {\sf BallotServiceMalicious})=\{\tau\,\tau\,\skipAct\,\skipAct\,\skipAct\ldots\}$, that is the only element of the set of synchronization traces is a sequence that, after the two $\tau$ actions due to the login message and the message that the login procedure went wrong, is made of an infinite number of consecutive $\skipAct$'s, since the server would keep on skipping all the ${\tt InfoW}$ messages from the server. Such a sequence is an obviously {\em definitely}-$\skipAct$ one.
}
\end{example}

\begin{remark}\label{rem:strongComplInclWeak}
{\em
It is clear that $\comply \; \subseteq \; \complyG$. This inclusion is strict: in fact $b \complyG \Dual{a}.\Dual{b}$ with $a\neq b$, but
$b \not\comply  \Dual{a}.\Dual{b}$.
}\end{remark}


\if false
Going back to our Ballot Service example, we have now that ~
${\sf VoterBeh2} \complyG{\sf BallotServiceBehSkp}$
Notice that this is a counterexample showing the following

\begin{proposition} 
\label{prop:counterPadovani}
Let $\complyP$ be Padovani's weak $k$-compliance restricted to session behaviours.
Then, for any $k$, we have:
\vspace{-3mm} 
$$\complyG ~\not\subseteq ~\complyP_k$$
\end{proposition}
\noindent
The relation $\complyP_k$  is defined in \cite{Padovani10}:  in terms of : $\rho \complyP_k\sigma$ whenever there exists an orchestrator of finite rank $k$ such that
$\rho \complyP_f\sigma$, where the rank of an orchestrator $f$ is  the maximum capacity of the buffer associated to $f$.\\
Now, the actual interaction between ${\sf VoterBeh2}$ and ${\sf BallotServiceBehSkp}$ should be carried on in Padovani's setting through the use of the following orchestrator:
\begin{tabbing}
${\mathit BallotServiceOrch} \;\ByDef \;$ \= $\rec x.\;\orchAct{{\tt Login}}{{\tt Login}}.$\=$\Dual{\tt Ok}.\Dual{\tt Id}$\= \kill

${\mathit BallotServiceOrch} \;\ByDef \; \rec x.\;\orchAct{{\tt Login}}{\Dual{\tt Login}}.(\;\;\;\;\orchAct{\Dual{\tt Wrong} }{{\tt Wrong}}.\orchAct{\varepsilon }{{\tt InfoW}}.x$\\
                                        \>                  \> $\;\vee\; \orchAct{\Dual{\tt Overload} }{{\tt Overload}}.~x $ \\
                                         \> \> $\;\vee\; \orchAct{\Dual{\tt Ok} }{{\tt Ok}}.\orchAct{\varepsilon }{{\tt Id}}.$\=$\orchAct{{\tt VoteB}}{\Dual{\tt VoteB}} \;)$
\end{tabbing}
\Comment{
\begin{tabbing}
${\mathit BallotServiceOrch} =$ \= $\rec x.\;\orchAct{{\tt Login}}{{\tt Login}}.($\=$\Dual{\tt Ok}.\Dual{\tt Id}.($\= \kill

${\mathit BallotServiceOrch} = \rec x.\;\orchAct{{\tt Login}}{\Dual{\tt Login}}.(\orchAct{\Dual{\tt Wrong} }{{\tt Wrong}}.\orchAct{\varepsilon }{{\tt InfoW}}.x$\\
                                        \>                  \> $\;\vee\; $\\
                                        \>  \> $\orchAct{\Dual{\tt Overload} }{{\tt Overload}}.~x $ \\
                                         \> \> $\;\vee\; $\\
                                         \>\> $\orchAct{\Dual{\tt Ok} }{{\tt Ok}}.\orchAct{\varepsilon }{{\tt Id}}.($\=$\orchAct{{\tt VoteB}}{\Dual{\tt VoteB}}$
                                                     $)$
\end{tabbing}
}
which is not of finite rank.

The behaviours of (\ref{eq:example1}) provide another counterexample, since
we have that $\rec x. \Dual{b}.x \not\complyP_k \rec x.\Dual{a}.\Dual{a}.b.x$ for any $k$
since no orchestrator of finite rank can mediate between $\rec x. \Dual{b}.x$ and $\rec x.\Dual{a}.\Dual{a}.b.x$.
The interaction can be carried on only by the orchestrator $f = \rec x. \orchAct{\varepsilon}{a}.\orchAct{\varepsilon}{a}.\orchAct{\Dual{b}}{b}.x$.
Notice that Proposition \ref{prop:counterPadovani} above heavily relies on the possibility, in
our formalism, of
having an infinite number of $\skipAct$-actions during a client/server interaction. In fact, for finite behaviours, the following holds. 

\begin{proposition} For any pair of {\em finite} session behaviours $\rho,\sigma$, there exists a $k\geq0$
such that 
\vspace{-3mm}
$$\rho\complyG\sigma ~\implies ~\rho\complyP_k\sigma$$
\end{proposition}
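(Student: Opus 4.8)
The plan is to simulate a successful $\skipAct$-compliant interaction by an orchestrated one of bounded rank, using finiteness to cap the number of skips. First I would note that, since $\rho,\sigma$ are finite, the reduction tree of the pair $\rho\|\sigma$ is finite: it is finitely branching (all internal and external choices range over finite index sets), and it has no infinite path, because an infinite reduction sequence would advance $\rho$ or $\sigma$ infinitely often and thus, by guardedness, produce an infinite element of $\trace(\rho)$ or of $\trace(\sigma)$, contradicting $\trace(\rho),\trace(\sigma)\subseteq\Act^*$; by K\"onig's Lemma the tree is finite. Hence every $\vec{\xi}\in\cstrace(\rho\|\sigma)$ is finite, and when $\rho\complyG\sigma$ holds Definition~\ref{def:skipcompl} forces each such $\vec{\xi}$ to end with $\checkmark$, so the client always reaches $\stopA$. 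Letting $N$ be the height of this finite tree, the number of $\skipAct$-symbols along any $\vec{\xi}$ is at most $N$, and I set $k:=N$.

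Next I would construct, by recursion on the finite reduction tree, an orchestrator $f$ driving exactly this interaction. At a reachable configuration $\rho'\|\sigma'$: if $\rho'=\stopA$ then $f$ stops; for each matched handshake $\rho'\lts{\alpha}$, $\sigma'\lts{\Dual{\alpha}}$ (a $\tau$-step) $f$ performs the single mediating action $\orchAct{\Dual{\alpha}}{\alpha}$; and for each legal skip, i.e. $\sigma'\lts{\Dual{a}}\sigma''$ with $\noLts{\rho'}{a}$, $f$ performs the one-sided action $\orchAct{\varepsilon}{a}$, which absorbs the server output $\Dual{a}$ into the buffer without delivering it to the client (this is the message-hiding action already illustrated by the ballot orchestrator of the Introduction). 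The internal choices of $\sigma'$ are covered by the orchestrator disjunction $\vee$, while the external (input) choices on the client side are matched by the corresponding branches of $f$; since the tree is finite, $f$ is a finite, well-formed orchestrator term. Note that the client never moves alone here: every client action is consumed by a $\tau$-step, so $f$ uses only full handshakes and server-output hidings, and the only messages that ever sit in the buffer are the skipped outputs.

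Then I would verify the two facts that make $f$ a witness for $\rho\complyP_k\sigma$. Legality is immediate: in a handshake $\orchAct{\Dual{\alpha}}{\alpha}$ the message is received and forwarded atomically, and the hidden outputs are merely buffered, so $f$ never sends a message it has not received. For the rank, the buffer occupancy at any node equals the number of skips performed on the path from the root to that node, which is bounded by the number of $\skipAct$-symbols in the corresponding synchronization trace, hence by $k$; thus $\Rank{f}\le k$. Finally the step-by-step correspondence between reductions of $\rho\|\sigma$ and orchestrated reductions of $\rho\|_f\sigma$ shows that every maximal orchestrated run ends with the client in $\stopA$, whence $\rho\complyP_f\sigma$ and therefore $\rho\complyP_k\sigma$.

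The main obstacle will be this last correspondence. One has to check that Padovani's orchestrated LTS performs precisely the moves prescribed by $f$ and no spurious ones — in particular that the hiding of a server output can always be scheduled in place of the skip it simulates, and, crucially, that the outputs left in the buffer when the client terminates do not spoil the success of $\rho\complyP_f\sigma$. This is exactly the point where the client-biased nature shared by $\complyG$ and Padovani's weak compliance is used, and where the bookkeeping of buffer occupancy against the number of skips pins down the finite bound $k$.
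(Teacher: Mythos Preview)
The paper states this proposition without proof (it appears in Section~\ref{sect:skipRelated} as a claim contrasting with Proposition~\ref{prop:counterPadovani}, with no accompanying argument), so there is nothing to compare your attempt against.

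Judged on its own merits, your outline is sound. The K\"onig argument for finiteness of the reduction tree of $\rho\|\sigma$ is correct given that both trace sets lie in $\Act^*$ and recursion is guarded; the bound $k=N$ on buffer occupancy is the right one, since each $\skipAct$ contributes exactly one undelivered output and nothing is ever removed from the buffer; and the recursive construction of $f$ along the finite tree, using $\vee$ to cover the server's internal choices, is the natural route. One small point worth tightening: when the client is an external choice $\sum_i a_i.\rho_i$ facing a server output $\Dual{a}.\sigma'$, the LTS of Definition~\ref{def:commCompl} makes the $\tau$/$\skipAct$ alternative exclusive (the skip rule requires $\noLts{\rho}{a}$), so at each such node $f$ has a single determined action---you implicitly rely on this but do not say it. The point you flag as the main obstacle---that undelivered buffered outputs at client termination do not invalidate $\rho\complyP_f\sigma$---is indeed where the client-biased definition of Padovani's compliance is needed, and it holds for the same reason it holds in the ${\sf BallotOrchP2}$ discussion just above the proposition: success is measured by the client reaching $\stopA$, not by the buffer being empty.
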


\fi

\section{Coinductive characterization and decidability.}
\label{sect:coChar}

To prove that the $\complyG$ relation is decidable we work out a coinductive characterization.
In doing that we use the relation $\sync{}{}$ between actions and traces. $\sync{\alpha}{\sigma}$ holds whenever all traces
of the server $\sigma$ contain the action $\Dual{\alpha}$ possibly prefixed by a finite sequence of {\em skippable} output actions.


\begin{definition}[Coinductive Skip-Relations]\label{def:coskipcompl}
~

\begin{enumerate}[i)]
\item The relation $\sync{}{}\subseteq\Act\times\Sbehav$ is defined by
\begin{itemize}
\item[~]\vspace{1mm}
$\sync{a}{\sigma} ~\ByDef~ \forall \vec{\alpha}\in\trace(\sigma) \, \exists \,\Dual{\vec{b}}, \vec{\alpha}'.~
\vec{\alpha}= \Dual{\vec{b}}\,\Dual{a}\,\vec{\alpha}' \And \Dual{a}\not\in\Dual{\vec{b}}$;
\item[~]\vspace{2mm}
$\sync{\Dual{a}}{\sigma}~\ByDef~ \forall \vec{\alpha}\in\trace(\sigma) \, \exists \,\Dual{\vec{b}}, \vec{\alpha}'.~
\vec{\alpha}= \Dual{\vec{b}}\,a\,\vec{\alpha}'$,
\end{itemize}
where $\Dual{\vec{b}}$ is possibly empty.
\item \vspace{2mm}
\label{def:coskipcompl-ii}
The operator 
$\FunH: {\cal P}(\Sbehav\times \Sbehav)\rightarrow  {\cal P}(\Sbehav\times \Sbehav)$
is defined as follows: \\
 for any relation $\Rel\subseteq \Sbehav\times \Sbehav$, we have
$(\rho,\sigma) \in {\cal H}(\Rel)$ if and only if  either $\Converge{\rho}{\stopA}$ or the following statements hold:
\vspace {3mm}
\begin{enumerate}
\item $\Converge{\rho} {\sum_{i\in I} a_i.{\rho}_i} ~~\Rightarrow~~
          \left\{\begin{array}{l}
            \exists k\in I.~\sync{a_k}{\sigma} ~\And \\[2mm]
            \forall i\in I.\forall \sigma'.~[(a_i.\rho_i \|\sigma \Lts{\skipAct^*\tau} \rho_i\|\sigma') ~\implies~ \rho_i\Rel \sigma'] ;
            \end{array}
          \right.$
\item\vspace{3mm}
$\Converge{\rho} {\bigoplus_{i\in I} \Dual{a}_i.{\rho}_i} ~~\Rightarrow~~
          \left\{\begin{array}{l}
           \forall i\in I.~\sync{a_i}{\sigma}  \And\\[2mm]
            \forall i\in I.\forall \sigma'.~[(\Dual{a}_i.\rho_i \|\sigma \Lts{\skipAct^*\tau} \rho_i\|\sigma') ~\implies~ \rho_i\Rel \sigma'].
            \end{array}
          \right.$
\end{enumerate}
\vspace{2mm}
\item\vspace{1mm}
A relation $\Rel\subseteq \Sbehav\times\Sbehav$ is a
{\em coinductive Skip-relation\/} if and only if $\Rel\subseteq \FunH(\Rel)$.
\end{enumerate}
\end{definition}

\medskip Since  $\Rel$ occurs covariantly in the clauses defining $\FunH(\Rel)$, the operator
$\FunH$ is monotonic with respect to subset inclusion. Then the following fact  immediately follows by Tarsky theorem \cite{Tar55} (see also \cite{Sangio09} for a discussion about the use of this result): 

\begin{fact}\label{fact:gfpH}
Let $\FunH^0 \ByDef \Sbehav\times\Sbehav$ and $\FunH^{k+1} \ByDef \FunH(\FunH^k)$; then 
\[
\begin{array}{c}\nu(\FunH) ~ 
 = ~ \bigcup\Set{\Rel \subseteq \Sbehav\times\Sbehav\mid \Rel\subseteq \FunH(\Rel)} ~ 
 = ~ \bigcap_{k\in\Nat}\FunH^k
\end{array}
\]
is the greatest fixed point of $\FunH$. 
\end{fact}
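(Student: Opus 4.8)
The plan is to establish the two displayed equalities separately. Write $P=\bigcup\{\Rel\subseteq\Sbehav\times\Sbehav\mid\Rel\subseteq\FunH(\Rel)\}$ for the union of all coinductive Skip-relations (the post-fixed points) and $M=\bigcap_{k\in\Nat}\FunH^k$ for the intersection of the iterates from the top. The first equality, $P=\nu(\FunH)$, is the standard Knaster--Tarski characterisation of the greatest fixed point, relying only on the monotonicity of $\FunH$ already observed before the statement. The second equality, $M=\nu(\FunH)$, is a Kleene-style characterisation whose reverse inclusion is the only part requiring something beyond monotonicity, namely downward-continuity of $\FunH$.

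For the first equality I would argue as usual. Each post-fixed point $\Rel\subseteq\FunH(\Rel)$ satisfies $\Rel\subseteq P$, so monotonicity gives $\Rel\subseteq\FunH(\Rel)\subseteq\FunH(P)$; taking the union over all such $\Rel$ yields $P\subseteq\FunH(P)$, so $P$ is itself a post-fixed point. Applying $\FunH$ and monotonicity once more gives $\FunH(P)\subseteq\FunH(\FunH(P))$, so $\FunH(P)$ is also a post-fixed point and hence $\FunH(P)\subseteq P$; combining the two inclusions gives $P=\FunH(P)$. Since every fixed point is in particular a post-fixed point, $P$ contains every fixed point and is therefore the greatest one.

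For the second equality I would first check by induction that the iterates form a decreasing chain $\FunH^0\supseteq\FunH^1\supseteq\cdots$: the base case $\FunH^1\subseteq\FunH^0=\Sbehav\times\Sbehav$ is trivial, and $\FunH^{k+1}\subseteq\FunH^k$ propagates to $\FunH^{k+2}\subseteq\FunH^{k+1}$ by monotonicity. The inclusion $\nu(\FunH)\subseteq M$ is then immediate by induction on $k$, using $\nu(\FunH)=\FunH(\nu(\FunH))$ to pass from $\nu(\FunH)\subseteq\FunH^k$ to $\nu(\FunH)\subseteq\FunH^{k+1}$.

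The reverse inclusion $M\subseteq\nu(\FunH)$ is where the real work lies, and I expect it to be the main obstacle, since it fails for general monotone operators. The plan is to prove that $\FunH$ is downward-continuous, i.e. $\FunH(\bigcap_k X_k)=\bigcap_k\FunH(X_k)$ for every decreasing chain $X_0\supseteq X_1\supseteq\cdots$; the inclusion $\subseteq$ is monotonicity. For $\supseteq$, the key observation is that the clauses defining $(\rho,\sigma)\in\FunH(\Rel)$ refer to $\Rel$ only through universally quantified membership assertions of the form $\rho_i\,\Rel\,\sigma'$, the side conditions $\Converge{\rho}{\stopA}$ and $\sync{a_i}{\sigma}$ being independent of $\Rel$. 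Hence if $(\rho,\sigma)\in\FunH(X_k)$ for every $k$, then by Lemma~\ref{lem:convergence} the shape of $\rho$ fixes which clause applies, and each relevant pair $(\rho_i,\sigma')$ lies in every $X_k$ and therefore in $\bigcap_k X_k$, whence $(\rho,\sigma)\in\FunH(\bigcap_k X_k)$; note this commutation of the universal quantifiers with the intersection needs no finiteness assumption on the set of reachable $\sigma'$. Applying downward-continuity to the decreasing chain $\{\FunH^k\}$ then gives $\FunH(M)=\FunH(\bigcap_k\FunH^k)=\bigcap_k\FunH^{k+1}=\bigcap_{k\ge1}\FunH^k=M$, the last step using that the chain is decreasing; thus $M$ is a fixed point and so $M\subseteq\nu(\FunH)$. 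Together with $\nu(\FunH)\subseteq M$ this closes the second equality.
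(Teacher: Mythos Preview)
Your proof is correct and in fact considerably more careful than the paper's own treatment: the paper does not prove this fact at all but simply invokes Tarski's theorem (with a pointer to Sangiorgi's survey) and moves on. In particular, you rightly observe that the second equality $\nu(\FunH)=\bigcap_{k}\FunH^k$ is a Kleene-style characterisation that requires downward-continuity of $\FunH$, not merely monotonicity; the paper's bare citation of Tarski glosses over this point. Your continuity argument---noting that $\Rel$ occurs in the defining clauses of $\FunH$ only in universally quantified positive position, so that membership in $\FunH(\bigcap_k X_k)$ follows from membership in every $\FunH(X_k)$---is the standard one for operators of this syntactic shape and is sound here. So your approach is not so much different from the paper's as it is an actual proof where the paper offers only a reference.
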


Then we define coinductively the following relation:

\begin{definition}[Coinductive $\skipAct$-Compliance]
\[\complyGcok{k}  ~~\ByDef~~ \FunH^k~~~~~~ \mbox{and} ~~~~~~
\complyGco ~~\ByDef~~ \nu(\FunH),\]
where $\FunH^k$ is defined as in Fact  \ref{fact:gfpH}.

A {\em client} $\rho$ is said to be {\em coinductively} $\skipAct$-{\em compliant} with a {\em server} $\sigma$, whenever $\rho \complyGco \sigma$ holds.
\end{definition}

We say ``coinductively compliant" as short for ``coinductively $\skipAct$-compliant". By the last definition we have that
$\rec x.b.x\not\!\!\complyGco\rec x.\Dual{a}.x$. We have as well that $b\not\!\!\complyGco\rec x.(\Dual{a}.x \oplus \Dual{b})$.
In fact, $\Dual{a}\,\Dual{a}\,\Dual{a}...\in\trace(\rec x.(\Dual{a}.x \oplus \Dual{b}))$ and $\neg [\,b\,$ $\!\sync\!$ $({\rec x.(\Dual{a}.x \oplus \Dual{b})})]$.

\begin{proposition}\label{prop:coinductiveChar}
$$\complyG ~~=~~ \complyGco$$
\end{proposition}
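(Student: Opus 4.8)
The plan is to prove the two inclusions $\complyG \subseteq \complyGco$ and $\complyGco \subseteq \complyG$ separately. The first is a textbook application of the coinduction principle packaged in Fact~\ref{fact:gfpH}: it suffices to show that $\complyG$ is itself a coinductive Skip-relation, i.e.\ that $\complyG \subseteq \FunH(\complyG)$, whence $\complyG \subseteq \nu(\FunH) = \complyGco$. The second inclusion instead proceeds by analysing an arbitrary synchronization trace of a coinductively compliant pair and checking the clause of Definition~\ref{def:skipcompl}.

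First I would prove $\complyG \subseteq \FunH(\complyG)$. Fixing $\rho \complyG \sigma$ and using Lemma~\ref{lem:convergence}, I distinguish the three possible shapes of $\rho$. If $\Converge\rho\stopA$ then $(\rho,\sigma)\in\FunH(\complyG)$ holds by the first disjunct. In the external- and internal-choice cases I must check two things. The continuation conditions—that $\rho_i\complyG\sigma'$ whenever $a_i.\rho_i\|\sigma\Lts{\skipAct^*\tau}\rho_i\|\sigma'$, and dually for internal choice—follow from the observation that every $\vec\chi\in\cstrace(\rho_i\|\sigma')$ yields a trace $\skipAct^{\,j}\tau\,\vec\chi\in\cstrace(\rho\|\sigma)$ obtained by prepending the skip/sync prefix; since $\rho\complyG\sigma$ makes this trace good, and a finite prefix affects neither the terminal $\checkmark$ of a finite trace nor the {\em definitely}-$\skipAct$ status of an infinite one, $\vec\chi$ is good as well. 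The $\sync{}{}$ conditions are proved by contraposition: if no branch offered the required matching action, i.e.\ the relevant $\sync{a_k}{\sigma}$ (resp.\ all the $\sync{a_i}{\sigma}$) failed, then by following, through the server's internal choices, the traces witnessing this failure, one assembles a synchronization trace of $\rho\|\sigma$ that either deadlocks with $\rho\neq\stopA$ or is {\em definitely}-$\skipAct$, contradicting $\rho\complyG\sigma$.

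Next I would prove $\complyGco\subseteq\complyG$. Given $\rho\complyGco\sigma$ and any $\vec\xi\in\cstrace(\rho\|\sigma)$, the idea is to track the pair along $\vec\xi$ using the fixed-point identity $\complyGco\subseteq\FunH(\complyGco)$. The crucial bridging step is to turn the trace-level predicate $\sync{}{}$ into an operational statement: since session behaviours are finitely branching and admit no infinite silent reduction (Lemma~\ref{lem:convergence}), and since $\sync{a_k}{\sigma}$ forces every trace of $\sigma$ to reach the matching output $\Dual{a_k}$ after a finite prefix of skippable outputs, König's lemma yields a uniform bound on the number of consecutive $\skipAct$'s before a $\tau$ becomes available. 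Hence each application of $\FunH$ contributes a block $\skipAct^*\tau$ of finite length and relates the residual pair again. Consequently a finite $\vec\xi$ cannot get stuck before the client reaches $\stopA$, so it ends in $\checkmark$, while an infinite $\vec\xi$ contains infinitely many $\tau$'s and is therefore not {\em definitely}-$\skipAct$; in both cases $\vec\xi$ is good, so $\rho\complyG\sigma$.

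I expect the main obstacle to be the two dual manipulations of the predicate $\sync{}{}$: in Part~I, constructing the bad (deadlocking or {\em definitely}-$\skipAct$) synchronization trace out of the simultaneous failure of all the $\sync{}{}$ requirements, which needs a careful selection of the server's internal choices so that no client action ever synchronizes; and in Part~II, extracting from $\sync{}{}$—which quantifies over all traces of $\sigma$—the finiteness, indeed boundedness via König's lemma, of each consecutive block of $\skipAct$-actions at the operational level. Everything else is routine structural bookkeeping on traces and residuals.
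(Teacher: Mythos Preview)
Your overall plan mirrors the paper's own sketch: show $\complyG\subseteq\FunH(\complyG)$ for one inclusion, and for the other analyse synchronization traces of a coinductively compliant pair (the paper phrases this by contraposition, you directly; the content is the same). The continuation clauses and your König-style bound on consecutive $\skipAct$'s are sound.

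The genuine gap---present both in your proposal and in the paper's sketch---is in the verification of the existential clause $\exists k\in I.\;\sync{a_k}{\sigma}$ of Definition~\ref{def:coskipcompl}(a). Your contrapositive assumes that from $\forall k\in I.\;\neg(\sync{a_k}{\sigma})$ one can ``assemble'' a single bad synchronization trace of $\rho\|\sigma$. But $\neg(\sync{a_k}{\sigma})$ only yields, for each $k$ separately, a trace of $\sigma$ along which $\Dual{a_k}$ is not reached through a prefix of outputs; these witnesses depend on $k$ and need not be combinable. Concretely, take $\rho=a_1+a_2$ and $\sigma=\Dual{a_1}\oplus\Dual{a_2}$: every element of $\cstrace(\rho\|\sigma)$ is $\tau\checkmark$, so $\rho\complyG\sigma$; yet $\neg(\sync{a_1}{\sigma})$ is witnessed by the trace $\Dual{a_2}$ and $\neg(\sync{a_2}{\sigma})$ by $\Dual{a_1}$, whence $(\rho,\sigma)\notin\FunH(\Rel)$ for every $\Rel$ and thus $\rho\not\complyGco\sigma$. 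So the step you flag as merely ``delicate'' in fact fails, and the equality as literally stated cannot be proved: the condition $\exists k\in I.\;\sync{a_k}{\sigma}$ is too strong. What the formal system in Figure~\ref{fig:complianceSystem} actually encodes (via rule $(\ScomplSumOplus)$) is the swapped quantification---for every internal choice of $\sigma$ some $a_k$ eventually matches---and that is the condition under which your argument, and the paper's, would go through.
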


\begin{proof}[sketch]
($\subseteq$)~
It sufficies to show that what stated in Definition \ref{def:coskipcompl}(\ref{def:coskipcompl-ii}) holds when we replace $\complyGco$ by $\complyG$. In case $\Converge{\rho}{\stopA}$,
we have that $\rho\complyG\sigma$ immediately by definition. Let us consider  the case
$\Converge{\rho} {\sum_{i\in I} a_i.{\rho}_i}$. Then we observe that $\rho \complyG \sigma$ and $\rho\not\Downarrow\stopA$
if and only if for any trace of $\sigma$ there exists a prefix $\vec{a}$ such that $\sigma\Lts{\vec{a}} \bigoplus_{h\in H}\Dual{a}_h.\sigma_h$ for some $H\subseteq I$ and $\rho_h\comply\sigma_h$ for all $h\in H$.
Moreover, $\exists k\in I.~\sync{a_k}{\sigma}$ holds since  $\forall k\in I.~\neg(\sync{a_k}{\sigma})$ contradicts Definition \ref{def:skipcompl}. 
If $\Converge{\rho} {\bigoplus_{i\in I} \Dual{a}_i.{\rho}_i}$ the proof proceeds in a similar way.

($\supseteq$)~ Let us assume $\rho\not\complyG\sigma$. This implies that $\neg(\Converge{\rho}{\stopA})$.
Then, by Definition \ref{def:skipcompl}, there exists $\vec{\xi}\in\cstrace(\rho \| \sigma)$ such that either $\vec{\xi}$ is finite but
 $\vec{\xi}\not=\vec{\xi}'\checkmark$ for any $\vec{\xi}'$, or 
$\vec{\xi}$ is infinite and definitely-$\skipAct$. In the first case we proceed by induction
on the lenght of the $\tau$-actions in $\vec{\xi}$ to contradict condition $ \exists k\in I.~\sync{a_k}{\sigma}$ in Definition \ref{def:coskipcompl}(\ref{def:coskipcompl-ii}).
In the infinite case, we get a contradiction to the \\
$\forall i\in I.\forall \sigma'.~[(\alpha_i.\rho_i \|\sigma \Lts{\skipAct^*\tau} \rho_i\|\sigma') ~\implies~ \rho_i\Rel \sigma']$ (for the proper $\alpha_i$) clauses in Definition \ref{def:coskipcompl}(\ref{def:coskipcompl-ii}).
\end{proof}

It is possible to show the relation $\complyG$ to be decidable. In order to do that we define a formal system that reflects the coinductive definition of the $\complyG$ relation, and whose 
derivation rules can be looked at as rules of a recursive, syntax-driven decision algorithm, where
the decision process coincides with a proof reconstruction procedure.\\

In the formal system, the assumptions in an environment are actually {\em marked} assumptions. The markings are used to prevent the possibility of getting a correct derivation for
compliance statements that allow for definitely-$\skipAct$ client$\|$server interactions.

\begin{definition}[A formal system for $\complyG$]\hfill
\label{def:formalSyst}
\begin{enumerate}[i)]
\item A marked environment $\Gamma$  is a finite set of marked assumptions of the form $(\rho' \complyG \sigma')_{\bullet}$, where $\rho',\sigma'\in\Sbehav$ and $\bullet \in \{\cmark,\xmark\}$.
\item
A judgment is an expression of the form $\Gamma\der \rho \complyG \sigma$, where 
$\Gamma$ is a marked environment. 
The axioms and rules of the system deriving judgments are in Figure \ref{fig:complianceSystem}, where the environment $\Gamma_{\cmark}$ is defined by $\Gamma_{\cmark} = \{ (\rho'\complyG\sigma')_{\cmark} | (\rho'\complyG\sigma')_{\cmark} \in \Gamma \vee (\rho'\complyG\sigma')_{\xmark}\in \Gamma \}$.
\end{enumerate}
\end{definition}
\noindent
We assume any marked environment to be {\em coherent}, that is there can be no
two assumptions with the same compliance statement and different markings in the same environment, like $(\rho \complyG \sigma)_{\cmark}$ and $(\rho \complyG \sigma)_{\xmark}$. Moreover, it will be easy to check that the derivation reconstruction procedure always produce coherent environments. 

\begin{figure}
\begin{center}
\line(1,0){450}
\end{center}
\[
\begin{array}{c@{\hspace{12mm}}c}
\prooftree
\justifies
\Gamma\der \stopA \complyG \sigma
\using(\ScomplAx)
\endprooftree &
\prooftree
\justifies
\Gamma, (\rho\complyG\sigma)_{\cmark} \der \rho\complyG\sigma 
\using(\ScomplHyp)
\endprooftree\\[8mm]
\prooftree \Gamma \der \sigma\Subst{\rec x. \sigma}{x} \complyG \sigma' \justifies
\Gamma \der \rec. \sigma \complyG  \sigma' \using (\ScomplUnfoldL) \endprooftree 
&
\prooftree \Gamma \der \sigma'  \complyG  \sigma\Subst{\rec x. \sigma}{x} \justifies
\Gamma \der \sigma'  \complyG  \rec x. \sigma  \using (\ScomplUnfoldR) \endprooftree 
\end{array}
\]
\vspace{3mm}
$$
\begin{array}{c}
\prooftree 
    \forall i\in (I\setminus K).~ \Gamma'\der \sum_{k\in K} a_k.{\rho}_k\complyG\sigma_i \qquad
    \forall j\in (K\cap I).~ \Gamma''\der \rho_j\complyG\sigma_j
\justifies
    \Gamma\der\sum_{k\in K} a_k.{\rho}_k \complyG \bigoplus_{i\in I} \Dual{a}_i.{\sigma}_i
\using(\ScomplSumOplus)
\endprooftree 
\\[8mm]
\mbox{ where } \Gamma' = \Gamma,\; (\sum_{k\in K} a_k.{\rho}_k \complyG \bigoplus_{i\in I} \Dual{a}_i.{\sigma}_i)_{\xmark}\\
\mbox{ ~~~~~~~~~~~~~~  }\Gamma'' = \Gamma_{\cmark},\; (\sum_{k\in K} a_k.{\rho}_k \complyG \bigoplus_{i\in I} \Dual{a}_i.{\sigma}_i)_{\cmark}
\end{array}
$$
$$
\begin{array}{c@{\hspace{12mm}}c}
\prooftree 
    \forall i\in I.~ \Gamma'\der \bigoplus_{k\in K} \Dual{a}_k.{\rho}_k\complyG\sigma_i 
\justifies
    \Gamma\der\bigoplus_{k\in K} \Dual{a}_k.{\rho}_k \complyG \bigoplus_{i\in I} \Dual{b}_i.{\sigma}_i
\using(\ScomplOplusOplus)
\endprooftree 
&
\prooftree 
    K\subseteq I \qquad \forall k\in K.~ \Gamma'\der \rho_k\complyG\sigma_k
\justifies
    \Gamma\der\bigoplus_{k\in K} \Dual{a}_k.{\rho}_k \complyG \sum_{i\in I} a_i.{\sigma}_i
\using(\ScomplOplusSum)
\endprooftree \\[8mm]
\mbox{ where } \Gamma' = \Gamma,\; (\bigoplus_{k\in K} \Dual{a}_k.{\rho}_k \complyG \bigoplus_{i\in I} \Dual{b}_i.{\sigma}_i)_{\xmark}
&
\mbox{ where } \Gamma' = \Gamma_{\cmark},\;(\bigoplus_{k\in K} \Dual{a}_k.{\rho}_k \complyG \sum_{i\in I} a_i.{\sigma}_i)_{\cmark}
\end{array}
$$
\caption{The formal system $\der$ for  $\complyG$}
\label{fig:complianceSystem}
\begin{center}
\line(1,0){450}
\end{center}
\end{figure}
\medskip
The intended meaning of a judgment $\Gamma\der \rho \complyG \sigma$ is
that if, for any $(\rho' \complyG \sigma')_\bullet \in \Gamma$, $\rho' \complyG \sigma'$ holds, then $\rho \complyG \sigma$ holds as well, except for some judgments for which the interaction between $\rho$ and $\sigma$ would produce definitely-$\skipAct$ syncrhonization traces. 
The use of markings rules out such a possiblity.
In fact the following Soundness and Completeness result we obtain is, as needed, for derivations with 
empty environment.

\bigskip

\begin{theorem}[Soundness and Completeness]\label{thr:completeness}
\[\rho\complyG\sigma ~~~~\Iff ~~~~ \emptyset\der \rho\complyG\sigma.\]
\end{theorem}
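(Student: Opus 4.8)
The plan is to route both implications through the coinductive characterisation of Proposition~\ref{prop:coinductiveChar}, reducing the claim to $\rho\complyGco\sigma \Iff \emptyset\der\rho\complyG\sigma$ and working with the greatest fixed point $\complyGco=\nu(\FunH)$ via Fact~\ref{fact:gfpH}. The common idea is that the rules of Figure~\ref{fig:complianceSystem} are syntax-driven: by Lemma~\ref{lem:convergence} the shape of the convergence predicate $\Converge{\rho}{\cdots}$ selects which of \ScomplSumOplus, \ScomplOplusOplus, \ScomplOplusSum\ applies, once the \ScomplUnfoldL\ and \ScomplUnfoldR\ steps absorbed into $\Converge{\rho}{\cdots}$ are collapsed, and each premise of such a rule corresponds to a single $\skipAct$- or $\tau$-step of $\rho\|\sigma$, whose chains realise the $\Lts{\skipAct^*\tau}$ moves occurring in the clauses of $\FunH$. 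The markings are the finitary device enforcing the ``not definitely-$\skipAct$'' clause of Definition~\ref{def:skipcompl}: a premise reached by skipping a server output (the branches $i\in I\setminus K$ of \ScomplSumOplus\ and every premise of \ScomplOplusOplus) carries its assumption marked $\xmark$ and preserves the existing marks, whereas a premise reached by a genuine handshake (the branches $j\in K\cap I$ of \ScomplSumOplus\ and \ScomplOplusSum) resets all marks to $\cmark$ through $\Gamma_\cmark$; since \ScomplHyp\ fires only on $\cmark$-marked assumptions, every closable cycle must cross a handshake.

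For completeness I would run the rules backwards from $\emptyset\der\rho\complyG\sigma$ as a deterministic proof-search, assuming $\rho\complyGco\sigma$. First I would establish regularity: since recursion is guarded, only finitely many session behaviours are reachable from $\rho$ and $\sigma$, so only finitely many statements $\rho'\complyG\sigma'$ can label nodes, and any non-terminating branch must revisit one of them. The invariant I would maintain is that every generated judgment $\Gamma\der\rho'\complyG\sigma'$ satisfies $\rho'\complyGco\sigma'$, with every assumption in $\Gamma$ likewise true in $\complyGco$; this is preserved by each rule because $\FunH$ maps $\complyGco$ into itself, its premises realise the $\FunH$-successor conditions on $(\rho',\sigma')$ (using that $\complyGco$ is closed under skippable server moves), and coherence of the environments is maintained throughout, as noted after Definition~\ref{def:formalSyst}. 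When a statement recurs I would close the branch by \ScomplHyp, and the point to secure is that the pending assumption is then marked $\cmark$. This is exactly where $\complyGco$ is used: its clauses guarantee a synchronising branch, $\exists k.\ \sync{a_k}{\sigma'}$ or $\forall i.\ \sync{a_i}{\sigma'}$, and $\sync{}{}$ forbids any infinite server trace that avoids the matching output forever; hence no search path consists of skips only, every loop crosses a $\Gamma_\cmark$ reset, and the recurring assumption carries $\cmark$. The search therefore terminates in a finite, coherent derivation.

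For soundness, from a finite derivation $\Delta$ of $\emptyset\der\rho\complyG\sigma$ I would verify Definition~\ref{def:skipcompl} directly by reading the synchronisation traces of $\rho\|\sigma$ off the branches of $\Delta$. Because the rules are exhaustive over their premises (all $i\in I$ and all $k\in K$), every configuration reachable from $\rho\|\sigma$ labels a node of $\Delta$, and every $\vec{\xi}\in\cstrace(\rho\|\sigma)$ is traced by following, at each node, the premise matching its next $\skipAct$ or $\tau$ step. A maximal finite $\vec{\xi}$ can only end at an \ScomplAx\ leaf, where the client has reduced to $\stopA$, hence terminates with $\checkmark$; no stuck non-$\stopA$ configuration can occur, since it would admit no rule and contradict the existence of $\Delta$. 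An infinite $\vec{\xi}$ must run through an \ScomplHyp\ leaf, closing a cycle of $\Delta$; since \ScomplHyp\ fires only on a $\cmark$-marked assumption and a mark becomes $\cmark$ only at a $\Gamma_\cmark$ reset, i.e.\ at a $\tau$-labelled premise, every such cycle carries at least one $\tau$, so unrolling it yields infinitely many $\tau$'s and $\vec{\xi}$ is not definitely-$\skipAct$. Thus all traces satisfy Definition~\ref{def:skipcompl} and $\rho\complyG\sigma$ holds.

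The main obstacle I expect lies in soundness, in matching marks to traces exactly: one must prove that a cycle closed by a $\cmark$-assumption necessarily traverses a $\Gamma_\cmark$ reset (equivalently a $\tau$-premise), and that interleaving skip- and sync-premises along a path of $\Delta$ reproduces a genuine $\vec{\xi}\in\cstrace(\rho\|\sigma)$ rather than a spurious sequence. Pinning down this correspondence so that ``$\Delta$ is finite'' excludes exactly the finite-non-$\checkmark$ and the definitely-$\skipAct$ traces forbidden by Definition~\ref{def:skipcompl}, and nothing more, is the delicate step; by comparison the completeness termination argument, resting on regularity together with the progress supplied by the $\sync{}{}$ predicate, is comparatively routine.
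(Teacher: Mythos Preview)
Your proposal is sound and would yield a correct proof, but it organises the argument differently from the paper. The paper factors the result in two layers: it first proves soundness and completeness of the \emph{unmarked} system against a weakened compliance that tolerates definitely-$\skipAct$ traces (this step reuses the machinery of \cite{BdL13} almost verbatim), and only afterwards shows that adding the $\cmark/\xmark$ discipline is exactly what excludes the definitely-$\skipAct$ interactions. You instead go directly through Proposition~\ref{prop:coinductiveChar} and treat the marks as an integral part of both directions, using the $\sync{}{}$ clause of $\FunH$ to guarantee that every proof-search loop crosses a $\Gamma_{\cmark}$ reset, and reading cycles closed by $\ScomplHyp$ as witnesses of a $\tau$ in every eventually-periodic trace. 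The paper's decomposition buys modularity and literal reuse of prior work; your route avoids introducing the auxiliary ``weak'' compliance and the intermediate unmarked system, at the price of having to carry the mark bookkeeping through the whole argument. The one place where your write-up is thin is the soundness side: the premises of $\ScomplOplusOplus$ keep the client as the full $\bigoplus_{k}\Dual{a}_k.\rho_k$ while a synchronisation trace may already have resolved it to some $\Dual{a}_k.\rho_k$, so the ``every reachable configuration labels a node of $\Delta$'' claim needs the observation that $\cstrace(\Dual{a}_k.\rho_k\|\sigma')\subseteq\cstrace(\bigoplus_k\Dual{a}_k.\rho_k\|\sigma')$; this is routine but should be made explicit.
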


The proof of the Soundness and Completeness property above can be obtained by first 
proving it for a system without markings and allowing for definitely-$\skipAct$ interaction sequences in the definition of compliance. And then by showing that such definitely-$\skipAct$ sequences are ruled out if the derivations are properly marked.
The proof of the first part can be obtained along the lines used in \cite{BdL13} for a similar system. \\

\begin{theorem}[Decidability of $\der$]\label{thr:der-decidability}\label{rem:algorithmic}
The system of Figure \ref{fig:complianceSystem} is decidable.
\end{theorem}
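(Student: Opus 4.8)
The plan is to read the rules of Figure~\ref{fig:complianceSystem} bottom-up as a recursive, syntax-driven decision procedure and to prove that this procedure terminates on every input $\emptyset\der\rho\complyG\sigma$; decidability of $\der$ then follows, and with it, by Theorem~\ref{thr:completeness}, decidability of $\complyG$. The first observation is that the system is syntax-directed. Given a goal $\Gamma\der\rho\complyG\sigma$, one may fix a deterministic strategy: if $\Converge{\rho}{\stopA}$ close with $\ScomplAx$; else if the goal already occurs in $\Gamma$ marked $\cmark$ close with $\ScomplHyp$; else if $\rho$ or $\sigma$ is a recursion apply $\ScomplUnfoldL$ or $\ScomplUnfoldR$; otherwise $\rho$ and $\sigma$ are head-normal choices (computable by Lemma~\ref{lem:convergence}) and the pair of their shapes selects exactly one of $\ScomplSumOplus$, $\ScomplOplusOplus$, $\ScomplOplusSum$. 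Since recursion in $\Sbehav$ is guarded, only finitely many consecutive unfoldings can occur before a head-normal choice is exposed, so the reconstruction never stalls on unfoldings alone.

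Second, I would establish the key finiteness lemma: the set of session behaviours reachable from $\rho$ and $\sigma$ by unfolding and by passing to the continuations that occur in the premises is finite. This is the usual consequence of the syntax being finite and recursion guarded (the ``derivatives modulo unfolding'' of a contractive term form a finite set), in the same spirit as the decidability argument for Gay--Hole subtyping exploited in \cite{BdL13}. Consequently the set of \emph{distinct} judgments $\rho'\complyG\sigma'$ that can ever appear as goals is finite; call its cardinality $N$.

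Third, I would argue termination. Each rule other than $\ScomplAx$ and $\ScomplHyp$ inserts the current goal into the environment of its premises, and the rewriting of $\Gamma$ into $\Gamma_{\cmark}$ only re-marks assumptions, never discards them; hence the underlying set of judgments occurring in the environment (ignoring markings) grows monotonically along every branch and is bounded by $N$. Therefore along any branch some goal must recur after at most $N$ applications of a choice rule. The marking discipline then fixes the outcome of a recurrence: a recurrence reached through at least one synchronizing ($\tau$) premise has flipped the relevant assumptions to $\cmark$ (via $\Gamma_{\cmark}$ in $\Gamma''$ of $\ScomplSumOplus$ and in $\Gamma'$ of $\ScomplOplusSum$), so the branch closes by $\ScomplHyp$; a recurrence reached purely through server-advancing ($\skipAct$) premises keeps the goal marked $\xmark$ (via $\Gamma'$ of $\ScomplSumOplus$ and of $\ScomplOplusOplus$), so $\ScomplHyp$ is inapplicable and, since re-applying the same rule would merely reproduce the goal, the branch fails. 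In either case the branch length is bounded (by $N$ choice-rule applications, each separated by boundedly many unfoldings); as the branching factor is finite (internal and external choices are finitary), the whole reconstruction tree is finite and the procedure halts.

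The main obstacle is precisely this termination analysis around the markings, rather than the routine syntax-direction. The delicate point is to show that freezing markings on a $\skipAct$-step while flipping every $\xmark$ to $\cmark$ on a $\tau$-step achieves two things at once: it closes genuine recursive loops that pass through a real handshake, and it rejects exactly the loops that would correspond to definitely-$\skipAct$ synchronization traces, all without ever permitting an infinite branch. Finiteness of the reachable behaviours bounds \emph{how many} distinct goals can arise; the marking discipline determines \emph{which} recurrences close and which fail. Combining the two yields a terminating, effective algorithm whose successful runs are exactly the derivations of $\der$, so that $\der$ is decidable.
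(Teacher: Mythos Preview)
Your proposal is correct and follows essentially the same approach as the paper: a bottom-up, syntax-driven proof search that terminates because only finitely many distinct compliance pairs (a ``subformula property'' for session behaviours, as in \cite{BdL13,GH05,PierceSangiorgi95}) can arise, so every branch must eventually hit $(\ScomplAx)$, $(\ScomplHyp)$, or a recurring goal whose marking is $\xmark$ (failure). Your termination analysis via the monotone growth of the underlying environment and the $\cmark$/$\xmark$ flipping discipline is in fact more explicit than the paper's own sketch, which simply lists the four possible leaf outcomes and defers the details to the analogous argument in \cite{BdL13}.
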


\begin{proof}[Sketch]
The system in Figure \ref{fig:complianceSystem} satisfies a sort of subformula property. As a matter of fact the behaviours used in the premises of any rule are
subterms (for a suitable and natural definition of subterm) of those used in the premises of the rule. This implies the system to be algorithmic:
a decision procedure consists in a breadth first searching for a proof of a judgment in a bottom-up, syntax-driven, way. 
Such proof reconstruction ends since, for any possible branch of the proof,  we eventually find either $(1)$ an axiom $(\ScomplAx)$ or $(2)$ an hypothesis $(\ScomplHyp)$ or $(3)$ a wrong hypothesis, that is a judgment of the form $\Gamma, (\rho\complyG \sigma)_{\xmark} \der \rho\complyG\sigma$ or, by the subformula property, $(4)$ a previously encountered judgment. 
In case $(3)$ or $(4)$ are encountered along a branch, the derivation reconstruction algorithm
fails. In particular, the presence of $(3)$ denotes the possibility of a definitely-$\skipAct$ synchronization trace.\\
Notice that the proof reconstruction is also deterministic, but possibly for the choice
of the order in which $(\ScomplUnfoldL)$ and $(\ScomplUnfoldR)$ occur along a branch in the proof tree, which is immaterial as they have to be consecutive.
The complete proof develops along the same lines used for a similar proof in \cite{BdL13}, where we resort to a similar argument used in \cite{PierceSangiorgi95} and thereafter in \cite{GH05}.
\end{proof}

Decidability of compliance is now easily got  as a corollary.

\medskip

\begin{corollary}\label{cor:skip-compl-decidability}
The relation $\complyG$ is decidable.
\end{corollary}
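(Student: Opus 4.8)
The plan is to obtain decidability of $\complyG$ as an immediate consequence of two results already established in the excerpt: the Soundness and Completeness characterisation (Theorem~\ref{thr:completeness}) and the decidability of the formal system $\der$ (Theorem~\ref{thr:der-decidability}). First I would note that, given any two session behaviours $\rho,\sigma\in\Sbehav$, the judgment $\emptyset\der\rho\complyG\sigma$ is effectively constructible from $\rho$ and $\sigma$ — it is simply the pairing of the two given expressions with the empty marked environment. Thus the question ``does $\rho\complyG\sigma$ hold?'' reduces, by a trivially computable reduction, to the question ``is $\emptyset\der\rho\complyG\sigma$ derivable in the system of Figure~\ref{fig:complianceSystem}?''.

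Next I would invoke Theorem~\ref{thr:completeness}, which yields the equivalence $\rho\complyG\sigma \Iff \emptyset\der\rho\complyG\sigma$, so that the two questions have the same answer. Finally, Theorem~\ref{thr:der-decidability} guarantees that derivability in $\der$ is decidable: the bottom-up, syntax-driven proof-reconstruction procedure described in its proof terminates on every input and correctly reports whether $\emptyset\der\rho\complyG\sigma$ holds. Composing the reduction with this decision procedure gives an algorithm that decides $\complyG$, which is exactly the claim.

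There is no genuine obstacle remaining at this stage, since all the mathematical content has already been discharged in the two cited theorems; the corollary is a purely logical composition. The only two points deserving a line of care are that the reduction to the derivability predicate be effective — immediate, as forming a judgment with empty environment is trivially computable — and that Theorem~\ref{thr:der-decidability} provide a \emph{total} decision procedure (terminating on all inputs) rather than a mere semi-decision procedure. The latter is precisely what the termination argument in the proof of Theorem~\ref{thr:der-decidability} ensures, since every branch of the search necessarily reaches an axiom $(\ScomplAx)$, a hypothesis $(\ScomplHyp)$, a wrong hypothesis, or a previously encountered judgment, so that the proof-reconstruction always halts with a definite answer.
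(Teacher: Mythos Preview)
Your proposal is correct and follows exactly the same approach as the paper, which simply cites Theorems~\ref{thr:completeness} and~\ref{thr:der-decidability}. Your version merely spells out in more detail why composing the soundness--completeness equivalence with the decidability of $\der$ yields a decision procedure for $\complyG$.
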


\begin{proof} By Theorems \ref{thr:completeness} and \ref{rem:algorithmic}.
\end{proof}


\medskip
\noindent

\Comment{
\begin{example}\label{ex:GH-system}
Given $\rho = \rec x.(a.d+b.x)$ and $\sigma = \rec x.\Dual{c}.\Dual{b}.x$, we can derive:
$;\cmark \der \rho \complyG \sigma$ as follows:
\[
		\prooftree
			\prooftree
				\prooftree
					\prooftree
                                   \prooftree
					\justifies
						a.d + b.\rho \complyG \Dual{c}.\Dual{b}.\sigma,~ a.d + b.\rho \complyG \Dual{b}.\sigma ;\cmark  \der a.d + b.\rho \complyG \Dual{c}.\Dual{b}.\sigma
					\using (\ScomplHyp)
				\endprooftree
			\justifies
				a.d + b.\rho \complyG \Dual{c}.\Dual{b}.\sigma,~ a.d + b.\rho \complyG \Dual{b}.\sigma  ;\cmark \der  \rho \complyG \sigma 
			\using (\ScomplUnfoldLR)
			\endprooftree
		\justifies
			a.d + b.\rho \complyG \Dual{c}.\Dual{b}.\sigma  ;\xmark \der a.d + b.\rho \complyG \Dual{b}.\sigma
		\using (\ScomplSumOplus)
		\endprooftree
          \justifies
   ;\cmark  \der a.d + b.\rho \complyG \Dual{c}.\Dual{b}.\sigma
		\using (\ScomplSumOplus)
	\endprooftree
\justifies
	;\cmark \der \rho\complyG\sigma
\using (\ScomplUnfoldLR)
\endprooftree
\]
\end{example}
}

In the following we provide two simple example of application of the syntax-driven derivation reconstruction algorithm described in the proof of Theorem \ref{thr:der-decidability}. In the first example the algorithm fails because one of the possible interaction sequences for the client $b$ and the server $\rec x.(\Dual{a}.x \oplus \Dual{b})$ 
interaction would be definitely-$\skipAct$. In the second one the algorithm succeeds and produce the right derivation.
Notice how the failure in the first example is due to the fact that along the leftmost branch we 
encounter a {\em wrong hypothesis}, that is a judgment of the form $\Gamma, (\rho\complyG \sigma)_{\xmark} \der \rho\complyG\sigma$.

\begin{example}\label{counter-ex}
Given  $\sigma = \rec x.(\Dual{a}.x \oplus \Dual{b})$, the reconstruction algorithm for
$\der b \complyG \sigma$ produces the following result:
\[
				\prooftree
					\prooftree
\prooftree
	\begin{array}{c}
       \mbox{\em {\bf FAIL!}}\\
 (b \complyG  \Dual{a}.\sigma \oplus \Dual{b})_{\xmark}   \der b \complyG  \Dual{a}.\sigma \oplus \Dual{b}
       \end{array}
				\justifies
				 	(b \complyG  \Dual{a}.\sigma \oplus \Dual{b})_{\xmark} ~\der~ b \complyG \rec x.(\Dual{a}.x \oplus \Dual{b})
                          \using (\ScomplUnfoldR)
                       \endprooftree
                                                \qquad
                                                 \prooftree
                                                     \justifies
                                                     (b \complyG  \Dual{a}.\sigma \oplus \Dual{b})_{\cmark}  ~\der~ \stopA\complyG\stopA
                                                      \using (\ScomplAx)
                                                   \endprooftree
          \justifies
   ~\der~ b \complyG  \Dual{a}.\sigma \oplus \Dual{b}
		\using (\ScomplSumOplus)
		\endprooftree
	\justifies
		~\der~ b \complyG \rec x.(\Dual{a}.x \oplus \Dual{b})
	\using (\ScomplUnfoldR)
	\endprooftree
\]
\end{example}

\Comment{
\begin{example}\label{ex-inf-skip}
Given $\rho = \rec x.b.x$ and  $\sigma = \rec x.\Dual{a}.\Dual{a}.\Dual{b}.x$, we can derive:
$;\cmark \der \rho \complyG \sigma$ as follows:
\[
\prooftree
\prooftree
            \prooftree
                       \prooftree
                             \prooftree
                                    \prooftree
                                         \prooftree
                                         \justifies
                                        b.\rho  \complyG  \Dual{a}.\Dual{b}.\Dual{a}.\sigma, b.\rho \complyG \Dual{a}.\sigma  ;\xmark \der b.\rho  \complyG  \Dual{a}.\Dual{b}.\Dual{a}.\sigma
                                            \using (\ScomplHyp)
                                           \endprooftree
                                           \justifies
                                          b.\rho  \complyG  \Dual{a}.\Dual{b}.\Dual{a}.\sigma, b.\rho \complyG \Dual{a}.\sigma  ;\xmark \der b.\rho  \complyG  \rec x.\Dual{a}.\Dual{b}.\Dual{a}.x
                                          \using (\ScomplUnfoldR)
                                        \endprooftree
                                      \justifies
                            b.\rho  \complyG  \Dual{a}.\Dual{b}.\Dual{a}.\sigma;\cmark \der b.\rho \complyG  \Dual{a}.\sigma
                                       \using (\ScomplSumOplus)
                                       \endprooftree
                              \justifies
                             b.\rho  \complyG  \Dual{a}.\Dual{b}.\Dual{a}.\sigma  ;\cmark \der \rec x.b.x \complyG \Dual{a}.\sigma
                              \using (\ScomplUnfoldL)
                            \endprooftree
				 \justifies
				 b.\rho  \complyG  \Dual{a}.\Dual{b}.\Dual{a}.\sigma  ;\xmark \der b.\rho \complyG \Dual{b}.\Dual{a}.\sigma
                            \using (\ScomplSumOplus)
                        \endprooftree
               \justifies
                  ;\cmark  \der b.\rho  \complyG  \Dual{a}.\Dual{b}.\Dual{a}.\sigma
\using  (\ScomplSumOplus)
\endprooftree
\justifies
;\cmark  \der \rec x.b.x  \complyG  \rec x.\Dual{a}.\Dual{b}.\Dual{a}.x
\using (\ScomplUnfoldL) (\ScomplUnfoldR)
\endprooftree
\]
\end{example}

\begin{example}\label{ex-inf-skipNew}
Given $\rho = \rec x.b.x$ and  $\sigma = \rec x.\Dual{a}.\Dual{a}.\Dual{b}.x$, we can derive:
$\der \rho \complyG \sigma$ as follows:
\[
\prooftree
\prooftree
            \prooftree
                       \prooftree
                             \prooftree
                                    \prooftree
                                         \prooftree
                                         \justifies
                                       (b.\rho  \complyG  \Dual{a}.\Dual{b}.\Dual{a}.\sigma)_{\cmark}, (b.\rho \complyG \Dual{b}.\Dual{a}.\sigma)_{\cmark}, (b.\rho \complyG \Dual{a}.\sigma)_{\xmark} ~\der~ b.\rho  \complyG  \Dual{a}.\Dual{b}.\Dual{a}.\sigma
                                            \using (\ScomplHyp)
                                           \endprooftree
                                           \justifies
                                           (b.\rho  \complyG  \Dual{a}.\Dual{b}.\Dual{a}.\sigma)_{\cmark}, (b.\rho \complyG \Dual{b}.\Dual{a}.\sigma)_{\cmark}, (b.\rho \complyG \Dual{a}.\sigma)_{\xmark} ~\der~ b.\rho  \complyG  \rec x.\Dual{a}.\Dual{b}.\Dual{a}.x
                                          \using (\ScomplUnfoldR)
                                        \endprooftree
                                      \justifies
                             (b.\rho  \complyG  \Dual{a}.\Dual{b}.\Dual{a}.\sigma)_{\cmark}, (b.\rho \complyG \Dual{b}.\Dual{a}.\sigma)_{\cmark} ~\der~ b.\rho \complyG  \Dual{a}.\sigma
                                       \using (\ScomplSumOplus)
                                       \endprooftree
                              \justifies
                             (b.\rho  \complyG  \Dual{a}.\Dual{b}.\Dual{a}.\sigma)_{\cmark}, (b.\rho \complyG \Dual{b}.\Dual{a}.\sigma)_{\cmark}  ~\der~ \rec x.b.x \complyG \Dual{a}.\sigma
                              \using (\ScomplUnfoldL)
                            \endprooftree
				 \justifies
				 (b.\rho  \complyG  \Dual{a}.\Dual{b}.\Dual{a}.\sigma)_{\xmark} ~\der~ b.\rho \complyG \Dual{b}.\Dual{a}.\sigma
                            \using (\ScomplSumOplus)
                        \endprooftree
               \justifies
                 ~\der~ b.\rho  \complyG  \Dual{a}.\Dual{b}.\Dual{a}.\sigma
\using  (\ScomplSumOplus)
\endprooftree
\justifies
~\der~ \rec x.b.x  \complyG  \rec x.\Dual{a}.\Dual{b}.\Dual{a}.x
\using (\ScomplUnfoldL) (\ScomplUnfoldR)
\endprooftree
\]
\end{example}
}

\begin{example}\label{ex-inf-skipNew2}
Given $\rho = \rec x.b.x$ and  $\sigma = \rec y.(\Dual{a}.\Dual{b}.\Dual{a}.y \oplus \Dual{b}.\rec x.\Dual{b}.x)$, the reconstruction algorithm for
$\der \rho \complyG \sigma$ produces the following result:
\[
\prooftree
\prooftree
\prooftree
\prooftree 
            \prooftree
                       \prooftree
                             \prooftree
                                    \prooftree
                                         \prooftree
                                         \justifies
                                       (\gamma_1)_{\cmark}, (\gamma_2)_{\cmark}, (\gamma_3)_{\cmark},(\gamma_4)_{\xmark} ~\der~  \gamma_2 %
                                            \using (\ScomplHyp)
                                           \endprooftree
                                           \justifies
                                           (\gamma_1)_{\cmark}, (\gamma_2)_{\cmark}, (\gamma_3)_{\cmark},(\gamma_4)_{\xmark} ~\der~ b.\rho  \complyG  \sigma%
                                          \using (\ScomplUnfoldR)
                                        \endprooftree
                                      \justifies
                              (\gamma_1)_{\cmark}, (\gamma_2)_{\cmark}, (\gamma_3)_{\cmark} ~\der~ b.\rho \complyG  \Dual{a}.\sigma%
                                       \using (\ScomplSumOplus)
                                       \endprooftree
                              \justifies
                             (\gamma_1)_{\cmark}, (\gamma_2)_{\cmark}, (\gamma_3)_{\cmark} ~\der~ \rho \complyG \Dual{a}.\sigma%
                              \using (\ScomplUnfoldL)
                            \endprooftree
				 \justifies
				 (\gamma_1)_{\xmark}, (\gamma_2)_{\xmark}  ~\der~ b.\rho \complyG \Dual{b}.\Dual{a}.\sigma %
                            \using (\ScomplSumOplus)
                        \endprooftree 
                        \qquad
                     \prooftree        
                          \prooftree
                                \prooftree
                                    \prooftree
                                      \justifies
                                  (\gamma_1)_{\cmark}, (\gamma_2)_{\cmark}, (\gamma_5)_{\cmark} ~\der~\gamma_5
                                       \using (\ScomplHyp)
                                     \endprooftree
                                \justifies
                              (\gamma_1)_{\cmark}, (\gamma_2)_{\cmark}, (\gamma_5)_{\cmark} ~\der~ \rho  \complyG  \rho
                                 \using(\ScomplUnfoldLR)
                                \endprooftree
                             \justifies
                        (\gamma_1)_{\xmark}, (\gamma_2)_{\xmark} ~\der~ b.\rho  \complyG  \Dual{b}.\rho
                          \using (\ScomplSumOplus)
                          \endprooftree
                        \justifies
           (\gamma_1)_{\xmark}, (\gamma_2)_{\xmark} ~\der~ \rho  \complyG  \rec x.\Dual{b}.x
                          \using(\ScomplUnfoldLR)
                     \endprooftree   
               \justifies
                 (\gamma_1)_{\xmark} ~\der~ b.\rho  \complyG  \Dual{a}.\Dual{b}.\Dual{a}.\sigma \oplus \Dual{b}.\rec x.\Dual{b}.x) %
\using  (\ScomplSumOplus)
\endprooftree
\justifies
(\gamma_1)_{\xmark}  ~\der~ b.\rho  \complyG  \sigma %
\using (\ScomplUnfoldR)
\endprooftree
\justifies
~\der~ b.\rho  \complyG  \Dual{c}.\sigma %
\using (\ScomplSumOplus)
\endprooftree
\justifies
~\der~ \rec x.b.x  \complyG  \Dual{c}.\rec y.(\Dual{a}.\Dual{b}.\Dual{a}.y \oplus \Dual{b}.\rec x.\Dual{b}.x)
\using (\ScomplUnfoldL)
\endprooftree
\]
$
\begin{array}{ll@{\hspace{12mm}}l}
\mbox{where } & \gamma_1 = b.\rho  \complyG  \Dual{c}.\sigma 
                    & \gamma_4 = b.\rho \complyG  \Dual{a}.\sigma\\
    & \gamma_2 =b.\rho  \complyG  \Dual{a}.\Dual{b}.\Dual{a}.\sigma \oplus \Dual{b}.\rec x.\Dual{b}.x & \gamma_5 = b.\rho  \complyG  \Dual{b}.\rho \\
    &  \gamma_3 = b.\rho \complyG \Dual{b}.\Dual{a}.\sigma
\end{array}
$
\end{example}

\section{The $\skipAct$-subbehaviour relation}
\label{sect:skipSub}

As mentioned in the Introduction, in the theory of contracts the compliance relation induces a preorder $\preceq$. The relation $\sigma\preceq \sigma'$ holds whenever, for any client $\rho$,  
if $\rho\comply\sigma$ then $\rho\comply\sigma'$ . 

If $\sigma$, $\sigma'$ and $\rho$ are required to be in $\Sbehav$ then this relation,
which we call {\em subbehaviour relation} (dubbed $\preceq_s$ in \cite{BdL10}),
 coincides with the testing must-preorder \cite{DeNicolaH83}, which is not the
case if arbitrary contracts are considered (see \cite{BH13}). Here we relativize the definition of the subbehavior relation to the $\complyG$ relation
studied in the previous section, obtaining a new relation, which we call {\em  $\skipAct$-subbehaviour} and dub $\preceqG$.


\begin{definition}[$\skipAct$-Subbehaviour]\label{def:skip-subbehaviour}
~\\
Over $\Sbehav$ it is defined the binary relation $\sigma\preceqG \sigma'$ by
\[\sigma\preceqG \sigma' ~~\Iff~~\forall \rho\in\Sbehav. ~[\rho\complyG\sigma \implies \rho\complyG\sigma'].\]
\end{definition}


\begin{remark}{\em 
It is not difficult to check that $\preceqG \not\subseteq\preceq$ by means of the following easy counterexample. We have that $a\preceqG\Dual{c}.a$. In fact all the possible $\skipAct$-compliant clients of $a$ are $\{\stopA,\Dual{a}\}$, which are trivially also $\skipAct$-compliant with the server $\Dual{c}.a$ by skipping the action
$\Dual{c}$. Without the possibilty of skipping such an action,
we have that $\Dual{a}$ is not a client of $\Dual{c}.a$ anymore, whereas it is still so of $a$.
That is $a\not\preceq\Dual{c}.a$.\\
For what concern the opposite inclusion, we conjecture it to hold. Of course the proof would not be immediate. 
By Remark \ref{rem:strongComplInclWeak}, we know that $\comply \; \subset \; \complyG$, but this fact doesn't imply that $\preceq$ is included
in $\preceqG$, because $\sigma\preceqG \sigma'$ depends on a negative occurrence of the hypothesis $\rho\complyG\sigma$.
}\end{remark}

\Comment{
\marginpar{il rem. contiene solo abbozzi}
\begin{remark}{\em Are $\preceq$ and $\preceqG$ related w.r.t. set inclusion?
By Remark \ref{rem:strongComplInclWeak} we know that $\comply \; \subset \; \complyG$. This fact doesn't imply that $\preceq$ is included
in $\preceqG$, because $\sigma\preceqG \sigma'$ depends on a negative occurrence of the hypothesis $\rho\complyG\sigma$.

Behaviors which are suffix one of the other are not related w.r.t. $\preceqG$ in general. E.g. $\Dual{b}\not\preceqG \Dual{a}.\Dual{b}$ because
$a.c + b \complyG \Dual{b}$ but $a.c + b \not\complyG  \Dual{a}.\Dual{b}$. Also $\Dual{a}.\Dual{b} \not\preceqG \Dual{b}$ because
$a \complyG\Dual{a}.\Dual{b}$ but $a \not\complyG \Dual{b}$. Note that, however, $\Dual{b}\not\preceq \Dual{a}.\Dual{b}$ and $\Dual{a}.\Dual{b} \not\preceq \Dual{b}$.

In the case of behaviours which are one the prefix of the other the relations 
$\preceq$ and $\preceqG$ seem to coincide: $\Dual{a} \preceqG \Dual{a}.\Dual{b}$, but also
$\Dual{a} \preceq \Dual{a}.\Dual{b}$. 
Also the covariant law of $\oplus$ seems to be preserved from $\preceq$ to $\preceqG$: $\Dual{a} \oplus \Dual{b} \preceqG \Dual{a}$.

A possibility (and a conjecture) is that, in spite servers have more clients under $\complyG$ than under $\comply$, the preorders $\preceq$ and $\preceqG$ coincide.
This is the case of $\Dual{a}\oplus\Dual{b}.\Dual{c} \preceqG \Dual{a}$ which also holds w.r.t. $\preceq$, but $a+c \complyG \Dual{a}\oplus\Dual{b}.\Dual{c}$ (and $a+c\complyG \Dual{a}$) although  $a+c \not\comply \Dual{a}\oplus\Dual{b}.\Dual{c}$.
}\end{remark}
}

\subsection{{\em Duals as minima} property and decidability of $\preceqG$}
We proceed now towards the proof of decidability of the $\skipAct$-subbehaviour relation.
This will be obtained as a corollary of the property that the dual of a session-behaviour is actually the minimum among its servers
w.r.t. $\preceqG$. 
For any theory of  subcontracts this {\em duals as minima} result is quite relevant,
since the possibility of implementing contract-based query engines relies on it.
This is well explained in \cite{Padovani10} in the paragraph that we quote below.
\noindent
\begin{quote}
\small 
Formal notions of compliance and subcontract relation may be used for implementing
contract-based query engines. The query for services that satisfy
$\rho$ is answered with the set ${\cal Q}_1 (\rho) = \{\sigma\mid \rho\comply\sigma\}$. The complexity of running
this query grows with the number of services stored in the repository. A better
strategy is to compute the dual contract of $\rho$, denoted by $\rho^\perp$ [$\Dual{\rho}$ {\em in our context}], which represents
the canonical service satisfying $\rho$ (that is $\rho\comply\rho^\perp$) and then answering the
query with the set ${\cal Q}_2  (\rho) = \{\sigma\mid \sigma\preceq\rho^\perp\}$. If $\rho^\perp$ is the $\preceq$-smallest service that
satisfies $\rho$, we have ${\cal Q}_1 (\rho) = {\cal Q}_2  (\rho)$, namely we are guaranteed that no service
is mistakenly excluded. The advantage of this approach is that $\preceq$ can be precomputed
when services are registered in the repository, and the query engine
needs only scan through the $\preceq$-minimal contracts. \\
({\em L.Padovani} - \cite{Padovani10}, Sect.1)
\end{quote}

\noindent
The minimum property of dual behaviours can be proved using the following property: \vspace{-2mm}
\begin{equation}\label{eq:minAux}
\rho\complyG\gamma \And \Dual{\gamma}\complyG\sigma ~~\implies~~ \rho\complyG\sigma
\end{equation}
This property, however, is not easy to establish in presence of skipped actions, as exemplified in the following.
It is immediate to check that:
\vspace{-2mm}
$$a.d\complyG\Dual{b}.\Dual{b}.\Dual{a}.\Dual{d}~~~~\And ~~~~b.b.a.d\complyG\Dual{a}.\Dual{b}.\Dual{a}.\Dual{b}.\Dual{a}.\Dual{d}$$
Each $b$ in $b.b.a.d$ skips an $\Dual{a}$ before synchronizing with its dual $\Dual{b}$,
whereas the action that synchronizes with the $a$ in $b.b.a.d$ is the last $\Dual{a}$ of 
$\Dual{a}.\Dual{b}.\Dual{a}.\Dual{b}.\Dual{a}.\Dual{d}$.
Now, the $a$ in $a.d$ synchronizes after skipping the two $\Dual{b}$'s corresponding to the 
first two $b$'s of $b.b.a.d$. The action in $\Dual{a}.\Dual{b}.\Dual{a}.\Dual{b}.\Dual{a}.\Dual{d}$ synchronizing with the $a$ in $a.d$, however, it is not the last $\Dual{a}$ of  $\Dual{a}.\Dual{b}.\Dual{a}.\Dual{b}.\Dual{a}.\Dual{d}$, but actually the first one.

This fact, fortunately, does not cause any problem for
$a.d\complyG\Dual{a}.\Dual{b}.\Dual{a}.\Dual{b}.\Dual{a}.\Dual{d}$
since the $d$ in $a.d$ synchronizes with the $\Dual{d}$ of $\Dual{a}.\Dual{b}.\Dual{a}.\Dual{b}.\Dual{a}.\Dual{d}$ by skipping all the actions $\Dual{b}.\Dual{a}.\Dual{b}.\Dual{a}$ between the first $\Dual{a}$ and $\Dual{d}$. 
The presence of cases like these require to be carefully handled when proving property 
(\ref{eq:minAux}) that is otherwise similar to the analogous facts in \cite{BdL10,BdL13}.

To ease the proof we first consider an equivalent formulation of the $\skipAct$-compliance relation.
We introduce a relation $\substr$ between sequences of actions, such that $a_1\ldots a_n \substr b_1\ldots b_m$ holds whenever any $a_i$ (going from 
left to right) coincides with some $b_j$, provided that all the elements between the element $b_h$ coinciding with $a_{i-1}$ and $b_j$ are distinct from $b_j$. For instance, $bbad \substr abababd$  and $ad \substr abababd$, whereas $ad \not\substr bbaa$.

\begin{definition}[The $\substr$ relation.]\hfill
\begin{enumerate}[i)]
\item
The binary relation $\substr\;\,\subseteq\Names^+\!\times\Names^+$ on finite and non empty sequences of input actions  is inductively defined as follows.\\
Let $\vec{a},\vec{b}\in\Names^+$. 
\begin{itemize}
\item
$b \substr a_1\ldots a_kb  ~~\ByDef~~ k\geq 0 \And b \not= a_1,\ldots, a_k$
\vspace{2mm}
\item
$b\,\vec{a} \substr a_1\ldots a_k b\,\vec{b}~~\ByDef~~\vec{a}\substr \vec{b} \And k\geq 0 \And b \not= a_1,\ldots, a_k$
\end{itemize}
\vspace{2mm}
\item
The above relation is naturally extended to $\Names^\infty\times\Names^\infty$ and to $\Names^+\times\Names^\infty$
\end{enumerate}
\end{definition}

The relation $\substr$ will be used in the alternative coinductive $\skipAct$-compliance provided in
Lemma \ref{lem:AlternativacoinductiveSkipCompliance} below. It will be used to represent, on the left-hand side, the
synchronizing actions of the client and on the right-hand side the corresponding actions of the server, possibly preceded by a finite number of {\em skipped} actions. The relation is extended to  $\Names^\infty\times\Names^\infty$ since a client can be $\skipAct$-compliant with a server even without ever terminating. It is extended to $\Names^+\times\Names^\infty$ since a client can
succesfully terminate even if its server could be able to going on indefinitely. 

The following property holds for $\substr$.
\begin{lemma}
\label{lem:transSubstr}
The relation $\substr$ is transitive.
\end{lemma}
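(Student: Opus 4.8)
The plan is to prove transitivity by induction on the length $|\vec{a}|$ of the left-hand sequence (equivalently, on the derivation of $\vec{a}\substr\vec{b}$), with $\vec{b}$ and $\vec{c}$ universally quantified so that the induction hypothesis can be reapplied to the residual sequences produced along the way. That is, I would prove: for all $\vec{b},\vec{c}\in\Names^+$, if $\vec{a}\substr\vec{b}$ and $\vec{b}\substr\vec{c}$ then $\vec{a}\substr\vec{c}$. The base case $\vec{a}=a$ (a single action) reduces to the first defining clause: $a\substr\vec{b}$ forces $\vec{b}$ to end with $a$ preceded by an $a$-free block, so $a$ occurs in $\vec{b}$; then $\vec{b}\substr\vec{c}$ transports every action of $\vec{b}$ — and in particular that occurrence of $a$ — into $\vec{c}$ preserving the left-to-right order and the distinctness side-conditions, which yields $a\substr\vec{c}$.

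For the inductive step I would write $\vec{a}=a_1\vec{a}'$. From $\vec{a}\substr\vec{b}$ the second defining clause supplies a decomposition $\vec{b}=\vec{u}\,a_1\,\vec{b}'$ in which $\vec{u}$ is free of $a_1$ (so that $a_1$ is matched at its \emph{first} occurrence in $\vec{b}$) and $\vec{a}'\substr\vec{b}'$. Feeding this shape of $\vec{b}$ into the second hypothesis $\vec{b}\substr\vec{c}$, the matching of $\vec{b}$ against $\vec{c}$ first consumes $\vec{u}$, then matches this $a_1$ at some position of $\vec{c}$, and finally matches $\vec{b}'$ against the corresponding residual $\vec{c}''$, giving $\vec{b}'\substr\vec{c}''$. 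Applying the induction hypothesis to $\vec{a}'\substr\vec{b}'$ and $\vec{b}'\substr\vec{c}''$ delivers $\vec{a}'\substr\vec{c}''$, and prefixing the matched $a_1$ should then close the step with $\vec{a}\substr\vec{c}$.

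The delicate point — exactly the one the authors flag with the example $a.d\complyG\Dual{a}.\Dual{b}.\Dual{a}.\Dual{b}.\Dual{a}.\Dual{d}$ — is the alignment of the two greedy matchings. The occurrence of $a_1$ with which $\vec{a}$ must synchronise is the \emph{leftmost} $a_1$ of $\vec{c}$, whereas the matching induced by $\vec{b}\substr\vec{c}$ may place $a_1$ strictly later, since the $a_1$-free block $\vec{u}$ can still skip over copies of $a_1$ occurring in $\vec{c}$. Hence I expect the real work to be an auxiliary lemma reconciling the residual $\vec{c}'$ determined by the leftmost occurrence of $a_1$ in $\vec{c}$ with the residual $\vec{c}''$ dictated by $\vec{b}$: one must track matched positions and verify that passing to the leftmost occurrence does not destroy the embedding of the tail $\vec{a}'$, nor the ``every skipped action is distinct from the matched action'' conditions, under composition. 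This position-bookkeeping — making precise that the tail still fits once the first synchronisation is realigned to the earliest admissible action of $\vec{c}$ — is where the care is needed; once it is settled, the induction closes routinely and Lemma \ref{lem:transSubstr} follows.
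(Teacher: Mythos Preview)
The paper states this lemma without proof, so there is nothing to compare against directly. Your inductive plan is the natural one and you have correctly located the difficulty: the first occurrence of $a_1$ in $\vec{c}$ need not be the one reached through $\vec{b}$. But you underestimate it. The gap is already in your base case, where you assert that $\vec{b}\substr\vec{c}$ ``preserves the distinctness side-conditions'' needed for $a\substr\vec{c}$. It does not: $a\substr\vec{c}$ demands that \emph{every} symbol of $\vec{c}$ preceding its final $a$ be different from $a$, whereas $\vec{b}\substr\vec{c}$ only forces each skipped block of $\vec{c}$ to avoid the particular symbol of $\vec{b}$ being matched there --- and those symbols are all different from $a$ by hypothesis, so the skipped blocks may well contain $a$.

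Concretely, with $a\neq b$ take $\vec{x}=a$, $\vec{y}=ba$, $\vec{z}=aba$. Then $a\substr ba$ holds, and $ba\substr aba$ holds (skip the leading $a$ to match $b$, then $a\substr a$), yet $a\not\substr aba$ because the prefix $ab$ of $aba$ contains $a$. So $\substr$, exactly as written in the paper, is \emph{not} transitive, and no realignment lemma will close your induction. The result does hold --- and your outline then goes through --- if the base clause is read as $b\substr a_1\ldots a_k\,b\,\vec{c}$ with an arbitrary (possibly empty) trailing $\vec{c}$; this reading is consistent with the paper's examples and with the declared extension to $\Names^+\times\Names^\infty$. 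Under it the ``auxiliary lemma'' you anticipate is just the easy left-extension property that $\vec{a}'\substr\vec{c}''$ implies $\vec{a}'\substr\vec{v}\,\vec{c}''$ for every $\vec{v}$, after which your inductive step closes exactly as you sketch.
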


\begin{lemma}[Alternative coinductive $\skipAct$-compliance]\label{lem:AlternativacoinductiveSkipCompliance}

$$\nu(\FunH) = \nu(\FunJ)$$
where 
the operator 
$\FunJ: {\cal P}(\Sbehav\times \Sbehav)\rightarrow  {\cal P}(\Sbehav\times \Sbehav)$
is defined as follows: 
 for any relation $\Rel\subseteq \Sbehav\times \Sbehav$, 
$(\rho,\sigma) \in {\FunJ}(\Rel)$ if and only if  either $\Converge{\rho}{\stopA}$ or,
whenever $[\notConverge{\rho} {\sum_{i\in I} a_i.{\rho}_i} ~\&~ \notConverge{\sigma} {\sum_{j\in J} a_j.{\sigma}_j} ]$, the following statements hold: 
\vspace {1mm}


\begin{enumerate}[a)]
\item $\Converge{\rho} {\sum_{i\in I} a_i.{\rho}_i} ~~\Rightarrow~~
           \left\{\begin{array}{l}
           \{\Dual{\vec{b}} \mid |\vec{b}|>0, ~\sigma\Lts{\Dual{\vec{b}}} \} \not= \emptyset \\[3mm]
           \forall \Dual{\vec{b}}{~s.t.~} \sigma\Lts{\Dual{\vec{b}}}\sigma'.~ 
         ~\exists \vec{a}\substr\vec{b}.~~ (\rho\LtsM{\vec{a}}\rho' 
            \And \rho'\FunJ\sigma' )\\[3mm]
           \forall \Dual{\vec{b}}\,^\infty\in\trace(\sigma).
         ~\exists \vec{a}\substr\vec{b}.~~(\rho\LtsM{\vec{a}}\stopA 
          \Or \rho\LtsM{\vec{a}^\infty})
           \end{array}
             \right.$

         \vspace{3mm}

\item $\Converge{\rho} {\bigoplus_{i\in I} \Dual{a}_i.{\rho}_i} ~~\Rightarrow~~
             \left\{\begin{array}{l}
                       \{\Dual{\vec{a}}^{\;\infty} \mid \sigma\Lts{\Dual{\vec{a}}^{\;\infty}} \} = \emptyset \\[1mm]
                         \forall j\in I~.\forall \vec{a}~s.t.~ \sigma\LtsM{\Dual{\vec{a}}}~.~(\sigma\Lts{\Dual{\vec{a}}\,a_j}\sigma' \And
                       \rho_j \FunJ  \sigma')
                 \end{array}
             \right.$
\end{enumerate}
where
$\sigma\LtsM{\vec{a}} ~\ByDef~~ [\sigma\Lts{\vec{a}} \And \not\exists \,c\in\Names.\,\sigma\Lts{\vec{a}\,c}]$ ~~and~~
$\sigma\LtsM{\Dual{\vec{a}}} ~\ByDef~~ [\sigma\Lts{\Dual{\vec{a}}} \And \not\exists \,\Dual{c}\in\CoNames.\,\sigma\Lts{\Dual{\vec{a}}\,\Dual{c}}]$.   

\end{lemma}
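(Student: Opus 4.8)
The plan is to establish the two inclusions $\nu(\FunH)\subseteq\nu(\FunJ)$ and $\nu(\FunJ)\subseteq\nu(\FunH)$ separately, each by coinduction. Since both $\FunH$ and $\FunJ$ are monotone, Fact~\ref{fact:gfpH} gives that each greatest fixed point is the largest post-fixed point of its operator; hence it suffices to check that $\nu(\FunH)$ is a post-fixed point of $\FunJ$, i.e. $\nu(\FunH)\subseteq\FunJ(\nu(\FunH))$, so that $\nu(\FunH)\subseteq\nu(\FunJ)$ follows, and symmetrically $\nu(\FunJ)\subseteq\FunH(\nu(\FunJ))$, so that $\nu(\FunJ)\subseteq\nu(\FunH)$. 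In both checks I would fix a pair $(\rho,\sigma)$ in the source relation, expand the defining clause using $\nu(\FunH)=\FunH(\nu(\FunH))$ (resp. for $\FunJ$), and then translate the data supplied by one operator into the data demanded by the other, keeping every recursive obligation inside $\nu(\FunH)$ (resp. $\nu(\FunJ)$) so that the coinduction hypothesis applies directly.

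The technical heart is a dictionary relating the two presentations of ``a client action synchronises after skipping some server outputs''. First I would prove a single-step correspondence: for an input-guarded client branch, $a.\rho'\|\sigma\Lts{\skipAct^*\tau}\rho'\|\sigma'$ holds exactly when $\sigma\Lts{\Dual{\vec c}\,\Dual a}\sigma'$ for some output block $\Dual{\vec c}$ with $a\notin\vec c$, that is, exactly when $a\substr\vec c\,a$ and $\sigma\Lts{\Dual{\vec c}\,\Dual a}\sigma'$. This is precisely the skip side-condition $\noLts{a.\rho'}{c}$ (which forces $c\neq a$), so the $\skipAct$-rule of Definition~\ref{def:commCompl} and the base clause of $\substr$ describe the same transitions. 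Likewise I would read $\sync{a}{\sigma}$ off the server traces: $\sync{a_k}{\sigma}$ says every trace of $\sigma$ factors as $\Dual{\vec b}\,\Dual{a_k}\,\vec{\alpha}'$ with $a_k\notin\vec b$, which is exactly the guarantee that on every branch the client input $a_k$ is reached by skipping a prefix of outputs; this is what underwrites the non-emptiness and per-output-prefix matching requirements of clause (a) of $\FunJ$, while for clause (b) the family of conditions $\sync{a_i}{\sigma}$ together with the emptiness $\{\Dual{\vec a}^{\,\infty}\mid\sigma\Lts{\Dual{\vec a}^{\,\infty}}\}=\emptyset$ and the matches $\sigma\Lts{\Dual{\vec a}\,a_j}\sigma'$ encode the same information.

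With this dictionary, the passage from the one-synchronisation formulation of $\FunH$ to the whole-trace formulation of $\FunJ$ is obtained by chaining single steps along an output trace and gluing the resulting $\substr$-matches by transitivity (Lemma~\ref{lem:transSubstr}): a finite matched run $\vec a\substr\vec b$ with $\rho\LtsM{\vec a}\rho'$ is built by induction on the number of $\tau$-synchronisations (equivalently on $|\vec b|$), each increment being a base or step instance of $\substr$, and the tail obligation $\rho'\,\nu(\FunH)\,\sigma'$ is returned to coinduction. The finiteness and computability of the convergence sets (Lemma~\ref{lem:convergence}) keep every case analysis over the shapes $\Converge{\rho}{\stopA}$, $\Converge{\rho}{\sum a_i.\rho_i}$, $\Converge{\rho}{\bigoplus\Dual a_i.\rho_i}$ finite. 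The degenerate stuck configuration in which both $\rho$ and $\sigma$ converge to external (input) choices must be excluded identically by the two operators: in $\FunH$ this happens because no trace of $\sigma$ begins with an output $\Dual{a_k}$, so $\sync{a_k}{\sigma}$ fails for all $k$, and in $\FunJ$ because $\{\Dual{\vec b}\mid|\vec b|>0,\ \sigma\Lts{\Dual{\vec b}}\}=\emptyset$, which is the first requirement of clause (a); this is the point at which the guard $[\notConverge{\rho}{\sum a_i.\rho_i}\ \&\ \notConverge{\sigma}{\sum a_j.\sigma_j}]$ of $\FunJ$ must be reconciled with the side-conditions of $\FunH$.

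I expect the main obstacle to be the infinite, \emph{definitely}-$\skipAct$ traces. In $\FunH$ this exclusion is enforced only indirectly, through $\sync{}{}$ holding on \emph{every} trace together with the recursive clauses, whereas $\FunJ$ states it directly via the infinite-trace clauses (the requirement $\forall\,\Dual{\vec b}^{\,\infty}\in\trace(\sigma)\ \exists\vec a\substr\vec b.\ (\rho\LtsM{\vec a}\stopA\Or\rho\LtsM{\vec a^{\,\infty}})$ in (a), and the emptiness of infinite pure-output traces in (b)). Bridging these calls for a König-style argument: an infinite output trace $\Dual{\vec b}^{\,\infty}$ along which the client never reaches a synchronising input would contradict $\sync{}{}$ and would realise exactly a definitely-$\skipAct$ synchronisation trace forbidden by Definition~\ref{def:skipcompl}; conversely I must verify that a client run that either terminates in $\stopA$ or stays infinite along some $\vec a\substr\vec b$ never yields a definitely-$\skipAct$ trace. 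Making the alignment between the \emph{maximal} client runs $\rho\LtsM{\vec a}$ and the end-anchored matches of $\substr$ precise—so that the finite and infinite clauses of $\FunJ$ are met neither too weakly nor too strongly—is the delicate bookkeeping of the proof.
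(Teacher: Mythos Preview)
The paper does not supply a proof of this lemma; it is stated and then immediately used in the proof of Lemma~\ref{lemJ:transUpDuality}. So there is no argument in the paper to compare your proposal against. Your overall architecture---showing that each greatest fixed point is a post-fixed point of the other operator by unfolding $\nu(\FunH)=\FunH(\nu(\FunH))$ (resp.\ for $\FunJ$) and translating clauses---is the standard coinductive strategy and is what one would expect here; the use of Lemma~\ref{lem:transSubstr} to chain single $\Lts{\skipAct^*\tau}$-steps into whole-trace $\substr$-matches is also appropriate.

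There is, however, a genuine gap in your ``dictionary'' at the point where you align the condition $\exists k\in I.\ \sync{a_k}{\sigma}$ of $\FunH$'s clause (a) with the first two bullets of $\FunJ$'s clause (a). The quantifier structures are not the same: $\FunH$ demands a \emph{single} index $k$ such that $\Dual{a_k}$ occurs (after an $a_k$-free output prefix) on \emph{every} trace of $\sigma$, whereas $\FunJ$ lets the first symbol of the witnessing $\vec a$ depend on the particular output trace $\Dual{\vec b}$. Concretely, take $\rho=a+b$ and $\sigma=\Dual a\oplus\Dual b$: the pair lies in $\FunJ(\Rel)$ for the obvious $\Rel$ (each of the two length-one output traces $\Dual a$, $\Dual b$ is matched by the corresponding client input), yet neither $\sync{a}{\sigma}$ nor $\sync{b}{\sigma}$ holds, because the trace $\Dual b$ contains no $\Dual a$ and vice versa. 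Thus your claimed step ``$\nu(\FunJ)\subseteq\FunH(\nu(\FunJ))$'' does not go through by the direct translation you describe. Before continuing you should either locate an additional argument that recovers a uniform $k$ from the per-trace witnesses supplied by $\FunJ$, or check whether the intended reading of the $\sync{}{}$ clause (or of $\FunH$ itself) differs from the literal one written in Definition~\ref{def:coskipcompl}; as the definitions stand, this is the place where the two operators appear to diverge.
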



The following property will be useful to show the dual-as-minimum property.

\begin{lemma}\label{lem:aux}
Given $\sigma\LtsM{\Dual{\vec{a}}}$ with $\gamma\complyG\sigma$,
there exists $\vec{c}$ s.t. ~ $\gamma\LtsM{\vec{c}}\gamma' \And \Converge{\gamma'}{\bigoplus \Dual{b}_j.\gamma'_{j}}$.
Moreover, for any $\Dual{b}_j$ we have $\sigma\Lts{\Dual{\vec{a}}\;b_j}\sigma'_j$ with 
$\gamma'_j\complyG\sigma'_j$.
\end{lemma}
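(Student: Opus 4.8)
The plan is to work entirely with the alternative coinductive characterisation. By Proposition~\ref{prop:coinductiveChar} together with Lemma~\ref{lem:AlternativacoinductiveSkipCompliance} we have $\complyG = \nu(\FunH) = \nu(\FunJ)$; hence $\gamma\complyG\sigma$ means $(\gamma,\sigma)\in\nu(\FunJ)=\FunJ(\nu(\FunJ))$, so clauses (a) and (b) of Lemma~\ref{lem:AlternativacoinductiveSkipCompliance} are available with $\Rel=\complyG$. First I would fix the given maximal output run, writing $\sigma\Lts{\Dual{\vec a}}\sigma''$ where $\sigma''$ is output-maximal; by the convergence classification (Lemma~\ref{lem:convergence}), $\sigma''$ is either $\stopA$ or an external choice $\sum_{j} b_j.\sigma''_j$. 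The substantive case --- the one in which the conclusion can hold --- is that in which the client is productive and $\sigma''$ genuinely offers inputs, so I would run the argument under the assumptions $\notConverge{\gamma}{\stopA}$ and $\Converge{\sigma''}{\sum_j b_j.\sigma''_j}$, deferring the degenerate cases to the obstacle discussion below.

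The core is then a short case analysis on the convergence of $\gamma$. If $\Converge{\gamma}{\bigoplus_j \Dual{b}_j.\gamma'_j}$ already, I take $\vec c = \varepsilon$ and $\gamma' = \bigoplus_j \Dual{b}_j.\gamma'_j$; this $\gamma'$ offers no input, so $\gamma\LtsM{\varepsilon}\gamma'$, and clause (b) applied to $(\gamma,\sigma)$ at the maximal run $\sigma\LtsM{\Dual{\vec a}}$ delivers directly $\sigma\Lts{\Dual{\vec a}\,b_j}\sigma'_j$ with $\gamma'_j\complyG\sigma'_j$, which is exactly the statement. If instead $\Converge{\gamma}{\sum_i m_i.\gamma_i}$, the client is waiting for input and I invoke clause (a) for the pair $(\gamma,\sigma)$ at the specific run $\sigma\Lts{\Dual{\vec a}}\sigma''$: it produces a witness $\vec c\substr\vec a$ with $\gamma\LtsM{\vec c}\gamma'$ and $\gamma'\complyG\sigma''$. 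By construction $\gamma'$ is input-maximal, so it converges to $\stopA$ or to an internal (output) choice; it cannot converge to an external choice, because $(\gamma',\sigma'')$ would then be a pair of two input choices, a configuration that the guard of $\FunJ$ rules out unless the client has already completed. Discarding $\gamma'=\stopA$ by productivity, $\gamma'=\bigoplus_j\Dual{b}_j.\gamma'_j$, and a final use of clause (b) for $(\gamma',\sigma'')$ --- whose only maximal output run is the empty one, since $\sigma''$ cannot output --- gives $\sigma''\Lts{b_j}\sigma'_j$ with $\gamma'_j\complyG\sigma'_j$; concatenating with $\sigma\Lts{\Dual{\vec a}}\sigma''$ yields $\sigma\Lts{\Dual{\vec a}\,b_j}\sigma'_j$, completing the argument.

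I expect the main difficulty to be conceptual rather than computational. The genuinely delicate point is that the stated conclusion is only correct once the trivial completions are excluded: if $\gamma$ reaches $\stopA$ before $\sigma$ has emitted all of $\Dual{\vec a}$, or if $\sigma''=\stopA$, then no output choice $\gamma'$ need exist (e.g.\ $\gamma=\stopA$ is $\skipAct$-compliant with any server), so the hypotheses must be understood as placing us in the situation where the interaction actually forces the client to answer the inputs offered by $\sigma''$; I would make this precondition explicit. The second, more routine, obstacle is reading the guard of $\FunJ$ correctly --- namely that it forbids $\skipAct$-compliance between two external (input) choices except for an already completed client --- since this is exactly what pins down the input-maximal $\gamma'$ of the second case as an output choice. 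Once these two points are settled, the two applications of clauses (a) and (b) are immediate, the only remaining care being to check that the $\substr$ witness delivered by clause (a) lets one read off the per-branch synchronisations $\sigma\Lts{\Dual{\vec a}\,b_j}\sigma'_j$.
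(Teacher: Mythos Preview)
The paper states Lemma~\ref{lem:aux} without proof, so there is no argument to compare against. Your plan---case split on the convergence of $\gamma$ and discharge each case via clause (b) respectively clause (a) of the alternative operator $\FunJ$, then close with a second application of clause (b)---is the natural one and is essentially correct.

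Your identification of the degenerate cases is exactly right and worth emphasising: as written, the lemma is literally false when $\Converge{\gamma}{\stopA}$ (take $\gamma=\stopA$, any $\sigma$), and its second clause is vacuous or false when the endpoint $\sigma''$ of $\sigma\LtsM{\Dual{\vec a}}$ is $\stopA$. The paper applies the lemma only inside the case $\Converge{\rho}{\bigoplus_{i\in I}\Dual{a}_i.\rho_i}$ of Lemma~\ref{lemJ:transUpDuality}, with $\Dual{\gamma}$ in the r\^ole of the present $\gamma$; there equation~(\ref{eq:i}) guarantees that every maximal output run of $\gamma$ ends in an external choice, so dually every maximal input run of $\Dual{\gamma}$ ends in an internal choice, which is precisely the ``productivity'' hypothesis you propose to add. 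Making this explicit is the honest fix.

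One small sharpening: in your Case~2 you can argue, rather than assume, that $\sigma''\not\Downarrow\stopA$. Once clause (a) has produced $\gamma'\complyG\sigma''$ with $\Converge{\gamma'}{\bigoplus_j\Dual{b}_j.\gamma'_j}$ (using productivity to rule out $\gamma'\Downarrow\stopA$), clause (b) applied to $(\gamma',\sigma'')$ at the empty maximal output run \emph{forces} $\sigma''\Lts{b_j}$, hence $\sigma''$ cannot be $\stopA$. So the only genuinely extra hypothesis is the productivity of $\gamma$; the rest follows from $\gamma\complyG\sigma$ and the guard of $\FunJ$ exactly as you read it.
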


\begin{lemma}\label{lemJ:transUpDuality}
For all $\rho,\sigma,\gamma\in\Sbehav$: ~
if $\rho\complyG\gamma$ and $\Dual{\gamma}\complyG\sigma$ then $\rho\complyG\sigma$.
\end{lemma}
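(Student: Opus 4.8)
The statement is precisely property~(\ref{eq:minAux}), the crux of the \emph{duals as minima} result, and I would establish it by coinduction on the alternative characterization of Lemma~\ref{lem:AlternativacoinductiveSkipCompliance}. By Proposition~\ref{prop:coinductiveChar} together with that lemma we have $\complyG = \nu(\FunJ)$, so it suffices to exhibit a post-fixed point of $\FunJ$ containing every pair that admits a mediating behaviour. Concretely I would set
\[
\Rel = \{\, (\rho,\sigma) \in \Sbehav \times \Sbehav \mid \exists\, \gamma \in \Sbehav.\ \rho \complyG \gamma \And \Dual{\gamma} \complyG \sigma \,\}
\]
and prove $\Rel \subseteq \FunJ(\Rel)$. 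Since $\FunJ$ is covariant in its argument (exactly as for $\FunH$), the coinduction principle then yields $\Rel \subseteq \nu(\FunJ) = \complyG$, and instantiating with the given triple $\rho,\gamma,\sigma$ delivers the lemma.

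To verify $\Rel \subseteq \FunJ(\Rel)$ I would fix $(\rho,\sigma) \in \Rel$ with witness $\gamma$ and split on the normal form of $\rho$ supplied by Lemma~\ref{lem:convergence}. The case $\Converge{\rho}{\stopA}$ is immediate. When $\Converge{\rho}{\sum_{i} a_i.\rho_i}$ (the client is waiting to input), I would read off from $\Dual{\gamma} \complyG \sigma$, unfolding one step of $\nu(\FunJ)$, that every output run $\Dual{\vec b}$ with $\sigma\Lts{\Dual{\vec b}}\sigma'$ is matched by an input run $\vec a'' \substr \vec b$ of the client $\Dual{\gamma}$, hence dually by an output run $\Dual{\vec a''}$ of $\gamma$ reaching some $\gamma'$ with $\Dual{\gamma'} \complyG \sigma'$; then $\rho \complyG \gamma$ supplies an input run $\vec a \substr \vec a''$ with $\rho\LtsM{\vec a}\rho'$ and $\rho' \complyG \gamma'$. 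Transitivity of $\substr$ (Lemma~\ref{lem:transSubstr}) composes $\vec a \substr \vec a'' \substr \vec b$ into the required $\vec a \substr \vec b$, and $(\rho',\sigma')$ lands in $\Rel$ with the same witness $\gamma'$. The internal-choice case $\Converge{\rho}{\bigoplus_i \Dual{a}_i.\rho_i}$ is handled symmetrically: aligning the client's outputs with $\sigma$'s inputs after a maximal run $\Dual{\vec a}$ of skipped server outputs is exactly what Lemma~\ref{lem:aux} provides, applied to the client $\Dual{\gamma}$ and server $\sigma$, yielding matching residuals $\sigma'_j$ with $\sigma\Lts{\Dual{\vec a}\,b_j}\sigma'_j$; the continuations delivered by $\rho \complyG \gamma$ then provide the mediating $\gamma'_j$ placing each $(\rho_j,\sigma'_j)$ back in $\Rel$.

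The subtle point, which I expect to be the main obstacle, is the \emph{shift} of synchronization points illustrated in the text by the $a.d \complyG \Dual{a}.\Dual{b}.\Dual{a}.\Dual{b}.\Dual{a}.\Dual{d}$ example: the output of $\sigma$ that actually synchronizes with a given client action need not be the image, under the composed embedding, of the output of $\gamma$ with which $\rho$ synchronized. The whole reason for phrasing $\skipAct$-compliance through $\substr$ and isolating Lemma~\ref{lem:transSubstr} is to make this re-alignment automatic: transitivity absorbs the reshuffling of skipped prefixes, so one never has to track individual handshakes. The second delicate ingredient is the infinite-trace clauses of $\FunJ$ (the third line of case~(a) and the first line of case~(b)): I would argue that an infinite pure-output trace $\Dual{\vec a}^{\,\infty}$ of $\sigma$ would, through $\Dual{\gamma} \complyG \sigma$, force a corresponding non-terminating behaviour on $\Dual{\gamma}$ and hence an infinite output trace on $\gamma$, contradicting $\rho \complyG \gamma$. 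This is what rules out \emph{definitely}-$\skipAct$ interactions in the composite and guarantees that $(\rho,\sigma)$ meets the finiteness side-conditions rather than merely the finite matching ones.
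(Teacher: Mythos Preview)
Your proposal is correct and matches the paper's own proof essentially step for step: the same relation $\Rel$ is shown to be a post-fixed point of $\FunJ$ via the alternative characterization of Lemma~\ref{lem:AlternativacoinductiveSkipCompliance}, with the external-choice case handled by duality plus transitivity of $\substr$ (Lemma~\ref{lem:transSubstr}) and the internal-choice case by Lemma~\ref{lem:aux}, including the contradiction argument ruling out infinite pure-output traces of $\sigma$. The paper presents only the internal-choice case in print (citing lack of space), but the commented external-choice case in the source is exactly the argument you outline.
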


\begin{proof}
By Lemma \ref{prop:coinductiveChar}, we have to prove
the relation 
$\RelKK= \{ (\rho,\sigma) \mid \exists \gamma. ~ \rho\complyG\gamma \And \Dual{\gamma}\complyG\sigma\}$
to be a coinductive Skip-compliance. We shall do that by using the alternative characterization of coinductive $\skipAct$-compliance of Lemma \ref{lem:AlternativacoinductiveSkipCompliance}.
Let $\rho$ and $\sigma$ be such that $\rho\complyG\gamma$ and $\Dual{\gamma}\complyG\sigma$ for some $\gamma$. 
There are two cases, of which we consider the most complex one for lack of space.
\begin{description}
\Comment{
\item $\Converge{\rho} {\sum_{i\in I} a_i.{\rho}_i}$\;\; :\;
We know that there exists $\gamma$ s.t. $\rho\complyG\gamma$ and  $\Dual{\gamma}\complyG\sigma$. Since $\gamma$ cannot be an external choice, we have that, necessarily, $\Converge{\Dual{\gamma}} {\sum_{h\in H} c_h.{\Dual{\gamma}}_h}$.
We need to distinguish whether a maximal sequence of actions out of $\sigma$ be finite or infinite. 
\begin{description}
\item $\sigma\LtsM{\Dual{\vec{b}}}\sigma'$\\
Since $\Dual{\gamma}\complyG\sigma$, by the characterization of Lemma \ref{lem:AlternativacoinductiveSkipCompliance}, we get that there exists
$\vec{c}\substr \vec{b}$  such that
$\Dual{\gamma}\LtsM{\vec{c}}\Dual{\gamma}'$,
with $\Dual{\gamma}'\complyG \sigma'$.

 By duality we have that 
$\gamma\LtsM{\Dual{\vec{c}}}\gamma'$. 
Now, from $\rho\complyG\gamma$ and Lemma \ref{lem:AlternativacoinductiveSkipCompliance} again,
 we get that there exists $\vec{a}\substr \vec{c}$ such that $\rho\LtsM{\vec{a}}\rho'$,
with $\rho'\complyG \gamma'$.

By Lemma \ref{lem:transSubstr}, from $\vec{a}\substr \vec{c}$ and $\vec{c}\substr \vec{b}$ we get 
$\vec{a}\substr \vec{b}$. Now, from $\rho'\complyG \gamma'$ and $\Dual{\gamma}'\complyG \sigma'$, we get $\rho'\RelKK\sigma'$, and hence the thesis.\\
\item 
$\sigma\LtsM{\Dual{\vec{b}}\,^\infty}$\\
By Lemma \ref{lem:AlternativacoinductiveSkipCompliance} and $\Dual{\gamma}\complyG\sigma$,
there exists $\vec{c}\substr\vec{b}$ such that either $\Dual{\gamma}\LtsM{\vec{c}}\stopA$ or 
$\Dual{\gamma}\LtsM{\vec{c}^\infty}$.

In the first case, by duality, we have $\gamma\LtsM{\Dual{\vec{c}}}\stopA$,
and hence from $\rho\complyG\gamma$ and  Lemma \ref{lem:AlternativacoinductiveSkipCompliance},
there exists $\vec{a}\substr \vec{c}$ such that $\rho\LtsM{\vec{a}}\rho'$,
with $\rho'\complyG \stopA$. Since $\rho'\complyG \stopA$ implies $\rho'\equiv\stopA$ and hence $\rho\LtsM{\vec{a}}\stopA$, from which the thesis, since we have also 
$\vec{a}\substr \vec{b}$ from $\vec{a}\substr \vec{c}$ and $\vec{c}\substr \vec{b}$. 

In the second case, by duality, we have $\gamma\LtsM{\Dual{\vec{c}}\,^\infty}$,
and hence from $\rho\complyG\gamma$ and  Lemma \ref{lem:AlternativacoinductiveSkipCompliance},
there exists $\vec{a}\substr\vec{c}$ such that either $\rho\LtsM{\vec{a}}\stopA$ or 
$\rho\LtsM{\vec{a}^\infty}$. 
Since $\vec{a}\substr \vec{b}$ from $\vec{a}\substr \vec{c}$ and $\vec{c}\substr \vec{b}$
we get immediately the thesis in the first case, whereas, in the second case, we get the thesis
since, by Lemma \ref{lem:transSubstr}, $\vec{a}^\infty\substr \vec{b}^\infty$ from $\vec{a}^\infty\substr \vec{c}^\infty$ and $\vec{c}^\infty\substr \vec{b}^\infty$.\\
\end{description}
}
\item $\Converge{\rho} {\bigoplus_{i\in I} \Dual{a}_i.{\rho}_i}$\;\; :\;
Let $k\in I$.
From $\rho\complyG\gamma$, \ref{lem:AlternativacoinductiveSkipCompliance}, we get that $\{\Dual{\vec{c}}^{\;\infty} \mid \gamma\Lts{\Dual{\vec{c}}^{\;\infty}} \} = \emptyset$ and that, 
\vspace{-3mm}
\begin{equation}\label{eq:i}
\forall\;\vec{\Dual{c}}~ s.t.~
\gamma\LtsM{\Dual{\vec{c}}}{\gamma''}~.  ~~\Converge{\gamma''}{\sum b_j.
\gamma'_j} \And \exists h.\; b_h\equiv a_k \And \rho_k\complyG\gamma'_h.
\end{equation}

\vspace{-4mm}
 By duality, we get that for all $\vec{c}$ s.t. $\Dual{\gamma}\LtsM{\vec{c}}\Dual{\gamma''}$, $\Converge{\Dual{\gamma''}}{\bigoplus \Dual{b}_j.\Dual{\gamma'}_j}$.
We can now infer that $\{\Dual{\vec{a}}^{\;\infty} \mid \sigma\Lts{\Dual{\vec{a}}^{\;\infty}} \} = \emptyset$, by distinguishing two cases: if $\Converge{\gamma}{\bigoplus \Dual{c}_p.{\gamma}_p}$, it is immediate by $\Dual{\gamma}\complyG\sigma$ and Lemma \ref{lem:AlternativacoinductiveSkipCompliance}.
Otherwise, by contradiction, let assume that there exists $\vec{a}^{\;\infty}$ such that $\sigma\Lts{\Dual{\vec{a}}^{\;\infty}}$. By $\Dual{\gamma}\complyG\sigma$ and 
Lemma \ref{lem:AlternativacoinductiveSkipCompliance} we get that there exists 
$\vec{d}\substr\vec{a}$ such that either $\Dual{\gamma}\LtsM{\vec{d}}$ or 
$\Dual{\gamma}\LtsM{\vec{d}^\infty}$. We obtain an immediate contradiction in the second case, whereas in the first one, we get a contradiction by the fact that 
$\gamma\LtsM{\Dual{\vec{d}}}\stopA$ and by (\ref{eq:i}).
Now, given $\sigma\LtsM{\Dual{\vec{a}}}$, from $\Dual{\gamma}\complyG\sigma$ and 
Lemma \ref{lem:aux}, given $\sigma\LtsM{\Dual{\vec{a}}}$, 
there exists $\vec{c}'$ s.t. $\Dual{\gamma}\LtsM{\vec{c'}}\Dual{\gamma_1''}$, $\Converge{\Dual{\gamma_1''}}{\bigoplus \Dual{b}_j.\Dual{\gamma'_1}_j}$,
moreover, for any $\Dual{b}_j$ we have $\sigma\Lts{\Dual{\vec{a}}\;b_j}\sigma'_j$ with 
$\Dual{\gamma'_1}_j\complyG\sigma'_j$.
From (\ref{eq:i}) we get that 
$$\gamma\LtsM{\Dual{\vec{c'}}}{\gamma''}~.  ~~\Converge{\gamma_1''}{\sum b_j.
{\gamma_1}'_j} \And \exists h.\; b_h\equiv a_k \And \rho_k\complyG{\gamma_1}'_h.$$
Since $\Dual{\gamma'_1}_h\complyG\sigma'_h$, we get $\rho_k\RelKK\sigma'_h$.
\end{description}
\end{proof}
\pagebreak
\begin{proposition}[Duals as minima]\label{prop:mins}\hfill\\
Let $\rho \in\Sbehav$. Then
$\Dual{\rho}$ is the minimum server of $\rho$, i.e. :~~~~
$\forall \sigma.\; ~\rho\complyG \sigma ~~\implies~~ \Dual{\rho} \preceqG \sigma$
\Comment{
\label{prop:mins-ii}
Let $\sigma \in\Sbehav$. Then
$\Dual{\sigma}$ is the minimum client of $\sigma$, i.e.
$$\forall \rho.\; ~\rho\complyG \sigma ~~\implies~~ \Dual{\sigma} \preceqG_c \rho$$
}
\end{proposition}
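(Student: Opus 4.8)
The plan is to reduce the proposition, via the definition of $\preceqG$, to the transitivity-through-duality property already isolated as Lemma \ref{lemJ:transUpDuality} (which is exactly property (\ref{eq:minAux})). Fix $\sigma$ with $\rho\complyG\sigma$. By Definition \ref{def:skip-subbehaviour}, establishing $\Dual{\rho}\preceqG\sigma$ amounts to showing that every $\skipAct$-compliant client of $\Dual{\rho}$ is also a $\skipAct$-compliant client of $\sigma$. So first I would take an arbitrary $\tau\in\Sbehav$ with $\tau\complyG\Dual{\rho}$, and aim to derive $\tau\complyG\sigma$.

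The key move is the involution $\Dual{\Dual{\rho}}=\rho$, which lets me rewrite the standing hypothesis $\rho\complyG\sigma$ as $\Dual{\Dual{\rho}}\complyG\sigma$. I would then invoke Lemma \ref{lemJ:transUpDuality} with intermediate behaviour $\gamma:=\Dual{\rho}$ and client $\tau$: from $\tau\complyG\Dual{\rho}$ and $\Dual{\Dual{\rho}}\complyG\sigma$ the lemma delivers $\tau\complyG\sigma$. Since $\tau$ was an arbitrary client of $\Dual{\rho}$, this is precisely $\Dual{\rho}\preceqG\sigma$. Moreover, because $\rho\complyG\Dual{\rho}$ holds, $\Dual{\rho}$ is genuinely a server of $\rho$, so the minimum is attained and the claim is not vacuous.

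The main obstacle has, in effect, already been cleared before the statement: all the delicate bookkeeping about how skipped server outputs compose across an intermediate behaviour---the phenomenon illustrated by the $a.d$ versus $b.b.a.d$ discussion preceding (\ref{eq:minAux})---is concentrated in Lemma \ref{lemJ:transUpDuality}, whose proof rests on the alternative coinductive characterization of Lemma \ref{lem:AlternativacoinductiveSkipCompliance} and on transitivity of $\substr$ (Lemma \ref{lem:transSubstr}). Given that lemma, the present proposition is a one-line corollary whose sole additional ingredient is the syntactic involution on duals. As announced, decidability of $\preceqG$ then follows: one shows $\sigma\preceqG\sigma'\Iff\Dual{\sigma}\complyG\sigma'$ (the direction $\Leftarrow$ being this same application of Lemma \ref{lemJ:transUpDuality} with $\gamma:=\sigma$, and $\Rightarrow$ using $\Dual{\sigma}\complyG\sigma$), and the right-hand side is decidable by Corollary \ref{cor:skip-compl-decidability}.
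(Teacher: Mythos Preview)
Your proposal is correct and follows essentially the same route as the paper: both reduce the proposition to Lemma \ref{lemJ:transUpDuality} by instantiating the intermediate behaviour $\gamma$ with $\Dual{\rho}$ and invoking the involution $\Dual{\Dual{\rho}}=\rho$. Your additional remark that $\rho\complyG\Dual{\rho}$ ensures the minimum is actually attained, and your preview of Theorem \ref{thm:dualsAsMinima}, match the paper's subsequent development as well.
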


\begin{proof}
Let $\sigma$ and $\gamma$ be such that $\rho\complyG \sigma$ and 
$\gamma \complyG \Dual{\rho}$.
It is immediate to check that $\Dual{\rho}\complyG\rho$. Hence, by Lemma \ref{lemJ:transUpDuality}
and the fact that the $\Dual{\cdot}$ operation is involutive, we have that $\gamma \complyG \sigma$, so showing that $\Dual{\rho} \preceqG \sigma$.
\end{proof}
\Comment{
(\ref{prop:mins-ii})
Let $\rho$ and $\gamma$ be such that $\rho\complyG \sigma$ and 
$ \Dual{\sigma}\complyG \gamma$.
Hence, by Lemma \ref{lemJ:transUpDuality}, we have that $\rho \complyG \gamma$, so showing that $\Dual{\sigma} \preceqG_c \rho$.
}

We are finally in place to establish the following result.

\begin{theorem}\label{thm:dualsAsMinima}
\label{th:dualasminimum}
$~~~~~~~~\sigma\preceqG\sigma' ~~~~\iff~~~~ \Dual{\sigma}\complyG \sigma'$
\end{theorem}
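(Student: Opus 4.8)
The plan is to obtain the theorem as an almost immediate consequence of the duals-as-minima machinery, specifically of the transitivity-style Lemma~\ref{lemJ:transUpDuality} together with the elementary self-compliance fact $\Dual{\sigma}\complyG\sigma$ (the latter already noted in the proof of Proposition~\ref{prop:mins}). The two directions of the equivalence are dual instantiations, and no new coinductive reasoning is required beyond what those lemmas already provide.

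First I would treat the direction $\sigma\preceqG\sigma' \implies \Dual{\sigma}\complyG\sigma'$. Unfolding Definition~\ref{def:skip-subbehaviour}, the hypothesis $\sigma\preceqG\sigma'$ states that every $\skipAct$-compliant client of $\sigma$ is also a $\skipAct$-compliant client of $\sigma'$. Since $\Dual{\sigma}\complyG\sigma$, the dual $\Dual{\sigma}$ is itself such a client of $\sigma$; instantiating the universally quantified $\rho$ in Definition~\ref{def:skip-subbehaviour} with $\rho := \Dual{\sigma}$ then yields $\Dual{\sigma}\complyG\sigma'$ directly.

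For the converse, $\Dual{\sigma}\complyG\sigma' \implies \sigma\preceqG\sigma'$, I would fix an arbitrary $\rho$ with $\rho\complyG\sigma$ and aim to show $\rho\complyG\sigma'$, which by Definition~\ref{def:skip-subbehaviour} gives $\sigma\preceqG\sigma'$. Here I apply Lemma~\ref{lemJ:transUpDuality} with the intermediate behaviour $\gamma := \sigma$: from $\rho\complyG\gamma$ (which is just $\rho\complyG\sigma$) and $\Dual{\gamma}\complyG\sigma'$ (which is the hypothesis $\Dual{\sigma}\complyG\sigma'$) the lemma concludes $\rho\complyG\sigma'$, as required.

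The genuinely hard work is therefore already discharged: it lives entirely in Lemma~\ref{lemJ:transUpDuality}, whose proof must cope with the subtle misalignment between client and server skip-actions illustrated by the $a.d$ / $b.b.a.d$ example, and in the alternative coinductive characterization of Lemma~\ref{lem:AlternativacoinductiveSkipCompliance} on which it rests. Given those results the theorem is a two-line argument, so I expect no real obstacle at this final step; the only point requiring a line of care is the explicit verification that $\Dual{\sigma}\complyG\sigma$ holds for every $\sigma\in\Sbehav$, which follows by checking that the diagonal $\{(\Dual{\tau},\tau)\mid \tau\in\Sbehav\}$ forms a coinductive Skip-relation in the sense of Definition~\ref{def:coskipcompl}.
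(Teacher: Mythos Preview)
Your proof is correct and follows essentially the same route as the paper. The only cosmetic difference is that for the direction $(\Leftarrow)$ the paper packages the application of Lemma~\ref{lemJ:transUpDuality} into Proposition~\ref{prop:mins} (duals as minima) and then instantiates that proposition with $\rho:=\Dual{\sigma}$, whereas you apply Lemma~\ref{lemJ:transUpDuality} directly with $\gamma:=\sigma$; these amount to the same argument.
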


\begin{proof}
($\Rightarrow$) Let $\Dual{\sigma}\not\complyG \sigma'$. Since we have $\Dual{\sigma}\complyG \sigma$, we get then that $\sigma\not\preceqG\sigma'$.

($\Leftarrow$) Let $\Dual{\sigma}\complyG \sigma'$. Then, by Proposition \ref{prop:mins}, we get
$\sigma=\Dual{\Dual{\sigma}}\preceqG\sigma'$.
\end{proof}

By  Theorem \ref{thm:dualsAsMinima} and decidability of $\complyG$ stated in Corollary \ref{cor:skip-compl-decidability} we conclude:

\begin{corollary}\label{cor:skip-preceq-decidability}
The relation $\preceqG$ is decidable.
\end{corollary}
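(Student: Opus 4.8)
The plan is to reduce the decision problem for $\preceqG$ to a single instance of the already-decidable skip-compliance relation $\complyG$, using the characterization established in Theorem \ref{thm:dualsAsMinima}. That theorem states that $\sigma\preceqG\sigma'$ holds if and only if $\Dual{\sigma}\complyG\sigma'$; the point is that the universal quantification over all clients $\rho$ appearing in Definition \ref{def:skip-subbehaviour} has thereby been collapsed into a single compliance check against the dual of $\sigma$, so that no residual quantifier remains to be handled algorithmically.

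First I would observe that the dual operation $\sigma\mapsto\Dual{\sigma}$ is effectively computable: it is defined by a straightforward structural recursion on raw behaviour expressions (interchanging $a$ with $\Dual{a}$, and $+$ with $\oplus$, as recalled just after Definition \ref{def:session-behaviours}), and it maps $\Sbehav$ into $\Sbehav$. Hence, given inputs $\sigma,\sigma'\in\Sbehav$, one can mechanically form the term $\Dual{\sigma}\in\Sbehav$.

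Next I would appeal to Corollary \ref{cor:skip-compl-decidability}, which asserts that $\complyG$ is decidable: there is an algorithm that, on input $(\rho,\sigma)\in\Sbehav\times\Sbehav$, halts and reports whether $\rho\complyG\sigma$. Feeding the pair $(\Dual{\sigma},\sigma')$ to this algorithm decides $\Dual{\sigma}\complyG\sigma'$, which by Theorem \ref{thm:dualsAsMinima} is precisely the question of whether $\sigma\preceqG\sigma'$. Composing the (trivial) computation of the dual with the compliance-decision algorithm therefore yields a decision procedure for $\preceqG$.

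There is in fact no genuine obstacle at this final step: all the substantive difficulty has already been discharged earlier — namely the decidability of $\complyG$ (Corollary \ref{cor:skip-compl-decidability}, resting on the algorithmic formal system of Figure \ref{fig:complianceSystem}) and, more delicately, the \emph{duals as minima} property (Proposition \ref{prop:mins}, via Lemma \ref{lemJ:transUpDuality}) that underlies the equivalence of Theorem \ref{thm:dualsAsMinima}. The corollary is thus obtained by simple composition of these two facts, exactly as the text indicates.
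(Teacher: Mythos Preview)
Your proposal is correct and matches the paper's own argument exactly: the paper simply states that the corollary follows from Theorem~\ref{thm:dualsAsMinima} together with the decidability of $\complyG$ established in Corollary~\ref{cor:skip-compl-decidability}. Your additional remark that the dual $\Dual{\sigma}$ is effectively computable is a small but appropriate observation that the paper leaves implicit.
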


\section{Related works}
\label{sect:skipRelated}

What we devised in the present paper is not the only possibility of weakening the notions of
compliance and sub-behaviour. An alternative approach in the setting of
(first-order and unrestricted) contracts has been followed  by Luca Padovani in \cite{Padovani10}.

We briefly recall Padovani's approach to compare with ours, which is possible because session-behaviours are particular contracts.
In \cite{Padovani10} the interactions between a client and a server can be mediated (coordinated) by
an {\em orchestrator}, a particular process (a  sort of {\em active channel} or {\em channel controller}) with the capability of buffering messages.
Thanks to that, the server's ``answers" to the client's  ``requests" can be delivered in a different order, so enabling a form of asynchronous interactions, or kept indefinitely in the buffer, that is equivalent to {\em discarding} them.
 The weak-compliance
relation resulting from the presence of orchestrators, that we denote here by $\complyP$, induces a preorder that is also investigated in \cite{Padovani10}, and that here we refer to as $\preceqP$.

Let us explain the use of orchestrators by means of an example.
The following is the behaviour of a ballot service similar to one we already 
described in the Introduction. Logging-in can be retried in case of wrong login or system overload. The message ${\tt Id}$ denotes 
the identifier of the transaction provided by the server to its clients.

\begin{tabbing}
${\sf BallotServic} \ByDef$ \= $\,\,\rec x.\;{\tt Login}.(\Dual{\tt Wrong}.\Dual{\tt InfoW}.x\;\oplus\; \Dual{\tt Overload}.x \;\oplus\;\,$\= \kill

${\sf BallotServiceBehP} \ByDef $\\
   \> $\rec x.\,{\tt Login}.(\Dual{\tt Wrong}.\,x\;\oplus\; \Dual{\tt Overload}.\,x \;\oplus\; \Dual{\tt Ok}.\Dual{\tt Id}.(\;\;\;\;{\tt VoteA}.({\tt Va1} + {\tt Va2})$\\
             \>                                  \>\,$+\; {\tt VoteB}.({\tt Vb1} + {\tt Vb2})\;)\;)$
\end{tabbing}
\Comment{
\begin{tabbing}
${\sf BallotServiceBehP} \ByDef$ \= $\rec x.\;{\tt Login}.($\=$\Dual{\tt Ok}.\Dual{\tt Id}.($\= \kill

${\sf BallotServiceBehP} \ByDef \rec x.\;{\tt Login}.(\Dual{\tt Wrong}.x$\\
                                        \>                  \> $\;\oplus\; $\\
                                        \>  \> $\Dual{\tt Overload}.x $ \\
                                         \> \> $\;\oplus\; $\\
                                         \>\> $\Dual{\tt Ok}.\Dual{\tt Id}.($\=${\tt VoteA}.({\tt Va1} + {\tt Va2})$\\
             \>                \>                  \>$+$\\
             \>                \>                  \>${\tt VoteB}.({\tt Vb1} + {\tt Vb2})$ \\
             \>                \>                  \>$+$\\
             \>                \>                  \>${\tt VoteC}.({\tt Vc1} + {\tt Vc2})) $
                                                     $)$
\end{tabbing}
}
Now, let us assume to have a voter with the following behaviour:
\[{\sf VoterBehP} \,\ByDef\,  \rec x.\;\Dual{\tt Login}.({\tt Wrong}.\,x + {\tt Overload}.\,x + {\tt Ok}.\Dual{\tt Vb1}.\Dual{\tt VoteB} ) \]
Such a voter, besides not needing any identifier of 
the transaction, intends to give the preference for the vice-candidate {\em before} the one for the main candidate.
The feasibility of the interaction between ${\sf VoterBehP}$ and ${\sf BallotServiceBehP}$ can be guaranteed only by the presence of an orchestrator such as:
\begin{tabbing}
$\mbox{{\sf BallotOrchP}} \;\ByDef\;\rec x.\;\langle{\tt Login},\Dual{\tt Login} \rangle.($ \=\kill

$\mbox{{\sf BallotOrchP}} \;\ByDef \; \rec x.\;\langle{\tt Login},\Dual{\tt Login} \rangle.(\;\;\;\;\;\langle\Dual{\tt Wrong},{\tt Wrong} \rangle.\,x $\\
       \> $\vee \; \langle\Dual{\tt Overload},{\tt Overload} \rangle.\,x $\\
        \> $\vee \; \langle\Dual{\tt Ok},{\tt Ok} \rangle.\langle\varepsilon, {\tt Id} \rangle.\langle{\tt Vb1},\varepsilon \rangle.\langle {\tt VoteB},\Dual{\tt VoteB}\rangle.\langle\varepsilon, \Dual{\tt Vb1}\rangle )$
\end{tabbing} 
\Comment{
\begin{tabbing}
$f \ByDef\langle{\tt Login},\Dual{\tt Login} \rangle.($ \=\kill

$f \ByDef\langle{\tt Login},\Dual{\tt Login} \rangle.(\langle\Dual{\tt Wrong},{\tt Wrong} \rangle $\\
       \> $\vee $\\
       \> $\langle\Dual{\tt Overload},{\tt Overload} \rangle $\\
        \> $\vee $\\
        \> $\langle\Dual{\tt Ok},{\tt Ok} \rangle.\langle\varepsilon, {\tt Id} \rangle.\langle{\tt VoteB},\varepsilon \rangle.\langle {\tt Vb1},\Dual{\tt Vb1},\rangle.\langle\varepsilon, \Dual{\tt VoteB},\rangle )$
\end{tabbing} 
}
The actions of an orchestrator are actually pairs.
The first orchestrating action  $\langle{\tt Login},\Dual{\tt Login}\rangle$
means that  {\sf BallotOrch} immediately delivers to the server a login, represented by the action $\Dual{\tt Login}$ to the right of the first pair,  as soon as this is received from the client, represented by the action {\tt Login} to the left of the same pair. Then,
the orchestrating actions  $\langle\Dual{\tt Wrong},{\tt Wrong} \rangle$, $\langle\Dual{\tt Overload},{\tt Overload} \rangle$ and $\langle\Dual{\tt Ok},{\tt Ok} \rangle$,
and the use of the $\vee$ operator, express that,
 in case {\sf BallotOrch} gets a message ${\tt Wrong}$, ${\tt Overload}$ or ${\tt Ok}$ from the server, this message is immediately passed to the client (and the orchestration starts again in case of ${\tt Wrong}$ or ${\tt Overload}$).
 
In case the message  ${\tt Ok}$ is received, the subsequent orchestrating actions begin by $\langle\varepsilon, {\tt Id} \rangle.\langle{\tt VoteB},\varepsilon \rangle$. The symbol $\varepsilon$ represents a  {\em no-action} by the client and by the server respectively, and it has the effect of buffering the other action in the pair.
Therefore the message ${\tt Id}$ from the server
is kept in a buffer since  the no-action symbol $\varepsilon$ to the left of the first orchestrating action replaces the expected $\Dual{\tt Id}$. Simlarly the message ${\tt Vb1}$ is also kept in the buffer. Only after the reception
of the message ${\tt VoteB}$, which is immediately passed to the server, the message
${\tt Vb1}$ is delivered to the server, and the orchestration stops. The message ${\tt Id}$,  instead, is never delivered.

The presence of an orchestrator hence allows for both asynchronous interactions and  the possibility of disregarding messages. 
A natural restriction is  imposed on orchestrators in \cite{Padovani10}:  an orchestrator cannot send a message if this has not been previously received. In fact, in the correct orchestrator  above, $\langle{\tt VoteB},\varepsilon \rangle$ comes before $\langle\varepsilon, \Dual{\tt VoteB}\rangle$.
This  implies that also in Padovani's setting it is not possible to disregard input actions.

\Comment{ - somewhat 
resembles the restriction imposed on the syntax of session-behaviours. 
For instance, we have that $\Dual{a}.b \oplus \Dual{c}.d\not\preceqP
b.\Dual{a} \oplus \Dual{c}.d$. In fact, let us consider one of the possible clients of
 $\Dual{a}.b \oplus \Dual{c}.d$, namely $a.\Dual{b} + c.\Dual{d}$.
There can be no orchestrator between $a.\Dual{b} + c.\Dual{d}$ and  the server  $b.\Dual{a} \oplus \Dual{c}.d$, since no orchestrator is allowed to send a message $a$ to the client if that has not been received. Similarly, it is not possible to take into account $b.\Dual{a} \oplus \Dual{c}.d$ as a server in the session-behaviours formalism, since it is not a sintactically correct.)\\
}


The generality of Padovani's  notion of orchestrated compliance 
is paid in terms of a more complex LTS formalizing client/server interaction, which depends
on an orchestrator $f$,
that we denote by $\complyP_f$.
In  \cite{Padovani10} the relation $\rho \complyP \sigma$ holds whenever there exists an orchestrator $f$ such that $\rho \complyP_f \sigma$.

To save decidability of the relevant properties, any correct orchestrator  
must be of {\em finite} rank, where the rank of an orchestrator $f$ is the bound of its buffering capability. 
To make this explicit the notation $\rho \complyP_k \sigma$ is used
whenever there exists an orchestrator $f$ of rank $k$ such that $\rho \complyP_f \sigma$.

In \cite{Padovani10} the sub-behaviour relation induced by orchestrated compliance is defined by:
\[\sigma\preceqP\sigma' ~ \Iff  ~  \forall \rho.[\;\rho\comply\sigma \implies \exists f.\; \rho\complyP_f \sigma']. \]
Notice that the relation $\comply$ in the antecedent of the implication is just the usual strong compliance.
In the same work the relation $\preceqP$ is  proved to be decidable. Moreover the orchestrator $f$ in the definition can be inferred 
from $\sigma$ and $\sigma'$ and it is the same for any possible client $\rho$.

From what said up to now Padovani orchestrated-compliance relation seems to include ours,
since the possibility of {\em skipping} output actions can be mimicked by  orchestrators
that keep messages indefinitely inside their buffers, without ever delivering them.

However, apart from the restriction to session behaviours,
the two compliance relations are actually {\em incomparable} because of the finiteness of the ranks of correct orchestrators and of the possibility in our setting to discard infinitely many (non consecutive) output actions from the server side. 
A counterexample to the inclusion of $\complyG$ in $\complyP$ can be obtained by slightly modifying the example used before.
Let us consider the ballot service with the extra output action $\Dual{\tt InfoW}$,
representing some informations about why a login has not been accepted:
\begin{tabbing}
${\sf Ball} \ByDef$ \= \,$\rec x.\;{\tt Login}.(\Dual{\tt Wrong}.\Dual{\tt InfoW}.x\;\oplus\; \Dual{\tt Overload}.x \;\oplus\; \Dual{\tt Ok}.\Dual{\tt Id}.($\= \kill

${\sf BallotServiceBehP2} \ByDef $\\
   \> $\rec x.\;{\tt Login}.(\Dual{\tt Wrong}.\Dual{\tt InfoW}.\,x\;\oplus\; \Dual{\tt Overload}.\,x \;\oplus\; \Dual{\tt Ok}.\Dual{\tt Id}.(\;\;\;\;{\tt VoteA}.({\tt Va1} + {\tt Va2})$\\
             \>                                  \>$+\; {\tt VoteB}.({\tt Vb1} + {\tt Vb2})\;)\;)$
\end{tabbing}
\Comment{
\begin{tabbing}
${\sf BallotServiceBehP2} \ByDef$ \= $\rec x.\;{\tt Login}.($\=$\Dual{\tt Ok}.\Dual{\tt Id}.($\= \kill

${\sf BallotServiceBehP2} \ByDef \rec x.\;{\tt Login}.(\Dual{\tt Wrong}.\Dual{\tt InfoW}.x$\\
                                        \>                  \> $\;\oplus\; $\\
                                        \>  \> $\Dual{\tt Overload}.x $ \\
                                         \> \> $\;\oplus\; $\\
                                         \>\> $\Dual{\tt Ok}.\Dual{\tt Id}.($\=${\tt VoteA}.({\tt Va1} + {\tt Va2})$\\
             \>                \>                  \>$+$\\
             \>                \>                  \>${\tt VoteB}.({\tt Vb1} + {\tt Vb2})$ \\
             \>                \>                  \>$+$\\
             \>                \>                  \>${\tt VoteC}.({\tt Vc1} + {\tt Vc2})) $
                                                     $)$
\end{tabbing}
}
Consider now the behaviour of a possible voter who can indefinitely try to log-in until (if ever) the login is accepted. This voter is not interested about why
a login has not been accepted, nor it is interested in getting the transaction identifier. Also it does not wish to express a vote for the vice-candidates:\\
${\sf VoterBehP2} = \rec  x.\;\Dual{\tt Login}.({\tt Wrong}.\;x + {\tt Overload}.\;x + {\tt Ok}.\Dual{\tt VoteB} ).$\\
Then we have that ${\sf VoterBehP2} \not\complyP {\sf BallotServiceBehP2}$, that is ${\sf VoterBehP2}$ is not compliant with the server ${\sf BallotServiceBehP2}$ according to  
the Padovani's orchestrated compliance. In fact the voter could keep on sending an incorrect login indefinitely, but no correct orchestrator is allowed 
to buffer an unbounded number of messages, like the ${\tt InfoW}$ ones.
As a matter of fact, the actual interaction between ${\sf VoterBehP2}$ and ${\sf BallotServiceBehP2}$ should be carried on in Padovani's setting through the use of an orchestrator like the following one:
\begin{tabbing}
${\sf BallotOrchP2} \;\ByDef \;$ \= $\rec x.\;\orchAct{{\tt Login}}{{\tt Login}}.$\=$\Dual{\tt Ok}.\Dual{\tt Id}$\= \kill

${\sf BallotOrchP2} \;\ByDef \; \rec x.\;\orchAct{{\tt Login}}{\Dual{\tt Login}}.(\;\;\;\;\orchAct{\Dual{\tt Wrong} }{{\tt Wrong}}.\orchAct{\varepsilon }{{\tt InfoW}}.x$\\
                                        \>                  \> $\;\vee\; \orchAct{\Dual{\tt Overload} }{{\tt Overload}}.~x $ \\
                                         \> \> $\;\vee\; \orchAct{\Dual{\tt Ok} }{{\tt Ok}}.\orchAct{\varepsilon }{{\tt Id}}.$\=$\orchAct{{\tt VoteB}}{\Dual{\tt VoteB}} \;)$
\end{tabbing}
\Comment{
\begin{tabbing}
${\sf BallotOrch2} =$ \= $\rec x.\;\orchAct{{\tt Login}}{{\tt Login}}.($\=$\Dual{\tt Ok}.\Dual{\tt Id}.($\= \kill

${\sft BallotOrch2} = \rec x.\;\orchAct{{\tt Login}}{\Dual{\tt Login}}.(\orchAct{\Dual{\tt Wrong} }{{\tt Wrong}}.\orchAct{\varepsilon }{{\tt InfoW}}.x$\\
                                        \>                  \> $\;\vee\; $\\
                                        \>  \> $\orchAct{\Dual{\tt Overload} }{{\tt Overload}}.~x $ \\
                                         \> \> $\;\vee\; $\\
                                         \>\> $\orchAct{\Dual{\tt Ok} }{{\tt Ok}}.\orchAct{\varepsilon }{{\tt Id}}.($\=$\orchAct{{\tt VoteB}}{\Dual{\tt VoteB}}$
                                                     $)$
\end{tabbing}
}
which should be able to buffer an unbounded number of ${\tt InfoW}$ messages corresponding to the output actions $\Dual{\tt InfoW}$ on the server side. This implies that {\sf BallotOrchP2} is 
 not of finite rank and hence it is not correct.
 
The definition of $\skipAct$-compliance allows to diregard infinitely many output actions from the server, provided that they are
not all consecutive. In particular
${\sf VoterBehP2} \complyG {\sf BallotServiceBehP2}$. So, formally we get:

\begin{proposition} 
\label{prop:counterPadovani}
Let $\complyP$ be Padovani's weak $k$-compliance restricted to session behaviours.
Then, for any $k$, we have:
\vspace{-3mm} 
$$\complyG ~\not\subseteq ~\complyP_k$$
\end{proposition}
\noindent

The inclusion does hold, instead, if we consider only {\em finite} behaviours (see Definition \ref{def:traces}):

\begin{proposition} For any pair of {\em finite} session behaviours $\rho,\sigma$, there exists a $k\geq0$
such that 
\vspace{-1mm}
$$\rho\complyG\sigma ~\implies ~\rho\complyP_k\sigma$$
\end{proposition}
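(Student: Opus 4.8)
The plan is to use finiteness to bound the buffering that any orchestrator would need, thereby converting the possibly unbounded skipping permitted by $\complyG$ into a finitely-ranked one. First I would record the consequences of finiteness: if $\trace(\rho),\trace(\sigma)\in{\cal P}(\Act^*)$ then each is a finite set of finite sequences (a finitely branching tree all of whose branches are finite is itself finite, by König's lemma), so the sub-behaviours reachable from $\rho$ and from $\sigma$ are finite in number and the client/server LTS of Definition~\ref{def:commCompl} restricted to the pairs reachable from $\rho\|\sigma$ has finitely many states and only finite computation paths. Hence every element of $\cstrace(\rho\|\sigma)$ is finite, and from $\rho\complyG\sigma$ and Definition~\ref{def:skipcompl} each such synchronization trace has the form $\vec{\xi}'\checkmark$, so the ``definitely-$\skipAct$'' clause plays no role here. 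I would then take $k$ to be the maximum, over the finitely many maximal paths of this LTS, of the number of $\skipAct$-labelled steps along the path; being a maximum of finitely many naturals, $k$ is finite, and it is bounded by the maximal length of a trace in $\trace(\sigma)$.

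Next I would build, by recursion on the finite reachable pair-tree, a single orchestrator $f$ that realizes exactly the $\complyG$-interaction. At a reachable pair $\rho'\|\sigma'$ the construction mirrors the moves of the LTS: a $\tau$-synchronization between $\rho'\lts{\alpha}$ and $\sigma'\lts{\Dual{\alpha}}$ is matched by a synchronous orchestrating action (a pair $\orchAct{\cdot}{\cdot}$ with no $\varepsilon$ component) that forwards the message immediately, without touching the buffer; a $\skipAct$-move, that is $\sigma'\lts{\Dual{a}}\sigma''$ with $\noLts{\rho'}{a}$, is matched by $\orchAct{\varepsilon}{a}$, which receives the discarded output $\Dual{a}$ into the buffer and forwards nothing to the client; the internal choices of the server are covered by joining the corresponding sub-orchestrators with the orchestrator choice $\vee$, and $\rec$ is unfolded in lock-step with the unfoldings of $\rho'$ and $\sigma'$. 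Since $\complyG$ never reorders messages and, by the first principle of Remark~\ref{rm:skip-motivation}, the client may never defer a synchronization it is able to perform, the only messages ever placed in the buffer are the discarded server outputs, and once stored they are never forwarded; thus along any run the buffer occupancy grows monotonically and equals the number of $\skipAct$'s performed so far. Consequently the rank of $f$ equals the maximal such occupancy, which is at most $k$.

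Finally I would verify that $f$ is a \emph{correct} orchestrator in the sense of \cite{Padovani10}: it never sends a message it has not previously received (forwarded messages stem only from genuine synchronizations, and buffered ones are never delivered), every branch of every server choice is matched by a $\vee$-alternative, and its rank is the finite $k$ bounded above; it then drives the client of $\rho\|\sigma$ to $\stopA$ exactly along the successful $\complyG$-interactions, whence $\rho\complyP_f\sigma$ and therefore $\rho\complyP_k\sigma$. I expect the main obstacle to be this last matching step, namely proving in detail that the orchestrated LTS run of $f$ on $\rho\|\sigma$ reaches a successful configuration precisely when the $\complyG$-interaction does, together with the exact rank accounting for any server outputs still sitting in the buffer when the client terminates (finitely many, hence enlarging $k$ by at most a bounded amount); everything else is a routine structural recursion made finite by the finiteness of the reachable pair-tree.
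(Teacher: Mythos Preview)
The paper states this proposition in the related-work section without proof; there is no argument in the text to compare against. Your proposal therefore cannot be matched to an original, but it is a reasonable and essentially correct sketch of how one would establish the claim.

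The core observations you make are the right ones: finiteness of $\rho$ and $\sigma$ (all traces finite, finite branching) forces the reachable pair-LTS to be finite by K\"onig's lemma, so every synchronization trace is finite, ends in $\checkmark$ by hypothesis, and the total number of $\skipAct$ steps is bounded by some $k$. Simulating each $\skipAct$ by an orchestrating action $\orchAct{\varepsilon}{a}$ that buffers the unwanted server output without ever forwarding it is exactly the mechanism the paper itself illustrates (the undelivered ${\tt Id}$ in the {\sf BallotOrchP} example), and since nothing is ever removed from the buffer the occupancy is monotone and bounded by $k$, giving a finite-rank orchestrator.

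Two small points you might tighten. First, your recursion is on the ``reachable pair-tree'', but the pair-LTS may contain repeated states even for finite behaviours; it is cleaner to say you build $f$ by structural recursion on the (finite) unfolding depth, or by induction on the maximal remaining trace length of $\sigma$. Second, you discuss the server's internal choices being covered by $\vee$, but the orchestrator must also be able to react to the client's internal choices (outputs $\Dual{a}_i$); these are handled symmetrically by synchronous actions $\orchAct{a_i}{\Dual{a}_i}$ combined under $\vee$ on the client side. Neither point affects the soundness of your plan.
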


In a sense, we think that the $\skipAct$-compliance relation we investigate in the present paper is
the minimal weakening of the standard notion of compliance not requiring the introduction of  orchestrators.

\section{Conclusion and future work}
\label{sect:futureWork}
\newcommand{\complyR}{\comply^{\blacktriangle}}
\newcommand{\ckpt}{_\blacktriangle\!}

In the setting of session-behaviors we have relaxed the synchronization rules by allowing output actions on the server side to be skipped by a client that cannot immediately synchronize with them. This gives rise to a weaker notion of compliance, called $\skipAct$-compliance, and consequently to a new concept of sub-behaviour among servers. We have proved that $\skipAct$-compliance is still decidable, by exhibiting a derivation system which is sound and complete w.r.t. the new compliance relation, and which is algorithmic, namely it implicitly describes an algorithm to decide $\skipAct$-compliance. Further we have shown that the duals-as-minima property is preserved in the new setting, which implies decidability of the induced sub-behaviour relation.

In the Introduction we have justified the loosening of compliance by means of examples. Another contexts 
in which discarding some actions during client/server interaction seems a desirable feature worth to be investigated is that 
of reversible computations. In particular when the client (or server) of an interaction can roll-back to a  previously encountered checkpoint (so forcing a roll-back on the server (client) side).
Then the notion of compliance should be strengthened to guarantee that client's requests keep on being satisfied even in case, for any reason, client and server perform a roll-back, as formalized and investigated in \cite{BDdL14}. It is not difficult to envisage a situation where the interaction partners could roll-back in two states that would be compliant but for the presence of an output that should have been already sent and received before the roll-back took place. It is reasonable to let the two partners be compliant, since that particular output action could be safely discarded.\\
{\bf Acknowledgements.} The authors wish to thank Mariangiola Dezani for her steady and valuable support. Our gratitude also to the anonymous referees that helped us to improve
the paper.
\vspace{-5mm}





\bibliographystyle{eptcs}

\end{document}